\newif\ifdraft
  \newcommand{\todocolor}[1]{\textcolor{red}{#1}}
  \newcommand{\todocolor}[1]{}
\newcommand{\mike}[1]{\todocolor{[[Mike: #1]]}}
\newcommand{\bits}{\ensuremath{\mathrm{bits}}\xspace}
\newcommand{\bytes}{\ensuremath{\mathrm{B}}\xspace}
\newcommand{\gigabytes}{\ensuremath{\mathrm{GB}}\xspace}
\newcommand{\gigahertz}{\ensuremath{\mathrm{GHz}}\xspace}
\newcommand{\secs}{\ensuremath{\mathrm{seconds}}\xspace}
\newcommand{\secref}[1]{\mbox{Sec.~\ref{#1}}\xspace}
\newcommand{\staticsecref}[1]{\mbox{Sec.~{#1}}}
\newcommand{\secsref}[2]{\mbox{Sec.~\ref{#1}--\ref{#2}}\xspace}
\newcommand{\figref}[1]{\mbox{Fig.~\ref{#1}}}
\newcommand{\figsref}[2]{\mbox{Figs.~\ref{#1}--\ref{#2}}}
\newcommand{\tabref}[1]{\mbox{Table~\ref{#1}}}
\newcommand{\eqnref}[1]{(\ref{#1})\xspace}
\newcommand{\eqnsref}[2]{(\ref{#1})--(\ref{#2})\xspace}
\newcommand{\staticthmref}[1]{Theorem~{#1}}
\newcommand{\staticlmaref}[1]{Lemma~{#1}}
\newcommand{\staticchapref}[1]{Chapter~{#1}}
\newtheorem{prop}{Proposition}
\newcommand{\propref}[1]{\mbox{Prop.~\ref{#1}}\xspace}
\NewDocumentCommand{\infoEnt}{g}%
{%
	\IfNoValueTF{#1}%
	{\ensuremath{\mathbb{H}}\xspace}%
	{\ensuremath{\mathbb{H}\mathopen{}\left({#1}\right)\mathclose{}}\xspace}%
}
\NewDocumentCommand{\infoEntRecursive}{o}%
{%
	\IfNoValueTF{#1}%
	{\ensuremath{H}\xspace}%
	{\ensuremath{H\mathopen{}\left({#1}\right)\mathclose{}}\xspace}%
}
\newcommand{\condlEnt}[2]{\ensuremath{\infoEnt\mathopen{}\left({#1}\mid{#2}\right)\mathclose{}}\xspace}
\NewDocumentCommand{\mutInfo}{g g}%
{%
	\IfNoValueTF{#1}%
	{\ensuremath{\mathbb{I}}\xspace}%
	{\ensuremath{\mathbb{I}\mathopen{}\left({#1};{#2}\right)\mathclose{}}\xspace}%
}
\NewDocumentCommand{\mutInfoEst}{g g g}%
{%
	\IfNoValueTF{#2}%
	{\ensuremath{\tilde{\mathbb{I}}_{#1}}\xspace}%
	{\ensuremath{\tilde{\mathbb{I}}_{#1}\mathopen{}\left({#2};{#3}\right)\mathclose{}}\xspace}%
}
\NewDocumentCommand{\mutInfoInf}{g g}%
{%
	\IfNoValueTF{#1}%
	{\ensuremath{\mathbb{I}_{\infty}}\xspace}%
	{\ensuremath{\mathbb{I}_{\infty}\mathopen{}\left({#1};{#2}\right)\mathclose{}}\xspace}%
}
\NewDocumentCommand{\func}{o}%
{%
	\IfNoValueTF{#1}%
	{\ensuremath{f}\xspace}%
	{\ensuremath{f\mathopen{}\left({#1}\right)\mathclose{}}\xspace}%
}
\NewDocumentCommand{\funcTwoDer}{o}%
{%
	\IfNoValueTF{#1}%
	{\ensuremath{f''}\xspace}%
	{\ensuremath{f''\mathopen{}\left({#1}\right)\mathclose{}}\xspace}%
}
\NewDocumentCommand{\entTerm}{o}%
{%
	\IfNoValueTF{#1}%
	{\ensuremath{h}\xspace}%
	{\ensuremath{h\mathopen{}\left({#1}\right)\mathclose{}}\xspace}%
}
\NewDocumentCommand{\entTermTwoDer}{o}%
{%
	\IfNoValueTF{#1}%
	{\ensuremath{h''}\xspace}%
	{\ensuremath{h''\mathopen{}\left({#1}\right)\mathclose{}}\xspace}%
}
\NewDocumentCommand{\floor}{o}%
{%
	\IfNoValueTF{#1}%
	{\ensuremath{\mathsf{floor}}\xspace}%
	{\ensuremath{\mathsf{floor}\mathopen{}\left({#1}\right)\mathclose{}}\xspace}%
}
\newcommand{\prob}[1]{\ensuremath{\mathbb{P}\mathopen{}\left({#1}\right)\mathclose{}}\xspace}
\newcommand{\expv}[1]{\ensuremath{\mathbb{E}\mathopen{}\left({#1}\right)\mathclose{}}\xspace}
\newcommand{\cprob}[3]{\ensuremath{\mathbb{P}\mathopen{}#1(}{#2}\ensuremath{\;#1|} \ifmmode{\;}\fi {#3}\ensuremath{#1)\mathclose{}}\xspace}
\newcommand{\bigO}[1]{\ensuremath{O\mathopen{}\left({#1}\right)\mathclose{}}\xspace}
\NewDocumentCommand{\nats}{o o}%
{%
	\IfNoValueTF{#1}%
                {\ensuremath{\mathbb{N}}\xspace}
                {\IfNoValueTF{#2}%
                        {\ensuremath{[{#1}]}\xspace}
                        {\ensuremath{[{#1},{#2}]}\xspace}
        }
}
\newcommand{\reals}{\ensuremath{\mathbb{R}}\xspace}
\newcommand{\genericNat}{\ensuremath{i}\xspace}
\newcommand{\genericNatAlt}{\ensuremath{j}\xspace}
\newcommand{\setSize}[1]{\ensuremath{\#{#1}}\xspace}
\newcommand{\genericProb}{\ensuremath{p}\xspace}
\newcommand{\boolTrue}{\ensuremath{\mathit{true}}\xspace}
\newcommand{\rvNotation}[1]{\ensuremath{\mathsf{\expandafter\MakeUppercase{#1}}}\xspace}
\newcommand{\valNotation}[1]{\ensuremath{\MakeLowercase{#1}}\xspace}
\newcommand{\setNotation}[1]{\ensuremath{\MakeUppercase{#1}}\xspace}
\newcommand{\fnNotation}[1]{\ensuremath{\expandafter\MakeUppercase{#1}}\xspace}
\newcommand{\MinIntensity}{0.0}   
\newcounter{MinNumber}
\newcounter{MaxNumber}
\newcommand{\ApplyGradientX}[1]{\cellcolor[gray]{\intensity{#1}}{\parbox{2em}{\centering \raggedleft{#1}}}}
\newcommand{\ApplyGradientY}[1]{\cellcolor[gray]{\intensity{#1}}{\parbox{1.5em}{\centering \raggedleft{#1}}}}
\newcolumntype{X}{>{\collectcell\ApplyGradientX}c<{\endcollectcell}}
\newcolumntype{Y}{>{\collectcell\ApplyGradientY}c<{\endcollectcell}}
\newcommand{\objStoreTerm}{object store\xspace}
\newcommand{\perObjPaddingTerm}{per-object padding\xspace}
\newcommand{\PerObjPaddingTerm}{Per-object padding\xspace}
\newcommand{\perReqPaddingTerm}{per-request padding\xspace}
\newcommand{\PerReqPaddingTerm}{Per-request padding\xspace}
\newcommand{\perObjAlgLong}{\underline{P}er-\underline{O}bject \underline{P}adding\xspace}
\newcommand{\perObjAlg}{POP\xspace}
\newcommand{\perReqAlgLong}{\underline{P}er-\underline{R}equest \underline{P}adding\xspace}
\newcommand{\perReqAlg}{PRP\xspace}
\newcommand{\perReqAlgInit}{\ensuremath{\textrm{\perReqAlg}_{\textrm{init}}}\xspace}
\newcommand{\perReqAlgInc}{\ensuremath{\textrm{\perReqAlg}_{\textrm{inc}}}\xspace}
\newcommand{\noDistAlgLong}{\underline{P}adding \underline{w}ith\underline{o}ut a \underline{D}istribution\xspace}
\newcommand{\noDistAlg}{PwoD\xspace}
\newcommand{\kAnonymityK}{\ensuremath{k}\xspace}
\newcommand{\lDiversityL}{\ensuremath{\ell}\xspace}
\newcommand{\kAnonymity}{\kAnonymityK-anonymity\xspace}
\newcommand{\lDiversity}{\lDiversityL-diversity\xspace}
\newlength{\figureheight}
\newcommand{\figurewidth}{\columnwidth}
\definecolor{curvecolor}{rgb}{0.129411764705882,0.380392156862745,0.549019607843137}
\newcommand{\objSize}[1]{\ensuremath{|{#1}|}\xspace}
\newcommand{\objSizeIdx}{\ensuremath{k}\xspace}
\newcommand{\objStore}[1]{\ensuremath{\mathsf{obj}_{#1}}\xspace}
\NewDocumentCommand{\objSizeOrdinal}{o}%
{%
	\IfNoValueTF{#1}%
        {\ensuremath{\psi}\xspace}%
        {\ensuremath{\psi\mathopen{}\left({#1}\right)\mathclose{}}\xspace}%
}
\newcommand{\privDataRV}{\ensuremath{\rvNotation{S}}\xspace}
\newcommand{\privDataVal}{\ensuremath{\valNotation{s}}\xspace}
\newcommand{\privDataValAlt}{\ensuremath{\valNotation{s}'}\xspace}
\newcommand{\privDataValMin}{\ensuremath{\valNotation{s}_{\mathrm{min}}}\xspace}
\newcommand{\privDataSubset}{\ensuremath{\setNotation{S}^{\ast}}\xspace}
\newcommand{\privDataDomain}{\ensuremath{\setNotation{S}}\xspace}
\newcommand{\pubDataRV}{\ensuremath{\rvNotation{X}}\xspace}
\newcommand{\pubDataVal}{\ensuremath{\valNotation{x}}\xspace}
\newcommand{\distortPubDataRV}{\ensuremath{\rvNotation{Y}}\xspace}
\newcommand{\distortPubDataAlg}[1]{\ensuremath{\lceil{#1}\rceil}\xspace}
\newcommand{\distortPubDataVal}{\ensuremath{\valNotation{y}}\xspace}
\newcommand{\distortPubDataValAlt}{\ensuremath{\valNotation{y}'}\xspace}
\newcommand{\distortPubDataSubset}[1]{\ensuremath{\setNotation{Y}({#1})}\xspace}
\newcommand{\distortionBound}{\ensuremath{d}\xspace}
\newcommand{\distortionFn}{\ensuremath{\fnNotation{D}}\xspace}
\newcommand{\padFactor}{\ensuremath{c}\xspace}
\newcommand{\loReal}{\ensuremath{z}\xspace}
\newcommand{\hiReal}{\ensuremath{z'}\xspace}
\newcommand{\xferReal}{\ensuremath{\epsilon}\xspace}
\newcommand{\rangeReals}{\ensuremath{R}\xspace}
\newcommand{\partitionBlock}[1]{\ensuremath{\setNotation{b}_{#1}}\xspace}
\newcommand{\partitionBlockProb}[1]{\ensuremath{p_{#1}}\xspace}
\newcommand{\blahutParam}{\ensuremath{\beta}\xspace}
\newcommand{\blahutCutoff}{\ensuremath{\Delta}\xspace}
\newcommand{\blahutIdx}{\ensuremath{t}\xspace}
\NewDocumentCommand{\blahutProb}{g o}%
{%
	\IfNoValueTF{#2}%
        {\ensuremath{u_{#1}}\xspace}%
        {\ensuremath{u_{#1}\mathopen{}\left({#2}\right)\mathclose{}}\xspace}%
}
\NewDocumentCommand{\blahutCProb}{g o o}%
{%
	\IfNoValueTF{#2}%
        {\ensuremath{v_{#1}}\xspace}%
        {\ensuremath{v_{#1}\mathopen{}\left({#2}, {#3}\right)\mathclose{}}\xspace}%
}
\NewDocumentCommand{\anchor}{o}%
{%
	\IfNoValueTF{#1}%
        {\ensuremath{a}\xspace}%
        {\ensuremath{a\mathopen{}\left({#1}\right)\mathclose{}}\xspace}%
}
\NewDocumentCommand{\anchorReach}{o}%
{%
	\IfNoValueTF{#1}%
        {\ensuremath{\setNotation{b}}\xspace}%
        {\ensuremath{\setNotation{b}\mathopen{}\left({#1}\right)\mathclose{}}\xspace}%
}
\NewDocumentCommand{\anchors}{o}%
{%
	\IfNoValueTF{#1}%
        {\ensuremath{\setNotation{A}}\xspace}%
        {\ensuremath{\setNotation{A}\mathopen{}\left({#1}\right)\mathclose{}}\xspace}%
}
\newcommand{\padme}{\textsc{Padm\'{e}}\xspace}
\newcommand{\padmeObjSizeExponent}{\ensuremath{E}\xspace}
\newcommand{\padmeObjSizeExponentBits}{\ensuremath{S}\xspace}
\newcommand{\padmeObjSizeMax}{\ensuremath{M}\xspace}
\newcommand{\padmeBase}{\ensuremath{b}\xspace}
\newcommand{\alpaca}{\textsc{ALPaCA}\xspace}
\newcommand{\pAlpaca}{\textsc{P-\alpaca}\xspace}
\newcommand{\dAlpaca}{\textsc{D-\alpaca}\xspace}
\newcommand{\dAlpacaBinSize}{\ensuremath{\sigma}\xspace}
\newcommand{\exampleObject}[1]{\ensuremath{\mathrm{P{#1}}}\xspace}
\newcommand{\adversaryConfidence}{\ensuremath{\tau}\xspace}
\NewDocumentCommand{\objectSet}{o o}%
{%
	\IfNoValueTF{#1}%
	{\ensuremath{O}\xspace}%
	{%
	  \IfNoValueTF{#2}%
	  {\ensuremath{O[#1]}\xspace}%
	  {\ensuremath{O[#1,\dots,#2]}\xspace}%
	}%
}
\NewDocumentCommand{\block}{o o}%
{%
	\IfNoValueTF{#1}%
	{\ensuremath{b}\xspace}%
	{%
	  \IfNoValueTF{#2}%
	  {\ensuremath{b_{#1}}\xspace}%
	  {\ensuremath{b_{\left \langle {#1},{#2} \right \rangle}}\xspace}%
	}%
}
\NewDocumentCommand{\firstDer}{o}%
{%
	\IfNoValueTF{#1}%
	{\ensuremath{f^{1}}\xspace}%
	{\ensuremath{f^{1}\mathopen{}\left({#1}\right)\mathclose{}}\xspace}%
}
\algrenewcommand\algorithmicrequire{\textbf{Precondition:}}
\algrenewcommand\algorithmicensure{\textbf{Postcondition:}}
\begin{document}

\title{Optimally Hiding Object Sizes\\with Constrained Padding}

\author{\IEEEauthorblockN{Andrew C. Reed}
\IEEEauthorblockA{University of North Carolina at Chapel Hill\\
reed@cs.unc.edu}
\and
\IEEEauthorblockN{Michael K. Reiter}
\IEEEauthorblockA{Duke University\\
michael.reiter@duke.edu}}

\IEEEoverridecommandlockouts
\makeatletter\def\@IEEEpubidpullup{6.5\baselineskip}\makeatother

\maketitle

\begin{abstract}
  Among the most challenging traffic-analysis attacks to confound are
  those leveraging the sizes of objects downloaded over the network.
  In this paper we systematically analyze this problem under realistic
  constraints regarding the padding overhead that the \objStoreTerm is
  willing to incur.  We give algorithms to compute privacy-optimal
  padding schemes---specifically that minimize the network observer's
  information gain from a downloaded object's padded size---in several
  scenarios of interest: \perObjPaddingTerm, in which the
  \objStoreTerm responds to each request for an object with the same
  padded copy; \perReqPaddingTerm, in which the \objStoreTerm pads an
  object anew each time it serves that object; and a scenario unlike
  the previous ones in that the \objStoreTerm is unable to leverage a
  known distribution over the object queries.  We provide
  constructions for privacy-optimal padding in each case, compare them
  to recent contenders in the research literature, and evaluate their
  performance on practical datasets.
\end{abstract}

\section{Introduction}
\label{sec:intro}

The transmission of objects in a way that hides the sizes of objects
transmitted from a network observer---either to hide which of many
potential objects is transmitted, or as an ingredient in hiding which
sender and receiver are communicating---is a longstanding problem in
traffic-analysis defense.  Indeed, the sizes of objects transmitted
has been shown to single-handedly enable fingerprinting websites or
webpages, even in the presence of otherwise sophisticated defenses
against this practice (e.g.,~\cite{dyer:2012:peek-a-boo}).  Despite
the utility of object sizes in traffic analysis, the expense of hiding
object sizes is such that substantial threads of research on private
communication either do not even attempt to hide object sizes (e.g.,
in low-latency anonymous messaging~\cite{serjantov:2003:passive,
  danezis:2004:mixes, levine:2004:low-latency, zhu:2004:flow} or
protocols for private video downloads,
e.g.,~\cite{zhang:2019:streaming}) or restrict attention to
fixed-sized, small messages (e.g., anonymous microblogging
systems~\cite{corrigan-gibbs:2015:riposte, abraham:2020:blinder}).

In this paper we consider a fundamental and practical instance of this
problem, in which a benevolent \objStoreTerm enables clients to
retrieve its objects.  Each client's communication with the
\objStoreTerm is encrypted for that client, but the sizes of objects
it retrieves is nevertheless revealed to a network observer.  This
network observer might also be one of the clients of the \objStoreTerm
and so can retrieve objects himself.  We further allow this observer
(and the \objStoreTerm, potentially) to know the frequency of requests
for each object.  Being benevolent, the \objStoreTerm is willing to
pad objects before sending them, so that their sizes do not directly
disclose to the network observer the objects others retrieve.
However, the benevolence of the \objStoreTerm extends only so far;
since padding objects consumes more of its bandwidth, the
\objStoreTerm is willing to pad objects only so much.  The question we
consider here is: how should the \objStoreTerm pad its objects subject
to this constraint, to best hide which object a client retrieves from
this network observer?

More specifically, for any object identifier $\privDataVal \in
\privDataDomain$, where \privDataDomain is the set of all object
identifiers, let \objStore{\privDataVal} denote the object with
identifier \privDataVal at the \objStoreTerm, and let
$\objSize{\objStore{\privDataVal}} \in \nats$ denote the size of
\objStore{\privDataVal}.  Consider random variables \privDataRV,
\pubDataRV, and \distortPubDataRV, which take on the object identifier
\privDataVal in a request, the corresponding object's actual size
\objSize{\objStore{\privDataVal}}, and the object's padded size when
returned, respectively.  \privDataRV is distributed according to a
public probability distribution, and so
\[
\prob{\pubDataRV = \pubDataVal} =
\sum_{\privDataVal \in \privDataDomain :
  \objSize{\objStore{\privDataVal}} = \pubDataVal} \prob{\privDataRV =
  \privDataVal}
\]
is also public for each size $\pubDataVal \in \nats$.  The goal of the
object store is to select a padding scheme
\distortPubDataAlg{\cdot} that pads each object
\objStore{\privDataVal} to a (possibly randomized) size
\distortPubDataAlg{\objStore{\privDataVal}} before sending it, in
which case
\[
\prob{\distortPubDataRV = \distortPubDataVal} =
\sum_{\privDataVal \in \privDataDomain}
\cprob{\big}{\distortPubDataAlg{\objStore{\privDataVal}} = \distortPubDataVal}{\privDataRV = \privDataVal}
\prob{\privDataRV = \privDataVal}
\]

We presume that objects are served in full and cannot be compressed,
and so
\begin{equation}
  \prob{\distortPubDataAlg{\objStore{\privDataVal}} <
    \objSize{\objStore{\privDataVal}}} = 0
  \label{eqn:noCompression}
\end{equation}
for all $\privDataVal \in \privDataDomain$.  Moreover, as mentioned
above, the \objStoreTerm is willing to pad each object only so much.
We capture this constraint by requiring, for a specified constant
$\padFactor \ge 1$, that
\begin{equation}
  \prob{\distortPubDataAlg{\objStore{\privDataVal}} > \padFactor \times
    \objSize{\objStore{\privDataVal}}} = 0
\label{eqn:limitExpansion}
\end{equation}
for all $\privDataVal \in \privDataDomain$.  An \objStoreTerm might
prefer \eqnref{eqn:limitExpansion} to requiring only that objects be
expanded by a factor of at most \padFactor in expectation---i.e., that
$\expv{\frac{\distortPubDataAlg{\objStore{\privDataRV}}}{\objSize{\objStore{\privDataRV}}}}
\le \padFactor$, where the expectation is taken with respect to the
distribution of \privDataRV and any random choices of
\distortPubDataAlg{\cdot}---for fairness.  Limiting padding overhead
only in expectation would still permit some objects to be expanded by
more than a factor of \padFactor, which might cause some clients to
unduly suffer if they retrieve that object more than others or do so
over a bandwidth-limited (or priced) connection.

Having adopted constraints
\eqnsref{eqn:noCompression}{eqn:limitExpansion}, the \objStoreTerm
cannot hide \textit{all} objects retrieved from the network observer.
For example, an object \objStore{\privDataVal} for which $\padFactor
\times \objSize{\objStore{\privDataValAlt}} <
\objSize{\objStore{\privDataVal}}$ or $\padFactor \times
\objSize{\objStore{\privDataVal}} <
\objSize{\objStore{\privDataValAlt}}$ for every other object
\objStore{\privDataValAlt} will be the only object padded to a value
in the range
$\nats[\objSize{\objStore{\privDataVal}}][\padFactor\times
  \objSize{\objStore{\privDataVal}}]$.  So, \objStore{\privDataVal}
will be identifiable to the network observer when returned to a
client.  Rather than give up and return to today's status quo,
however, the benevolent \objStoreTerm strives to protect client
privacy as well as it can (subject to
\eqnsref{eqn:noCompression}{eqn:limitExpansion}).  In this paper, we
presume the measure of client privacy the \objStoreTerm seeks to
minimize is the mutual information between \privDataRV and
\distortPubDataRV, also referred to as the \textit{information gain},
i.e.,
\begin{equation}
\mutInfo{\privDataRV}{\distortPubDataRV}
= \infoEnt{\privDataRV} - \condlEnt{\privDataRV}{\distortPubDataRV}
\label{eqn:mutInfo}
\end{equation}
where \infoEnt denotes entropy.  That is, the \objStoreTerm seeks to
choose a padding scheme \distortPubDataAlg{\cdot} to minimize
\mutInfo{\privDataRV}{\distortPubDataRV} subject to constraints
\eqnsref{eqn:noCompression}{eqn:limitExpansion}.

In this context, our algorithmic contributions, detailed in
\secref{sec:algs}, are as follows:
\begin{itemize}%[nosep,leftmargin=1em,labelwidth=*,align=left]
\item In a \textit{\perObjPaddingTerm} scenario in which the
  \objStoreTerm pads each object a single time and serves this single
  padded object in response to repeated requests (including possibly
  one from the network observer),
  \distortPubDataAlg{\objStore{\privDataVal}} is fixed across
  retrievals of \objStore{\privDataVal} and so
  \distortPubDataAlg{\cdot} is, in effect, a function.  In this case,
  we characterize the privacy-optimal scheme
  \distortPubDataAlg{\cdot}, i.e., that minimizes
  \mutInfo{\privDataRV}{\distortPubDataRV}, by showing that it is
  nondecreasing in the sense that if
  $\objSize{\objStore{\privDataVal}} \le
  \objSize{\objStore{\privDataValAlt}}$ then
  $\distortPubDataAlg{\objStore{\privDataVal}} \le
  \distortPubDataAlg{\objStore{\privDataValAlt}}$.  We use this
  finding to give an explicit algorithm to choose
  \distortPubDataAlg{\cdot} to minimize
  \mutInfo{\privDataRV}{\distortPubDataRV} subject to the above
  constraints.  This algorithm computes \distortPubDataAlg{\cdot}
  in \bigO{\setSize{\privDataDomain}^2} time, where
  \setSize{\privDataDomain} is the cardinality of \privDataDomain.

\item In a \textit{\perReqPaddingTerm} scenario in which the
  \objStoreTerm pads each object anew before serving it each time, we
  observe a connection between our problem and \textit{rate-distortion
    minimization} as originally considered by
  Shannon~\cite{shannon:1959:theorems}.  By expressing our problem as
  an instance of rate-distortion minimization, we can apply classic
  results to compute the privacy-optimal \distortPubDataAlg{\cdot}
  numerically.

\item Both of the above contributions require the \objStoreTerm to
  accurately predict the distribution of \privDataRV, i.e., the
  distribution of requests it will receive, in order to compute the
  optimal padding scheme \distortPubDataAlg{\cdot}.  In cases where
  the \objStoreTerm cannot do so, e.g., because the adversary can
  affect this distribution, we give an algorithm to solve for a
  padding scheme \distortPubDataAlg{\cdot} that pads objects to
  achieve a measure that upper-bounds
  \mutInfo{\privDataRV}{\distortPubDataRV} even for an adversarially
  chosen distribution.  Perhaps surprisingly, this algorithm computes
  \distortPubDataAlg{\cdot} in \bigO{\setSize{\privDataDomain}} time
  and so is the most scalable of our algorithms.
\end{itemize}

In \secref{sec:security}, we empirically evaluate these algorithms
using two real-world datasets to compare the security they provide to
recently published algorithms for similar goals, for both
\perObjPaddingTerm and \perReqPaddingTerm~\cite{cherubin:2017:alpaca,
  nikitin:2019:purbs}.  Our evaluation shows that in terms of both
information gain for the adversary and the adversary's practical
ability to detect object retrievals as being in classes of interest,
our \perReqPaddingTerm algorithm provided better security than the
\perReqPaddingTerm contender, and similarly, our \perObjPaddingTerm
algorithm dominated its contenders.  Even our algorithm that does not
leverage a distribution for \privDataRV remained competitive, while
being intrinsically robust to any mistakes in estimating that
distribution that would cripple the other algorithms.

We report on the performance of our algorithms on these datasets in
\secref{sec:performance}.  Only our algorithm for finding the
privacy-optimal \perReqPaddingTerm scheme presented scaling
challenges for large numbers of objects and large values of
\padFactor.  However, we show that once it computed a solution, this
solution could be incrementally adapted in response to object (size)
changes much faster than computing the distribution used by
\distortPubDataAlg{\cdot} from scratch.

Finally, we discuss limitations and possible extensions of our results
in \secref{sec:discussion}.  We conclude in \secref{sec:conclusion}.
All of our source code, to include our algorithm implementations,
datasets, and test code, is available on
GitHub\footnote{https://github.com/andrewreed/constrained-padding}.

\section{Related Work}
\label{sec:related}

\paragraph{Mutual information as a privacy measure}
Mutual information \mutInfo{\privDataRV}{\distortPubDataRV} between a
random variable \distortPubDataRV that \textit{will} be disclosed and
secret information \privDataRV that should \textit{not} be disclosed
has been used to measure the information leakage from \privDataRV to
\distortPubDataRV in numerous contexts for over 70 years
(e.g.,~\cite{shannon:1949:secrecy, wyner:1975:wire-tap,
  leung-yan-cheong:1978:gaussian, csiszar:1978:broadcast,
  yamamoto:1994:coding, gopala:2008:secrecy, gunduz:2008:lossless,
  sankar:2013:databases, makhdoumi:2014:privfun,
  ding:2019:submodprivfun, yin:2020:meter}).  Our contribution, we
believe, lies in leveraging mutual information specifically to
optimize the padding applied to objects to hide the object returned
over a network, subject to padding size constraints, a problem for
which heuristic solutions continue to be published
(e.g.,~\cite{cherubin:2017:alpaca, nikitin:2019:purbs}, as we will
discuss in \secref{sec:security}).  As we will show, in some cases we
can adapt known methods for minimizing mutual information in other
contexts, and in others we develop novel and very efficient algorithms
for doing so.

\paragraph{Padding to achieve other privacy measures}
Despite the longevity and pervasiveness of mutual information as a
privacy measure, it is not without its critics
(e.g.,~\cite{smith:2009:foundations, issa:2020:operational}).  Other
measures of privacy for padding security have been studied in contexts
similar to ours~\cite{liu:2012:k-indistinguishable, liu:2014:pptp},
notably adaptations of measures initially proposed for ensuring
privacy for statistical databases, namely
\kAnonymity~\cite{samarati:2001:k-anonymity, sweeney:2002:k-anonymity}
and its generalization
\lDiversity~\cite{machanavajjhala:2007:l-diversity}.  Aside from
having critics of their own (e.g.,~\cite{li:2007:t-closeness}), these
measures are incomparable to mutual information: subject to padding
constraints, minimizing mutual information does not necessarily
achieve the maximum \lDiversityL for \lDiversity, and maximizing
\lDiversityL does not necessarily minimize mutual information.  And
while the works known to us~\cite{liu:2012:k-indistinguishable,
  liu:2014:pptp} are more ambitious than ours in attempting to address
correlated retrievals over multiple flows (necessitated by their focus
on web applications), the complexity of doing so renders them unable
to provably optimize the tradeoff between privacy and overhead.
\iffalse
Simplifying their goals to a single object retrieval as in our
scenario enables the maximization of \lDiversityL subject to a padding
overhead constraint \padFactor to be encoded as a linear program that
can be solved by a general solver, but our experiments doing so
resulted solving times that were impractical \mike{Is that true?}, as
we will discuss in \secref{sec:eval}.
\fi

\paragraph{Leakage based on communication volume}
Kellaris et al.~\cite{kellaris:2016:generic} analyzed an
``outsourced'' (i.e., untrusted) \objStoreTerm that returns some
subset of its objects in response to \textit{range queries} on their
(encrypted) search keys by clients.  They evaluated basic sources of
leakage that practical defenses might permit, of which one is
\textit{communication volume}.  This form of leakage occurs when the
\objStoreTerm observes \textit{how many} objects it returns to the
client (but not which ones), as systems leveraging ORAMs
(e.g.,~\cite{goldreich:1996:orams, asharov:2019:locality,
  chakraborti:2019:range}) would typically leak.  (See also
Naveed~\cite{naveed:2015:fallacy}.)  While communication volume
leakage bears some similarity to our problem, our study differs in the
threat model (our \objStoreTerm is trusted, theirs is not), what is
sensitive (search terms in their case, returned objects in ours), and
the nature of the results (they presented attacks, whereas we present
defenses).

Generic defenses against communication-volume leakage in the model of
Kellaris et al.\ have been explored recently
(e.g.,~\cite{kamara:2018:suppression, kamara:2019:computationally}).
Lossless defenses (as we require here) provide strong privacy but
retrieve an object via multiple fixed-sized retrievals---of total size
larger than the original object, and so themselves padded---over
multiple rounds of interaction.  In contrast, our design does not
require multiple rounds or otherwise alter the communication pattern
of object retrievals (aside from padding them), and focuses on
limiting bandwidth overhead to a maximum multiplicative overhead per
object while achieving the best privacy that limit allows against a
network observer.

\paragraph{Leakage based on access patterns}
The second basic source of leakage analyzed by Kellaris et
al.~\cite{kellaris:2016:generic} is \textit{access patterns}, in which
the \objStoreTerm observes \textit{which} objects it returns to the
client, as would be typical of systems based on searchable symmetric
and structured encryption (e.g.,~\cite{chang:2005:remote,
  chase:2010:structured, cash:2013:boolean, kamara:2018:sql}) or on
deterministic and order-preserving encryption
(e.g.,~\cite{popa:2011:cryptdb, arasu:2013:cipherbase}).  Various
other works have studied access-pattern leakage and its detrimental
effects on privacy against an untrusted \objStoreTerm
(e.g.,~\cite{islam:2012:access, cash:2015:leakage,
  grubbs:2019:learning, kornaropoulos:2019:neighbor}).  Defenses
against this risk tend to incorporate fake queries
(e.g.,~\cite{pang:2012:obfuscating, mavroforakis:2015:modular,
  grubbs:2020:pancake}), again which we eschew here due to their
overheads, or ORAMs, whose overheads are even worse
(e.g.,~\cite{chang:2016:evaluation}).  Still, most defenses against
access-pattern leakage typically assume all objects are the same size
(e.g.,~\cite{grubbs:2020:pancake}), so that object lengths do not leak
information.  It is exactly this assumption that we seek to relax.

\paragraph{Webpage fingerprinting}
The context within which traffic analysis has been most often
considered recently is webpage fingerprinting.  In this context, a
network observer seeks to determine which webpage (or which website) a
user is accessing based on features visible to the observer.  The
variety of features that the observer might leverage is
vast~\cite{yan:2018:feature}, but it has been shown that communication
volume is particularly powerful for fingerprinting
webpages~\cite{dyer:2012:peek-a-boo}.  Webpage fingerprinting has been
attempted within, among others, HTTPS
traffic~\cite{cheng:1998:traffic, danezis:2003:traffic,
  miller:2014:clinic, gonzalez:2016:profiling, alan:2019:diversity};
encrypted web-proxy traffic~\cite{sun:2002:statistical,
  hintz:2002:fingerprinting}; SSH proxy
tunnels~\cite{bissias:2005:privacy, liberatore:2006:inferring};
netflow records~\cite{coull:2007:privacy, yen:2009:browser}; packet
headers~\cite{macia-fernandez:2010:isp}; and Tor
traffic~\cite{herrmann:2009:website, panchenko:2011:website,
  cai:2012:touching, wang:2013:improved, juarez:2014:critical,
  wang:2014:effective, hayes:2016:k-fingerprinting,
  cherubin:2017:alpaca}.  Many (though not all) proposed defenses
against webpage fingerprinting exploit protocol features in TCP and/or
HTTP (e.g.,~\cite{sun:2002:statistical, luo:2011:httpos}).  While
heuristically padding web objects has been considered
(e.g.,~\cite{sun:2002:statistical, cherubin:2017:alpaca}), we know of
no work in this context or others that shows how to pad objects so as
to maximize privacy subject to a bandwidth overhead constraint.

\section{Algorithms}
\label{sec:algs}

In this section we develop algorithms for calculating the padding
scheme \distortPubDataAlg{\cdot} that optimally achieves privacy
subject to padding overhead constraints
\eqnsref{eqn:noCompression}{eqn:limitExpansion}.  We address multiple
scenarios: ``\perObjPaddingTerm,'' in which objects are padded once
and then provided in response to requests distributed according to the
known distribution on \privDataRV (\secref{sec:algs:per-object});
``\perReqPaddingTerm,'' in which objects can be padded anew in
response to each request, with each request again distributed
according to the known distribution on \privDataRV
(\secref{sec:algs:per-request}); and a third scenario in which the
distribution on \privDataRV is unknown to the \objStoreTerm
(\secref{sec:algs:unknown-dist}).  In all cases, our target is to
minimize the mutual information
\mutInfo{\privDataRV}{\distortPubDataRV} of \privDataRV and the padded
object sizes \distortPubDataRV revealed to the attacker.  Once
\distortPubDataAlg{\cdot} has been calculated, each invocation
\distortPubDataAlg{\objStore{\privDataVal}} involves simply sampling
from the distribution for \distortPubDataRV conditioned on the event
$\privDataRV = \privDataVal$ and then padding \objStore{\privDataVal}
to the sampled size, and so each invocation is very efficient.  As
such, the primary cost we focus on in this paper is the cost of
calculating the distribution of \distortPubDataRV conditioned on
\privDataRV.

\begin{table*}[t]
  \caption{Selected objects used in algorithm illustrations in \secref{sec:algs}.}
  \label{tbl:objects}
  \begin{center}
    \begin{tabular}{clrr}
      \toprule
      Label & \multicolumn{1}{c}{URL (accessed Apr 25, 2021)} & \multicolumn{1}{c}{Size (\bytes)} & \multicolumn{1}{c}{\parbox{4.5em}{\centering Downloads per day}} \\
      \midrule
      \exampleObject{0} & \url{https://images.unsplash.com/photo-1572095426476-808d659b4ea3} & 2493855 & 2.53\hspace{1.25em} \\
      \exampleObject{1} & \url{https://images.unsplash.com/reserve/qstJZUtQ4uAjijbpLzbT_LO234824.JPG} & 3833489 & 27.92\hspace{1.25em} \\
      \exampleObject{2} & \url{https://images.unsplash.com/photo-1583582829797-b2990fb9946b} & 7929946 & 5.41\hspace{1.25em} \\
      \exampleObject{3} & \url{https://images.unsplash.com/photo-1591672524177-261a7744a2b6} & 13322074 & 12.41\hspace{1.25em} \\
      \exampleObject{4} & \url{https://images.unsplash.com/photo-1579832888877-74d7a790df36} & 13589747 & 1.09\hspace{1.25em} \\
      \exampleObject{5} & \url{https://images.unsplash.com/photo-1558136015-7002a0f5e58d} & 16235142 & 5.54\hspace{1.25em}  \\
      \exampleObject{6} & \url{https://images.unsplash.com/photo-1586030307451-dfc64907aaa5} & 16719886 & 10.65\hspace{1.25em} \\
      \exampleObject{7} & \url{https://images.unsplash.com/photo-1558729923-720bbb76a430} & 19437984 & 5.07\hspace{1.25em} \\
      \exampleObject{8} & \url{https://images.unsplash.com/photo-1528233090455-e245a0c50575} & 25905442 & 2.27\hspace{1.25em} \\
      \exampleObject{9} & \url{https://images.unsplash.com/photo-1559422721-1ed9b8d28236} & 34389677 & 4.23\hspace{1.25em} \\
      \bottomrule
    \end{tabular}
  \end{center}
\end{table*}

For each algorithm we provide, we illustrate the padding scheme 
\distortPubDataAlg{\cdot} it produces for objects selected from one 
of the datasets we utilize in our evaluations in 
\secsref{sec:security}{sec:performance}. We defer
detailed discussion of these datasets to that section, but the objects
selected for illustration in this section are shown in
\tabref{tbl:objects}.  We selected these objects to illustrate the
differences among algorithms effectively; we do not claim that the
selected objects are representative of the dataset from which we drew
them.  Central to two of our algorithms is knowing the
frequency with which each object is requested, so that we can estimate
the distribution of \privDataRV.  For the objects listed in
\tabref{tbl:objects}, this information is provided in the ``Downloads
per day'' column.

We reiterate that our threat model permits an attacker to observe the
sizes of objects returned in response to others' requests, and to
query the \objStoreTerm itself to observe padded objects.  However, we
assume that these are the only features available to the attacker.  In
particular, the sizes of all requests (as observable on the network)
are the same, and neither others' requests nor the contents of the
objects returned to these requests are visible to the attacker.  We
further assume that the \textit{contents} of objects returned to
others' requests have no observable effect on network-level features
available to the attacker; i.e., the return of two different objects,
if padded to the same size, will be indistinguishable to the attacker.

\subsection{\xcapitalisewords{\PerObjPaddingTerm}}
\label{sec:algs:per-object}

The case of \perObjPaddingTerm differs from \perReqPaddingTerm in that
\distortPubDataAlg{\objStore{\privDataVal}} is invoked only once per
identifier \privDataVal.  All queries for \privDataVal then return
this once-padded object.  As such, we can view
\distortPubDataAlg{\cdot} as a deterministic function in this case.
\PerObjPaddingTerm is appropriate when the expense of padding anew for
each query is deemed too costly, or if objects will be diffused via
content-distribution networks (CDNs) outside the \objStoreTerm's
control.

A classic result
(e.g.,~\cite[\staticthmref{2.4.1}]{cover:2006:infotheory})
regarding mutual information is that in addition to
\eqnref{eqn:mutInfo},
\begin{equation}
\mutInfo{\privDataRV}{\distortPubDataRV}
= \infoEnt{\distortPubDataRV} - \condlEnt{\distortPubDataRV}{\privDataRV}
\label{eqn:mutInfoSymmetric}
\end{equation}
As such, when $\distortPubDataRV =
\distortPubDataAlg{\objStore{\privDataRV}}$ is a deterministic
function of \privDataRV, as in this case, then
$\condlEnt{\distortPubDataRV}{\privDataRV} = 0$ and so
$\mutInfo{\privDataRV}{\distortPubDataRV} =
\infoEnt{\distortPubDataRV}$.  Therefore, to minimize
\mutInfo{\privDataRV}{\distortPubDataRV} it suffices to minimize
\infoEnt{\distortPubDataRV}.

In the remainder of this section, we develop an algorithm for
computing the optimal function \distortPubDataAlg{\cdot} for
\perObjPaddingTerm.  We first prove an important property about the
optimal \distortPubDataAlg{\cdot} in
\secref{sec:algs:per-object:nondecreasing} and then provide an algorithm to
compute the optimal \distortPubDataAlg{\cdot} efficiently in
\secref{sec:algs:per-object:algorithm}.

\subsubsection{The Privacy-Optimal \distortPubDataAlg{\cdot} is Nondecreasing}
\label{sec:algs:per-object:nondecreasing}

In this section we prove that any privacy-optimal scheme
\distortPubDataAlg{\cdot} for \perObjPaddingTerm is
\textit{nondecreasing} in object size, in the sense that
$\objSize{\objStore{\privDataVal}} <
\objSize{\objStore{\privDataValAlt}} \Rightarrow
\distortPubDataAlg{\objStore{\privDataVal}} \le
\distortPubDataAlg{\objStore{\privDataValAlt}}$.  To do so, consider
any function \distortPubDataAlg{\cdot} satisfying padding constraints
\eqnsref{eqn:noCompression}{eqn:limitExpansion} that is \textit{not}
nondecreasing, i.e., for which there are objects
\objStore{\privDataVal} and \objStore{\privDataValAlt} with
$\objSize{\objStore{\privDataVal}} <
\objSize{\objStore{\privDataValAlt}}$ but for which
$\distortPubDataAlg{\objStore{\privDataVal}} >
\distortPubDataAlg{\objStore{\privDataValAlt}}$.  Since
$\objSize{\objStore{\privDataVal}} <
\objSize{\objStore{\privDataValAlt}}$, increasing
\distortPubDataAlg{\objStore{\privDataValAlt}} to
\distortPubDataAlg{\objStore{\privDataVal}} will not violate our
padding constraints (in particular, \eqnref{eqn:limitExpansion}).
Similarly, decreasing \distortPubDataAlg{\objStore{\privDataVal}} to
\distortPubDataAlg{\objStore{\privDataValAlt}} will not violate our
constraints (in particular, \eqnref{eqn:noCompression}).  Below we
show that one of these alternatives will decrease
\infoEnt{\distortPubDataRV} and, therefore,
\mutInfo{\privDataRV}{\distortPubDataRV}, showing that
\distortPubDataAlg{\cdot} cannot be privacy-optimal.  To do so, we use
the following proposition.

\begin{prop}
\label{prop:transfer}
Let \func be a function that is defined on the interval 
$\rangeReals \subseteq \reals$, and that has a negative second 
derivative. For all $\loReal, \hiReal \in \rangeReals$ such that 
$\loReal \le \hiReal$ and for any $\xferReal > 0$ such that 
$\loReal - \xferReal \in \rangeReals$ and $\hiReal +
\xferReal \in \rangeReals$:
\[
\func[\loReal] + \func[\hiReal] > \func[\loReal-\xferReal] +
\func[\hiReal+\xferReal]
\]
\end{prop}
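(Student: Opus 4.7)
The plan is to view the claim as a quantitative manifestation of strict concavity: because \funcTwoDer is negative throughout \rangeReals, the first derivative of \func is strictly decreasing on \rangeReals. Intuitively, the inequality then says that replacing the pair $(\loReal, \hiReal)$ with the more widely separated pair $(\loReal-\xferReal,\, \hiReal+\xferReal)$---i.e.\ spreading the two points apart by $\xferReal$ on each side while keeping their midpoint unchanged---strictly decreases the sum $\func[\loReal]+\func[\hiReal]$.

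To execute this, I would first rewrite the desired inequality in the equivalent ``increment'' form
\[
\func[\loReal] - \func[\loReal-\xferReal] \;>\; \func[\hiReal+\xferReal] - \func[\hiReal],
\]
which compares two increments of \func, each over an interval of length $\xferReal$. Because \rangeReals is an interval containing both $[\loReal-\xferReal,\loReal]$ and $[\hiReal,\hiReal+\xferReal]$ (by the hypotheses that $\loReal-\xferReal \in \rangeReals$ and $\hiReal+\xferReal \in \rangeReals$), the mean value theorem applies on each side and yields points $c_1 \in (\loReal-\xferReal,\, \loReal)$ and $c_2 \in (\hiReal,\, \hiReal+\xferReal)$ with $\func[\loReal] - \func[\loReal-\xferReal] = \xferReal \cdot f'(c_1)$ and $\func[\hiReal+\xferReal] - \func[\hiReal] = \xferReal \cdot f'(c_2)$.

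The main step is then the observation that $c_1 < c_2$ strictly: the chain $c_1 < \loReal \le \hiReal < c_2$ holds, and its first and last inequalities are strict (from the open intervals produced by MVT) regardless of whether $\loReal < \hiReal$ or $\loReal = \hiReal$. Since \funcTwoDer is strictly negative on \rangeReals, $f'$ is strictly decreasing there, so $f'(c_1) > f'(c_2)$. Multiplying through by $\xferReal>0$ yields the increment inequality, which is equivalent to the claim.

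I do not anticipate any real obstacle; the only bookkeeping is confirming that the intermediate points on which MVT and the sign hypothesis on \funcTwoDer are invoked lie in \rangeReals, which follows from \rangeReals being the interval on which \func is defined, together with the assumed membership of $\loReal-\xferReal$ and $\hiReal+\xferReal$ in \rangeReals (so that the intervals $[\loReal-\xferReal,\loReal]$ and $[\hiReal,\hiReal+\xferReal]$ are contained in \rangeReals as well).
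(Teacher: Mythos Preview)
Your proposal is correct and takes essentially the same approach as the paper: both rewrite the claim as a comparison of two increments of \func over intervals of length $\xferReal$, and both derive the inequality from the fact that $\funcTwoDer<0$ forces $f'$ to be strictly decreasing. The paper states the difference-quotient inequality directly, while you make the justification explicit via the mean value theorem; this is the natural way to fill in the details of the paper's one-line argument.
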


\begin{proof}
  Since a negative second derivative implies a decreasing first derivative,
  \[
  \frac{\func[\loReal] - \func[\loReal-\xferReal]}{\loReal - (\loReal - \xferReal)} > \frac{\func[\hiReal+\xferReal] - \func[\hiReal]}{(\hiReal+\xferReal) - \hiReal} 
  \]
  and the result follows by rearranging terms.
\end{proof}

To use \propref{prop:transfer}, consider a partition
$\{\partitionBlock{\distortPubDataVal}\}_{\distortPubDataVal \in
  \nats}$ of the objects based on their padded sizes, i.e., so that
$\partitionBlock{\distortPubDataVal} = \{\privDataVal \in
\privDataDomain : \distortPubDataAlg{\objStore{\privDataVal}} =
\distortPubDataVal\}$.  If we let
$\partitionBlockProb{\distortPubDataVal} = \sum_{\privDataVal \in
  \partitionBlock{\distortPubDataVal}} \prob{\privDataRV =
  \privDataVal}$ and $\entTerm[\genericProb] = -\genericProb \log_2
\genericProb$ for $\genericProb \in [0,1]$, then
\begin{equation}
\infoEnt{\distortPubDataRV} = \sum_{\distortPubDataVal \in \nats}
\entTerm[\partitionBlockProb{\distortPubDataVal}]
\label{eqn:infoEnt}
\end{equation}
Note that the second derivative of \entTerm[\genericProb] is
$\entTermTwoDer[\genericProb] = -\frac{1}{\genericProb \ln (2)}$,
which is negative for $\genericProb \in [0,1]$.

Now, return to considering a function \distortPubDataAlg{\cdot} that
is \textit{not} nondecreasing, i.e., there are objects
\objStore{\privDataVal} and \objStore{\privDataValAlt} with
$\objSize{\objStore{\privDataVal}} <
\objSize{\objStore{\privDataValAlt}}$ but $\distortPubDataVal >
\distortPubDataValAlt$ where $\distortPubDataVal =
\distortPubDataAlg{\objStore{\privDataVal}}$ and
$\distortPubDataValAlt =
\distortPubDataAlg{\objStore{\privDataValAlt}}$.  If
$\partitionBlockProb{\distortPubDataVal} \le
\partitionBlockProb{\distortPubDataValAlt}$, then setting $\xferReal =
\prob{\privDataRV = \privDataVal}$ and applying
\propref{prop:transfer} yields
$\entTerm[\partitionBlockProb{\distortPubDataVal}] +
\entTerm[\partitionBlockProb{\distortPubDataValAlt}] >
\entTerm[\partitionBlockProb{\distortPubDataVal} - \xferReal] +
\entTerm[\partitionBlockProb{\distortPubDataValAlt} + \xferReal]$.  In
other words, decreasing \distortPubDataAlg{\objStore{\privDataVal}}
from \distortPubDataVal to \distortPubDataValAlt reduces
\infoEnt{\distortPubDataRV}.  Otherwise (i.e.,
$\partitionBlockProb{\distortPubDataVal} >
\partitionBlockProb{\distortPubDataValAlt}$), setting $\xferReal =
\prob{\privDataRV = \privDataValAlt}$ and applying
\propref{prop:transfer} shows that increasing
\distortPubDataAlg{\objStore{\privDataValAlt}} from
\distortPubDataValAlt to \distortPubDataVal reduces
\infoEnt{\distortPubDataRV}.  Either case reveals that the initial
function \distortPubDataAlg{\cdot}, which is not nondecreasing, cannot
be privacy-optimal, either, and so any privacy-optimal function
\distortPubDataAlg{\cdot} must be nondecreasing.

\subsubsection{Finding the Privacy-Optimal \distortPubDataAlg{\cdot}}
\label{sec:algs:per-object:algorithm}

That the privacy-optimal \distortPubDataAlg{\cdot} must be
nondecreasing allows us to leverage dynamic programming to compute the
solution.  In general, to leverage dynamic programming, it is
necessary to express our optimization problem recursively, i.e., so
that the optimal solution is computed by combining optimal solutions
to overlapping subproblems
(e.g.,~\cite[\staticchapref{6}{}]{dasgupta:2006:algorithms}).

Let $\nats[\genericNat][\genericNatAlt] = \{\genericNat, \ldots,
\genericNatAlt\}$, $\nats[\genericNat] = \nats[1][\genericNat]$, and
\setSize{\privDataDomain} be the cardinality of set \privDataDomain.
Because the privacy-optimal \distortPubDataAlg{\cdot} is nondecreasing
as a function of object size, there is a total order $\objSizeOrdinal:
\nats[\setSize{\privDataDomain}] \rightarrow \privDataDomain$ of
\privDataDomain by object size---i.e., a bijection satisfying
$\objSize{\objStore{\objSizeOrdinal[\objSizeIdx]}} \le
\objSize{\objStore{\objSizeOrdinal[\objSizeIdx+1]}}$ for $\objSizeIdx
\in \nats[\setSize{\privDataDomain}-1]$---such that each block
\partitionBlock{\distortPubDataVal} of the partition induced by
\distortPubDataAlg{\cdot} is of the form
$\partitionBlock{\distortPubDataVal} =
\{\objSizeOrdinal[\objSizeIdx]\}_{\objSizeIdx \in
  \nats[\genericNat][\genericNatAlt]}$ where
$\objSize{\objStore{\objSizeOrdinal[\genericNatAlt]}} \le
\distortPubDataVal \le \padFactor \times
\objSize{\objStore{\objSizeOrdinal[\genericNat]}}$.  We equivalently
denote this block by
\partitionBlock{\nats[\genericNat][\genericNatAlt]}, and analogously
denote $\partitionBlockProb{\nats[\genericNat][\genericNatAlt]} =
\sum_{\privDataVal \in
  \partitionBlock{\nats[\genericNat][\genericNatAlt]}}
\prob{\privDataRV = \privDataVal}$.

Now consider the function \infoEntRecursive defined recursively as
$\infoEntRecursive[0] = 0$ and, for $0 < \genericNatAlt \le
\setSize{\privDataDomain}$,
\begin{equation}
\infoEntRecursive[\genericNatAlt] = \min_{\genericNat \le
  \genericNatAlt : \objSize{\objStore{\objSizeOrdinal[\genericNatAlt]}}
  \le \padFactor \times \objSize{\objStore{\objSizeOrdinal[\genericNat]}}}
\left(\infoEntRecursive[\genericNat-1] +
\entTerm[\partitionBlockProb{\nats[\genericNat][\genericNatAlt]}]\right)
\label{eqn:recursion}
\end{equation}
Then, computing \infoEntRecursive[\setSize{\privDataDomain}] computes
the privacy-optimal padding scheme \distortPubDataAlg{\cdot} for
the \perObjPaddingTerm case: for the value of \genericNat minimizing
the right hand side of \eqnref{eqn:recursion}, we set
$\distortPubDataAlg{\objStore{\objSizeOrdinal[\objSizeIdx]}} =
\objSize{\objStore{\objSizeOrdinal[\genericNatAlt]}}$ for all
$\objSizeIdx \in \nats[\genericNat][\genericNatAlt]$.  (Note that no
other objects will be padded to this size, as otherwise
\eqnref{eqn:recursion} could be minimized further, as shown in
\secref{sec:algs:per-object:nondecreasing}.) The recursion 
\eqnref{eqn:recursion} exhibits the properties needed for dynamic 
programming to be effective because \infoEntRecursive[\genericNat] 
can be leveraged in the calculation of \infoEntRecursive[\genericNatAlt] 
for every $\genericNatAlt > \genericNat$.  That is,
\infoEntRecursive[\genericNat] can be computed only once and saved to
be looked up when needed.  As a consequence, dynamic programming
(e.g.,~\cite[\staticchapref{6}{}]{dasgupta:2006:algorithms}) yields an
algorithm that runs in time \bigO{(\setSize{\privDataDomain})^2}.  In
the following sections, we refer to this algorithm as \perObjAlgLong
(\perObjAlg).

\begin{figure}[t]
  \setcounter{MinNumber}{0}%
  \setcounter{MaxNumber}{1}%
  \iffalse
  \begin{subfigure}[t]{\columnwidth}
    \centering
    %\hspace{-.25em}
    \input{figures/algorithm-demos/pop/c_1.tex}
    \caption{$\padFactor = 1$}
    \label{fig:pop-demo:padFactor_1}
  \end{subfigure}
  \fi
  
  \begin{subfigure}[t]{\columnwidth}
    \centering
    %\hspace{-.25em}
    {\scriptsize
\setlength{\tabcolsep}{.125em}
\begin{tabular}{@{}cc|@{\hspace{.16667em}}|@{\hspace{1.5pt}}*{10}{Y}@{\hspace{1.5pt}}}
& & \multicolumn{10}{c}{\distortPubDataVal} \\
& \privDataVal ~~
& \multicolumn{1}{c}{\tiny 2493855}
& \multicolumn{1}{c}{\tiny 3833489}
& \multicolumn{1}{c}{\tiny 7929946}
& \multicolumn{1}{c}{\tiny 13322074}
& \multicolumn{1}{c}{\tiny 13589747}
& \multicolumn{1}{c}{\tiny 16235142}
& \multicolumn{1}{c}{\tiny 16719886}
& \multicolumn{1}{c}{\tiny 19437984}
& \multicolumn{1}{c}{\tiny 25905442}
& \multicolumn{1}{c}{\tiny 34389677}\\
\cline{2-12} \\[-2.35ex]
& \exampleObject{0} & 0.00 & 1.00 & 0.00 & 0.00 & 0.00 & 0.00 & 0.00 & 0.00 & 0.00 & 0.00 \\
& \exampleObject{1} & 0.00 & 1.00 & 0.00 & 0.00 & 0.00 & 0.00 & 0.00 & 0.00 & 0.00 & 0.00 \\
& \exampleObject{2} & 0.00 & 0.00 & 1.00 & 0.00 & 0.00 & 0.00 & 0.00 & 0.00 & 0.00 & 0.00 \\
& \exampleObject{3} & 0.00 & 0.00 & 0.00 & 0.00 & 0.00 & 0.00 & 0.00 & 0.00 & 1.00 & 0.00 \\
& \exampleObject{4} & 0.00 & 0.00 & 0.00 & 0.00 & 0.00 & 0.00 & 0.00 & 0.00 & 1.00 & 0.00 \\
& \exampleObject{5} & 0.00 & 0.00 & 0.00 & 0.00 & 0.00 & 0.00 & 0.00 & 0.00 & 1.00 & 0.00 \\
& \exampleObject{6} & 0.00 & 0.00 & 0.00 & 0.00 & 0.00 & 0.00 & 0.00 & 0.00 & 1.00 & 0.00 \\
& \exampleObject{7} & 0.00 & 0.00 & 0.00 & 0.00 & 0.00 & 0.00 & 0.00 & 0.00 & 1.00 & 0.00 \\
& \exampleObject{8} & 0.00 & 0.00 & 0.00 & 0.00 & 0.00 & 0.00 & 0.00 & 0.00 & 1.00 & 0.00 \\
& \exampleObject{9} & 0.00 & 0.00 & 0.00 & 0.00 & 0.00 & 0.00 & 0.00 & 0.00 & 0.00 & 1.00 \\
\end{tabular}
}

    \caption{$\padFactor = 2$}
    \label{fig:pop-demo:padFactor_2}
  \end{subfigure}
	
  \begin{subfigure}[t]{\columnwidth}
    \centering
    %\hspace{-.25em}
    {\scriptsize
\setlength{\tabcolsep}{.125em}
\begin{tabular}{@{}cc|@{\hspace{.16667em}}|@{\hspace{1.5pt}}*{10}{Y}@{\hspace{1.5pt}}}
& & \multicolumn{10}{c}{\distortPubDataVal} \\
& \privDataVal ~~
& \multicolumn{1}{c}{\tiny 2493855}
& \multicolumn{1}{c}{\tiny 3833489}
& \multicolumn{1}{c}{\tiny 7929946}
& \multicolumn{1}{c}{\tiny 13322074}
& \multicolumn{1}{c}{\tiny 13589747}
& \multicolumn{1}{c}{\tiny 16235142}
& \multicolumn{1}{c}{\tiny 16719886}
& \multicolumn{1}{c}{\tiny 19437984}
& \multicolumn{1}{c}{\tiny 25905442}
& \multicolumn{1}{c}{\tiny 34389677}\\
\cline{2-12} \\[-2.35ex]
& \exampleObject{0} & 1.00 & 0.00 & 0.00 & 0.00 & 0.00 & 0.00 & 0.00 & 0.00 & 0.00 & 0.00 \\
& \exampleObject{1} & 0.00 & 0.00 & 1.00 & 0.00 & 0.00 & 0.00 & 0.00 & 0.00 & 0.00 & 0.00 \\
& \exampleObject{2} & 0.00 & 0.00 & 1.00 & 0.00 & 0.00 & 0.00 & 0.00 & 0.00 & 0.00 & 0.00 \\
& \exampleObject{3} & 0.00 & 0.00 & 0.00 & 0.00 & 0.00 & 0.00 & 0.00 & 0.00 & 0.00 & 1.00 \\
& \exampleObject{4} & 0.00 & 0.00 & 0.00 & 0.00 & 0.00 & 0.00 & 0.00 & 0.00 & 0.00 & 1.00 \\
& \exampleObject{5} & 0.00 & 0.00 & 0.00 & 0.00 & 0.00 & 0.00 & 0.00 & 0.00 & 0.00 & 1.00 \\
& \exampleObject{6} & 0.00 & 0.00 & 0.00 & 0.00 & 0.00 & 0.00 & 0.00 & 0.00 & 0.00 & 1.00 \\
& \exampleObject{7} & 0.00 & 0.00 & 0.00 & 0.00 & 0.00 & 0.00 & 0.00 & 0.00 & 0.00 & 1.00 \\
& \exampleObject{8} & 0.00 & 0.00 & 0.00 & 0.00 & 0.00 & 0.00 & 0.00 & 0.00 & 0.00 & 1.00 \\
& \exampleObject{9} & 0.00 & 0.00 & 0.00 & 0.00 & 0.00 & 0.00 & 0.00 & 0.00 & 0.00 & 1.00 \\
\end{tabular}
}

    \caption{$\padFactor = 3$}
    \label{fig:pop-demo:padFactor_3}
  \end{subfigure}
  \vspace*{-1.0em}
  \caption{\cprob{\big}{\distortPubDataRV = \distortPubDataVal}{\privDataRV = \privDataVal} produced by \perObjAlg (\secref{sec:algs:per-object}).}
  \label{fig:pop-demo}
\end{figure}

When applied to the objects listed in \tabref{tbl:objects}, \perObjAlg
produces the conditional distributions shown in \figref{fig:pop-demo}.
\figref{fig:pop-demo:padFactor_2} shows the case $\padFactor = 2$, and
\figref{fig:pop-demo:padFactor_3} shows the case $\padFactor = 3$.
(We believe these values for \padFactor are larger than would
typically be tolerated in practice, but we use large values here to
illustrate the effects of the algorithm in light of the small number,
but diverse sizes, of the objects in \tabref{tbl:objects}.)  Because
\perObjAlg produces a deterministic padding scheme
\distortPubDataAlg{\cdot}, each row includes only one nonzero entry.
It is perhaps most insightful to consider how these distributions
change as \padFactor moves from $\padFactor = 2$ to $\padFactor = 3$.
For example, \perObjAlg prescribes that \exampleObject{0} be padded to
the size of \exampleObject{1} when $\padFactor = 2$, leaving
\exampleObject{2} isolated.  In contrast, when $\padFactor = 3$,
\perObjAlg instead prescribes padding \exampleObject{1} to the size of
\exampleObject{2}, leaving \exampleObject{0} isolated.  It is simple
to confirm that under the padding constraints
\eqnsref{eqn:noCompression}{eqn:limitExpansion}, at least one of
\exampleObject{0}, \exampleObject{1}, and \exampleObject{2} must be
left isolated by a deterministic padding scheme when $\padFactor
\le 3$, but when allowed, \perObjAlg prefers to isolate
\exampleObject{0} because it is requested less often (see
\tabref{tbl:objects}).

\subsection{\xcapitalisewords{\PerReqPaddingTerm}}
\label{sec:algs:per-request}

In the \perReqPaddingTerm scenario, the padding scheme
\distortPubDataAlg{\cdot} can be calculated as a special case of
\textit{rate-distortion minimization} proposed by
Shannon~\cite{shannon:1959:theorems} (see
also~\cite[\staticsecref{IV}]{berger:1998:lossy}).  Specifically,
using our terminology, rate-distortion minimization solves for a
scheme \distortPubDataAlg{\cdot} minimizing
\mutInfo{\privDataRV}{\distortPubDataRV} subject to the constraint
\[
\expv{\distortionFn(\privDataRV, \distortPubDataAlg{\objStore{\privDataRV}})}
\le \distortionBound
\]
where $\distortionFn: \privDataDomain \times \nats \rightarrow [0,
  \infty]$ is a \textit{distortion function}, $\distortionBound \in
\reals$ is a positive constant, and the expectation is computed
relative to the distribution of \privDataRV and random choices made by
\distortPubDataAlg{\cdot}.  Written explicitly, this expected value is
\[
\expv{\distortionFn(\privDataRV, \distortPubDataAlg{\objStore{\privDataRV}})}
= \sum_{\privDataVal \in \privDataDomain} \sum_{\distortPubDataVal \in \nats} \prob{\privDataRV = \privDataVal} \prob{\distortPubDataAlg{\objStore{\privDataVal}} = \distortPubDataVal} \distortionFn(\privDataVal, \distortPubDataVal)
\]
Specifying $\distortionFn(\privDataVal, \distortPubDataVal) = \infty$
for any \privDataVal such that $\prob{\privDataRV = \privDataVal} > 0$
implies that a solution for rate-distortion minimization, if it
exists, must set $\prob{\distortPubDataAlg{\objStore{\privDataVal}} =
  \distortPubDataVal} = 0$.  As such, specifying
$\distortionFn(\privDataVal, \distortPubDataVal) = \infty$ for any
$\distortPubDataVal < \objSize{\objStore{\privDataVal}}$ and any
$\distortPubDataVal > \padFactor \times
\objSize{\objStore{\privDataVal}}$ suffices to enforce
\eqnsref{eqn:noCompression}{eqn:limitExpansion}.  Additionally
specifying $\distortionFn(\privDataVal, \distortPubDataVal) = 0$ for
$\objSize{\objStore{\privDataVal}} \le \distortPubDataVal \le
\padFactor \times \objSize{\objStore{\privDataVal}}$ then provides
maximum flexibility in choosing \distortPubDataAlg{\cdot} to minimize
\mutInfo{\privDataRV}{\distortPubDataRV}.

Having reduced our problem to rate-distortion minimization, we can
employ a known algorithm for that problem to solve ours.  Typically
this problem must be solved numerically because the optimization
problem is nonlinear.  An iterative algorithm that converges to the
optimal solution is due to Blahut~\cite{blahut:1972:capacity} and
independently Arimoto~\cite{arimoto:1972:capacity} (see
also~\cite{yeung:2008:coding}).  For our purposes, this algorithm
works by iteratively computing values $\{\langle
\blahutProb{\blahutIdx}[\distortPubDataVal],
\blahutCProb{\blahutIdx}[\distortPubDataVal][\privDataVal]\rangle\}_{\blahutIdx
  \ge 0}$ as follows for each $\privDataVal \in \privDataDomain$ and
each \distortPubDataVal, $\objSize{\objStore{\privDataVal}} \le
\distortPubDataVal \le \padFactor \times
\objSize{\objStore{\privDataVal}}$, where \blahutParam is a parameter:
\begin{align}
  \blahutCProb{\blahutIdx+1}[\distortPubDataVal][\privDataVal]
  & = \frac{\blahutProb{\blahutIdx}[\distortPubDataVal]\exp(-\blahutParam \times \distortionFn(\privDataVal, \distortPubDataVal))}{\sum_{\distortPubDataValAlt} \blahutProb{\blahutIdx}[\distortPubDataValAlt]\exp(-\blahutParam \times \distortionFn(\privDataVal, \distortPubDataValAlt))} \nonumber \\
  & = \left\{\begin{array}{cl}
  \frac{\blahutProb{\blahutIdx}[\distortPubDataVal]}{\sum_{\distortPubDataValAlt : \distortionFn(\privDataVal, \distortPubDataValAlt) = 0} \blahutProb{\blahutIdx}[\distortPubDataValAlt]} & \mbox{if $\distortionFn(\privDataVal, \distortPubDataVal) = 0$} \\
  0 & \mbox{otherwise}
  \end{array}\right. \label{eqn:blahutSecond} \\
  \blahutProb{\blahutIdx+1}[\distortPubDataVal]
  & = \sum_{\privDataVal\in\privDataDomain} \prob{\privDataRV = \privDataVal} \blahutCProb{\blahutIdx+1}[\distortPubDataVal][\privDataVal] \nonumber
\end{align}
\eqnref{eqn:blahutSecond} holds since in our case,
$\distortionFn(\privDataVal, \distortPubDataVal) \in \{0, \infty\}$
for all \privDataVal and \distortPubDataVal.

As \blahutIdx grows, the values
\blahutProb{\blahutIdx}[\distortPubDataVal] and
\blahutCProb{\blahutIdx}[\distortPubDataVal][\privDataVal] converge to
\prob{\distortPubDataRV = \distortPubDataVal} and
\cprob{\big}{\distortPubDataRV = \distortPubDataVal}{\privDataRV =
  \privDataVal}, respectively, for the optimal
\distortPubDataAlg{\cdot} for \perReqPaddingTerm, provided that
\blahutCProb{0}[\distortPubDataVal][\privDataVal] is initialized to be
nonzero for any \distortPubDataVal for which
\cprob{\big}{\distortPubDataRV = \distortPubDataVal}{\privDataRV =
  \privDataVal} in the privacy-optimal \distortPubDataAlg{\cdot} is
nonzero.  So, in all empirical tests reported in this paper, we
initialized \blahutCProb{0}[\distortPubDataVal][\privDataVal] by
computing every possible padded size for \objStore{\privDataVal},
i.e.,
\[
\distortPubDataSubset{\privDataVal} =
\left\{ \distortPubDataValAlt
\left\lvert\begin{array}{l}
\objSize{\objStore{\privDataVal}} \le \distortPubDataValAlt \le \padFactor \times \objSize{\objStore{\privDataVal}}~\wedge \\ 
\exists \privDataValAlt \in \privDataDomain :
(\prob{\privDataRV = \privDataValAlt} > 0
\wedge \objSize{\objStore{\privDataValAlt}} =
\distortPubDataValAlt)
\end{array}\right.\right\}
\]
and set
\begin{align*}
\blahutCProb{0}[\distortPubDataVal][\privDataVal] & =
\left\{\begin{array}{ll}
\frac{1}{\setSize{\distortPubDataSubset{\privDataVal}}}
& \mbox{if $\distortPubDataVal \in \distortPubDataSubset{\privDataVal}$} \\
0
& \mbox{otherwise}
\end{array}\right. \\
\blahutProb{0}[\distortPubDataVal] & = \sum_{\privDataVal \in
  \privDataDomain} \prob{\privDataRV = \privDataVal} \blahutCProb{0}[\distortPubDataVal][\privDataVal]
\end{align*}
We terminated the algorithm once
$\mutInfoEst{\blahutIdx}{\privDataRV}{\distortPubDataRV} -
\mutInfoEst{\blahutIdx+1}{\privDataRV}{\distortPubDataRV} <
\blahutCutoff$, where
\mutInfoEst{\blahutIdx}{\privDataRV}{\distortPubDataRV} is the
value of \mutInfo{\privDataRV}{\distortPubDataRV} obtained using
$\prob{\distortPubDataRV = \distortPubDataVal} =
\blahutProb{\blahutIdx}[\distortPubDataVal]$ and
$\cprob{\big}{\distortPubDataRV = \distortPubDataVal}{\privDataRV =
  \privDataVal} =
\blahutCProb{\blahutIdx}[\distortPubDataVal][\privDataVal]$ in
\eqnref{eqn:mutInfoSymmetric}, and where \blahutCutoff is a
threshold to indicate when the algorithm can cease iterating
(as the per-iteration improvement to \mutInfo{\privDataRV}{\distortPubDataRV}
has become sufficiently small). In the sections that follow, we refer
to this algorithm as \perReqAlgLong (\perReqAlg).

\begin{figure}[t]
  \setcounter{MinNumber}{0}%
  \setcounter{MaxNumber}{1}%
  \iffalse
  \begin{subfigure}[t]{\columnwidth}
    \centering
    %\hspace{-.25em}
    \input{figures/algorithm-demos/prp/c_1.tex}
    \caption{$\padFactor = 1$}
    \label{fig:prp-demo:padFactor_1}
  \end{subfigure}
  \fi
  
  \begin{subfigure}[t]{\columnwidth}
    \centering
    %\hspace{-.25em}
    {\scriptsize
\setlength{\tabcolsep}{.125em}
\begin{tabular}{@{}cc|@{\hspace{.16667em}}|@{\hspace{1.5pt}}*{10}{Y}@{\hspace{1.5pt}}}
& & \multicolumn{10}{c}{\distortPubDataVal} \\
& \privDataVal ~~
& \multicolumn{1}{c}{\tiny 2493855}
& \multicolumn{1}{c}{\tiny 3833489}
& \multicolumn{1}{c}{\tiny 7929946}
& \multicolumn{1}{c}{\tiny 13322074}
& \multicolumn{1}{c}{\tiny 13589747}
& \multicolumn{1}{c}{\tiny 16235142}
& \multicolumn{1}{c}{\tiny 16719886}
& \multicolumn{1}{c}{\tiny 19437984}
& \multicolumn{1}{c}{\tiny 25905442}
& \multicolumn{1}{c}{\tiny 34389677}\\
\cline{2-12} \\[-2.35ex]
& \exampleObject{0} & 0.00 & 1.00 & 0.00 & 0.00 & 0.00 & 0.00 & 0.00 & 0.00 & 0.00 & 0.00 \\
& \exampleObject{1} & 0.00 & 1.00 & 0.00 & 0.00 & 0.00 & 0.00 & 0.00 & 0.00 & 0.00 & 0.00 \\
& \exampleObject{2} & 0.00 & 0.00 & 0.00 & 0.00 & 1.00 & 0.00 & 0.00 & 0.00 & 0.00 & 0.00 \\
& \exampleObject{3} & 0.00 & 0.00 & 0.00 & 0.00 & 0.19 & 0.00 & 0.00 & 0.00 & 0.81 & 0.00 \\
& \exampleObject{4} & 0.00 & 0.00 & 0.00 & 0.00 & 0.19 & 0.00 & 0.00 & 0.00 & 0.81 & 0.00 \\
& \exampleObject{5} & 0.00 & 0.00 & 0.00 & 0.00 & 0.00 & 0.00 & 0.00 & 0.00 & 1.00 & 0.00 \\
& \exampleObject{6} & 0.00 & 0.00 & 0.00 & 0.00 & 0.00 & 0.00 & 0.00 & 0.00 & 1.00 & 0.00 \\
& \exampleObject{7} & 0.00 & 0.00 & 0.00 & 0.00 & 0.00 & 0.00 & 0.00 & 0.00 & 0.86 & 0.14 \\
& \exampleObject{8} & 0.00 & 0.00 & 0.00 & 0.00 & 0.00 & 0.00 & 0.00 & 0.00 & 0.86 & 0.14 \\
& \exampleObject{9} & 0.00 & 0.00 & 0.00 & 0.00 & 0.00 & 0.00 & 0.00 & 0.00 & 0.00 & 1.00 \\
\end{tabular}
}

    \caption{$\padFactor = 2$}
    \label{fig:prp-demo:padFactor_2}
  \end{subfigure}
  
  \begin{subfigure}[t]{\columnwidth}
    \centering
    %\hspace{-.25em}
    {\scriptsize
\setlength{\tabcolsep}{.125em}
\begin{tabular}{@{}cc|@{\hspace{.16667em}}|@{\hspace{1.5pt}}*{10}{Y}@{\hspace{1.5pt}}}
& & \multicolumn{10}{c}{\distortPubDataVal} \\
& \privDataVal ~~
& \multicolumn{1}{c}{\tiny 2493855}
& \multicolumn{1}{c}{\tiny 3833489}
& \multicolumn{1}{c}{\tiny 7929946}
& \multicolumn{1}{c}{\tiny 13322074}
& \multicolumn{1}{c}{\tiny 13589747}
& \multicolumn{1}{c}{\tiny 16235142}
& \multicolumn{1}{c}{\tiny 16719886}
& \multicolumn{1}{c}{\tiny 19437984}
& \multicolumn{1}{c}{\tiny 25905442}
& \multicolumn{1}{c}{\tiny 34389677}\\
\cline{2-12} \\[-2.35ex]
& \exampleObject{0} & 0.00 & 1.00 & 0.00 & 0.00 & 0.00 & 0.00 & 0.00 & 0.00 & 0.00 & 0.00 \\
& \exampleObject{1} & 0.00 & 0.39 & 0.61 & 0.00 & 0.00 & 0.00 & 0.00 & 0.00 & 0.00 & 0.00 \\
& \exampleObject{2} & 0.00 & 0.00 & 0.72 & 0.00 & 0.00 & 0.00 & 0.00 & 0.28 & 0.00 & 0.00 \\
& \exampleObject{3} & 0.00 & 0.00 & 0.00 & 0.00 & 0.00 & 0.00 & 0.00 & 0.19 & 0.00 & 0.81 \\
& \exampleObject{4} & 0.00 & 0.00 & 0.00 & 0.00 & 0.00 & 0.00 & 0.00 & 0.19 & 0.00 & 0.81 \\
& \exampleObject{5} & 0.00 & 0.00 & 0.00 & 0.00 & 0.00 & 0.00 & 0.00 & 0.19 & 0.00 & 0.81 \\
& \exampleObject{6} & 0.00 & 0.00 & 0.00 & 0.00 & 0.00 & 0.00 & 0.00 & 0.19 & 0.00 & 0.81 \\
& \exampleObject{7} & 0.00 & 0.00 & 0.00 & 0.00 & 0.00 & 0.00 & 0.00 & 0.19 & 0.00 & 0.81 \\
& \exampleObject{8} & 0.00 & 0.00 & 0.00 & 0.00 & 0.00 & 0.00 & 0.00 & 0.00 & 0.00 & 1.00 \\
& \exampleObject{9} & 0.00 & 0.00 & 0.00 & 0.00 & 0.00 & 0.00 & 0.00 & 0.00 & 0.00 & 1.00 \\
\end{tabular}
}

    \caption{$\padFactor = 3$}
    \label{fig:prp-demo:padFactor_3}
  \end{subfigure}
  \vspace*{-1.0em}
  \caption{\cprob{\big}{\distortPubDataRV = \distortPubDataVal}{\privDataRV = \privDataVal} produced by \perReqAlg (\secref{sec:algs:per-request}).}
  \label{fig:prp-demo}
\end{figure}

\perReqAlg produces the conditional distributions shown in
\figref{fig:prp-demo} when applied to the objects in
\tabref{tbl:objects}.  The most immediately noticeable difference from
\figref{fig:pop-demo} is that \figref{fig:prp-demo} includes values $<
1.0$, which permits \distortPubDataAlg{\cdot} to sample each response
size from the distribution in the row of the requested object.  (Note
that each row sums to $1.0$, i.e. the calculation of
\blahutCProb{\blahutIdx}[\distortPubDataVal][\privDataVal]
followed by \blahutProb{\blahutIdx}[\distortPubDataVal],
as per \eqnref{eqn:blahutSecond}, yields a valid 
probability distribution.)  Whereas \perObjAlg was forced to isolate
either \exampleObject{0} or \exampleObject{2} when $\padFactor \le 3$,
\perReqAlg spreads probability mass to ensure that no object is
isolated; i.e., for any feasible response size \distortPubDataVal,
there are at least two objects that response could possibly represent.
Moreover, \perReqAlg spreads probability mass in the most effective
way possible to minimize \mutInfo{\privDataRV}{\distortPubDataRV}.

\subsection{Unknown Query Distribution}
\label{sec:algs:unknown-dist}

There are cases in which it is inconvenient or even impossible for the
\objStoreTerm to estimate the distribution for \privDataRV.  This
might occur because the adversary can affect that distribution or
simply because the distribution is likely to vary in unpredictable
ways over time.  Issa et al.~\cite{issa:2020:operational} advocate
that the Sibson mutual information of order infinity, denoted
\mutInfoInf{\privDataRV}{\distortPubDataRV}, be used in place of
\mutInfo{\privDataRV}{\distortPubDataRV} to describe information
leakage when the distribution of the private random variable
\privDataRV is complex or outside the defender's control, where
\[
\mutInfoInf{\privDataRV}{\distortPubDataRV} =
\log_2 \sum_{\distortPubDataVal} ~\max_{\privDataVal : \prob{\privDataRV = \privDataVal} > 0} \cprob{\big}{\distortPubDataRV = \distortPubDataVal}{\privDataRV = \privDataVal}
\]
This alternative is attractive because
$\mutInfoInf{\privDataRV}{\distortPubDataRV} \ge
\mutInfo{\privDataRV}{\distortPubDataRV}$~\cite{sibson:1969:radius,
  verdu:2015:alpha} and, moreover,
$\mutInfoInf{\privDataRV}{\distortPubDataRV} =
\mutInfo{\privDataRV}{\distortPubDataRV}$ for distributions of
\privDataRV and \distortPubDataRV meeting certain
conditions~\cite[\staticlmaref{2}]{issa:2020:operational}.  As such,
\mutInfoInf{\privDataRV}{\distortPubDataRV} is a conservative estimate
of information leakage in the absence of knowledge of the distribution
for \privDataRV (except those \privDataVal for which
$\prob{\privDataRV = \privDataVal} > 0$) and, Issa et al.\ argue, has
other advantages as a measure of information leakage, as well.

In our scenario, constructing \distortPubDataAlg{\cdot} that minimizes
\mutInfoInf{\privDataRV}{\distortPubDataRV} subject to overhead
constraints \eqnsref{eqn:noCompression}{eqn:limitExpansion} is very
efficient.  First, note that for any \privDataValAlt for which
$\prob{\privDataRV = \privDataValAlt} > 0$, constraints
\eqnsref{eqn:noCompression}{eqn:limitExpansion} imply
\[
\sum_{\distortPubDataVal : \objSize{\objStore{\privDataValAlt}} \le \distortPubDataVal \le \padFactor \times \objSize{\objStore{\privDataValAlt}}} \cprob{\big}{\distortPubDataRV = \distortPubDataVal}{\privDataRV = \privDataValAlt} = 1
\]
and so
\[
\sum_{\distortPubDataVal : \objSize{\objStore{\privDataValAlt}} \le \distortPubDataVal \le \padFactor \times \objSize{\objStore{\privDataValAlt}}} ~\max_{\privDataVal : \prob{\privDataRV = \privDataVal} > 0} \cprob{\big}{\distortPubDataRV = \distortPubDataVal}{\privDataRV = \privDataVal} \ge 1
\]
As such, for any $\privDataSubset \subseteq \privDataDomain$ such that
\begin{align}
&\forall \privDataVal, \privDataValAlt \in \privDataSubset,
\privDataVal \neq \privDataValAlt: \nonumber \\
&~~~\nats[\objSize{\objStore{\privDataVal}}][\padFactor \times
  \objSize{\objStore{\privDataVal}}] \cap
\nats[\objSize{\objStore{\privDataValAlt}}][\padFactor \times
  \objSize{\objStore{\privDataValAlt}}] = \emptyset
\label{eqn:no-overlap}
\end{align}
we know that $\mutInfoInf{\privDataRV}{\distortPubDataRV} \ge
\log_2 \setSize{\privDataSubset}$.

%In particular, if we find the largest such
%\privDataSubset and construct \distortPrivDataAlg{\cdot} such that \eqnref{}
%is an equality, then \distortPrivDataAlg{\cdot} is privacy optimal

Now, for any $\privDataSubset \subseteq \privDataDomain$ define the
\textit{anchors} \anchors[\privDataSubset] inductively as follows:
$\anchors[\emptyset] = \emptyset$ and, for $\privDataSubset \neq
\emptyset$,
\begin{align}
  \anchors[\privDataSubset] & =
  \{\anchor[\privDataSubset]\} \cup \anchors[\privDataSubset \setminus \anchorReach[\privDataSubset]] & \mbox{where} \label{eqn:anchors} \\
  \anchor[\privDataSubset] & = \arg\max_{\privDataVal \in \privDataSubset} \objSize{\objStore{\privDataVal}} & \mbox{and} \label{eqn:anchor} \\
  \anchorReach[\privDataSubset] & = \{\privDataValAlt \in \privDataSubset :  \objSize{\objStore{\anchor[\privDataSubset]}} \le \padFactor \times \objSize{\objStore{\privDataValAlt}}\} \label{eqn:reach}
\end{align}
In words, the anchors of \privDataSubset include the largest object of
\privDataSubset (see \eqnref{eqn:anchor}) and the anchors of the set
remaining after removing those objects for which this anchor is no
more than \padFactor times larger (see
\eqnref{eqn:anchors}, \eqnref{eqn:reach}).

Consider calculating the anchors \anchors[\privDataDomain] of the full
set \privDataDomain, and for each \anchorReach[\privDataSubset]
calculated in the induction, pad each object in
\anchorReach[\privDataSubset] to the size of the anchor
\anchor[\privDataSubset]; i.e.,
\[
\distortPubDataAlg{\objStore{\privDataVal}} = \objSize{\objStore{\anchor[\privDataSubset]}}
\]
for each $\privDataVal \in \anchorReach(\privDataSubset)$.  This
padding scheme respects constraints
\eqnsref{eqn:noCompression}{eqn:limitExpansion} and yields
$\mutInfoInf{\privDataRV}{\distortPubDataRV} = \log_2
\setSize{\anchors[\privDataDomain]}$, since there are only
\setSize{\anchors[\privDataDomain]} values \distortPubDataVal to which
objects are padded and
\[
\max_{\privDataVal : \prob{\privDataRV =
    \privDataVal} > 0} \cprob{\big}{\distortPubDataRV =
  \distortPubDataVal}{\privDataRV = \privDataVal} = 1
\]
for each such \distortPubDataVal.  Moreover, since
\eqnref{eqn:no-overlap} is satisfied with $\privDataSubset =
\anchors[\privDataDomain]$, we know that this value of
\mutInfoInf{\privDataRV}{\distortPubDataRV} is the minimum.  This
algorithm, which produces \distortPubDataAlg{\cdot} in time linear in
\setSize{\privDataDomain}, is denoted \noDistAlgLong (\noDistAlg) in
the following sections.

\begin{figure}[t]
  \setcounter{MinNumber}{0}%
  \setcounter{MaxNumber}{1}%
  \iffalse
  \begin{subfigure}[t]{\columnwidth}
    \centering
    %\hspace{-.25em}
    \input{figures/algorithm-demos/pwod/c_1.tex}
    \caption{$\padFactor = 1$}
    \label{fig:pwod-demo:padFactor_1}
  \end{subfigure}
  \fi
  
  \begin{subfigure}[t]{\columnwidth}
    \centering
    %\hspace{-.25em}
    {\scriptsize
\setlength{\tabcolsep}{.125em}
\begin{tabular}{@{}cc|@{\hspace{.16667em}}|@{\hspace{1.5pt}}*{10}{Y}@{\hspace{1.5pt}}}
& & \multicolumn{10}{c}{\distortPubDataVal} \\
& \privDataVal ~~
& \multicolumn{1}{c}{\tiny 2493855}
& \multicolumn{1}{c}{\tiny 3833489}
& \multicolumn{1}{c}{\tiny 7929946}
& \multicolumn{1}{c}{\tiny 13322074}
& \multicolumn{1}{c}{\tiny 13589747}
& \multicolumn{1}{c}{\tiny 16235142}
& \multicolumn{1}{c}{\tiny 16719886}
& \multicolumn{1}{c}{\tiny 19437984}
& \multicolumn{1}{c}{\tiny 25905442}
& \multicolumn{1}{c}{\tiny 34389677}\\
\cline{2-12} \\[-2.35ex]
& \exampleObject{0} & 0.00 & 1.00 & 0.00 & 0.00 & 0.00 & 0.00 & 0.00 & 0.00 & 0.00 & 0.00 \\
& \exampleObject{1} & 0.00 & 1.00 & 0.00 & 0.00 & 0.00 & 0.00 & 0.00 & 0.00 & 0.00 & 0.00 \\
& \exampleObject{2} & 0.00 & 0.00 & 1.00 & 0.00 & 0.00 & 0.00 & 0.00 & 0.00 & 0.00 & 0.00 \\
& \exampleObject{3} & 0.00 & 0.00 & 0.00 & 0.00 & 0.00 & 0.00 & 1.00 & 0.00 & 0.00 & 0.00 \\
& \exampleObject{4} & 0.00 & 0.00 & 0.00 & 0.00 & 0.00 & 0.00 & 1.00 & 0.00 & 0.00 & 0.00 \\
& \exampleObject{5} & 0.00 & 0.00 & 0.00 & 0.00 & 0.00 & 0.00 & 1.00 & 0.00 & 0.00 & 0.00 \\
& \exampleObject{6} & 0.00 & 0.00 & 0.00 & 0.00 & 0.00 & 0.00 & 1.00 & 0.00 & 0.00 & 0.00 \\
& \exampleObject{7} & 0.00 & 0.00 & 0.00 & 0.00 & 0.00 & 0.00 & 0.00 & 0.00 & 0.00 & 1.00 \\
& \exampleObject{8} & 0.00 & 0.00 & 0.00 & 0.00 & 0.00 & 0.00 & 0.00 & 0.00 & 0.00 & 1.00 \\
& \exampleObject{9} & 0.00 & 0.00 & 0.00 & 0.00 & 0.00 & 0.00 & 0.00 & 0.00 & 0.00 & 1.00 \\
\end{tabular}
}

    \caption{$\padFactor = 2$}
    \label{fig:pwod-demo:padFactor_2}
  \end{subfigure}
  
  \begin{subfigure}[t]{\columnwidth}
    \centering
    %\hspace{-.25em}
    {\scriptsize
\setlength{\tabcolsep}{.125em}
\begin{tabular}{@{}cc|@{\hspace{.16667em}}|@{\hspace{1.5pt}}*{10}{Y}@{\hspace{1.5pt}}}
& & \multicolumn{10}{c}{\distortPubDataVal} \\
& \privDataVal ~~
& \multicolumn{1}{c}{\tiny 2493855}
& \multicolumn{1}{c}{\tiny 3833489}
& \multicolumn{1}{c}{\tiny 7929946}
& \multicolumn{1}{c}{\tiny 13322074}
& \multicolumn{1}{c}{\tiny 13589747}
& \multicolumn{1}{c}{\tiny 16235142}
& \multicolumn{1}{c}{\tiny 16719886}
& \multicolumn{1}{c}{\tiny 19437984}
& \multicolumn{1}{c}{\tiny 25905442}
& \multicolumn{1}{c}{\tiny 34389677}\\
\cline{2-12} \\[-2.35ex]
& \exampleObject{0} & 1.00 & 0.00 & 0.00 & 0.00 & 0.00 & 0.00 & 0.00 & 0.00 & 0.00 & 0.00 \\
& \exampleObject{1} & 0.00 & 0.00 & 1.00 & 0.00 & 0.00 & 0.00 & 0.00 & 0.00 & 0.00 & 0.00 \\
& \exampleObject{2} & 0.00 & 0.00 & 1.00 & 0.00 & 0.00 & 0.00 & 0.00 & 0.00 & 0.00 & 0.00 \\
& \exampleObject{3} & 0.00 & 0.00 & 0.00 & 0.00 & 0.00 & 0.00 & 0.00 & 0.00 & 0.00 & 1.00 \\
& \exampleObject{4} & 0.00 & 0.00 & 0.00 & 0.00 & 0.00 & 0.00 & 0.00 & 0.00 & 0.00 & 1.00 \\
& \exampleObject{5} & 0.00 & 0.00 & 0.00 & 0.00 & 0.00 & 0.00 & 0.00 & 0.00 & 0.00 & 1.00 \\
& \exampleObject{6} & 0.00 & 0.00 & 0.00 & 0.00 & 0.00 & 0.00 & 0.00 & 0.00 & 0.00 & 1.00 \\
& \exampleObject{7} & 0.00 & 0.00 & 0.00 & 0.00 & 0.00 & 0.00 & 0.00 & 0.00 & 0.00 & 1.00 \\
& \exampleObject{8} & 0.00 & 0.00 & 0.00 & 0.00 & 0.00 & 0.00 & 0.00 & 0.00 & 0.00 & 1.00 \\
& \exampleObject{9} & 0.00 & 0.00 & 0.00 & 0.00 & 0.00 & 0.00 & 0.00 & 0.00 & 0.00 & 1.00 \\
\end{tabular}
}

    \caption{$\padFactor = 3$}
    \label{fig:pwod-demo:padFactor_3}
  \end{subfigure}
  \vspace*{-1.0em}
  \caption{\cprob{\big}{\distortPubDataRV = \distortPubDataVal}{\privDataRV = \privDataVal} produced by \noDistAlg (\secref{sec:algs:unknown-dist}).}
  \label{fig:pwod-demo}
\end{figure}

Applying \noDistAlg to the objects in \tabref{tbl:objects} yields the
conditional distributions shown in \figref{fig:pwod-demo}.  As in the
case of \perObjAlg, \noDistAlg is forced to isolate either
\exampleObject{0} or \exampleObject{2} when $\padFactor \le 3$ (since
both produce deterministic padding schemes), and \noDistAlg does so
in the same way as \perObjAlg.  The distributions resulting when
$\padFactor = 3$, shown in \figref{fig:pwod-demo:padFactor_3}, are
also identical to those produced by \perObjAlg
(\figref{fig:pop-demo:padFactor_3}).  When $\padFactor = 2$, however,
\noDistAlg prescribes that \exampleObject{3}--\exampleObject{9} be
padded differently than \perObjAlg prescribes, as can be seen by
comparing \figref{fig:pwod-demo:padFactor_2} to
\figref{fig:pop-demo:padFactor_2}. This is due to the fact that
\noDistAlg iterates through an object list in reverse-order by size
and greedily assigns objects to the current anchor.
Thus, in \figref{fig:pwod-demo:padFactor_2} \noDistAlg sets \exampleObject{9}
as the first anchor and then pads \exampleObject{8} and \exampleObject{7}
to \exampleObject{9}'s size before arriving at \exampleObject{6} and
creating the next anchor.

\section{Security Evaluation}
\label{sec:security}

In this section, we begin by describing the datasets that we created to
support all of our security and performance tests. Next we list the
padding algorithms to which we compare, and we describe how they work
when used in our setting. We then evaluate each algorithm's ability
to reduce \mutInfo{\privDataRV}{\distortPubDataRV} for the two datasets.
We conclude this section with a security assessment that evaluates
each algorithm's ability to hinder a network attacker in an operational 
setting.

\subsection{Datasets}
\label{sec:security:datasets}

For our tests we use two datasets: a dataset consisting of
NodeJS\footnote{https://nodejs.org/en/} packages, and a dataset consisting of
Unsplash\footnote{https://unsplash.com/} images. Details for each dataset are
as follows.

\paragraph{NodeJS Packages}
To create our dataset of NodeJS packages, we first referenced a 
list\footnote{https://github.com/nice-registry/all-the-package-repos, 
accessed Feb. 19, 2021.} of all packages available on the NodeJS Package Manager 
(NPM) registry. We then used NPM's Application Programming Interface
(API) to retrieve the weekly download statistics of each package, for the week of 
Feb. 13-19, 2021. Finally, we issued HTTP HEAD requests for each package to
obtain the tarball size (in bytes) of its most current version (as of Feb. 19, 2021).
In total, our NodeJS dataset contains the name, tarball size, and weekly
download statistics for 423,450 NodeJS packages\footnote{Due to limitations of the 
NPM API, we did not retrieve download statistics for scoped packages (i.e. packages
whose name begins with an @ symbol). Additionally, we did not include packages with 0 
weekly downloads as part of our dataset. Overall, our starting list contained
993,825 packages, and our final dataset used for testing is comprised of 423,450
packages.}.

\paragraph{Unsplash Lite}
To create our Unsplash Lite dataset, we first referenced Unsplash's freely available
``Unsplash Lite Dataset 1.1.0'' dataset\footnote{https://unsplash.com/data}. This 
dataset includes the URL of each image in the dataset, as well each image's cumulative
downloads since the image was uploaded to Unsplash. Given this information, we were 
able to issue HTTP HEAD requests for each image in the dataset, as well as compute 
each image's average downloads per day (based on the difference between the dataset's
creation date and the image's upload date). In total, our Unsplash Lite dataset
contains the URL, file size (in bytes), and average daily downloads for
24,997 images.

\subsection{Padding Algorithms to Which We Compare}
\label{sec:security:algos}

In this section we briefly introduce the other padding algorithms to
which we compare.  These algorithms have appeared within the past five
years in well-regarded, peer-reviewed venues focusing on privacy
technologies, and so we take them as representative of modern
approaches for padding objects to make their retrieval harder to
detect by a network observer.

\paragraph{\dAlpaca}
Cherubin et al.~\cite{cherubin:2017:alpaca} proposed padding algorithms
to defend against website fingerprinting attacks and, so, that seek to
address leakage resulting from the retrieval of objects hyperlinked in
a webpage, subject to padding overhead constraints.  Distilled down to
our simpler scenario, though, their padding algorithms result in natural
contenders for comparison.  One of these, called \dAlpaca, is
deterministic and so is suitable as a \perObjPaddingTerm algorithm.  In
brief, \dAlpaca sets \distortPubDataAlg{\objStore{\privDataVal}}
to be the smallest multiple of \dAlpacaBinSize that is 
$\ge \objSize{\objStore{\privDataVal}}$, where \dAlpacaBinSize is
an input parameter. For our setting, we set $\dAlpacaBinSize = 
\floor[(\padFactor - 1) \times \objSize{\objStore{\privDataValMin}}]$,
where $\floor: \reals \rightarrow \nats$ is the floor function and 
where \privDataValMin is the identifier of the smallest object in 
the \objStoreTerm. This, then, ensures that \dAlpaca does not violate
\padFactor for any $\privDataVal \in \privDataDomain$. Note that
\dAlpaca is insensitive to the distribution of \privDataRV.

\begin{figure}[t]
  \setcounter{MinNumber}{0}%
  \setcounter{MaxNumber}{1}%
  \iffalse
  \begin{subfigure}[t]{\columnwidth}
    \centering
    %\hspace{-.25em}
    \input{figures/algorithm-demos/d-alpaca/c_1.tex}
    \caption{$\padFactor = 1$}
    \label{fig:d-alpaca-demo:c_1}
  \end{subfigure}
  \fi
  
  \begin{subfigure}[t]{\columnwidth}
    \centering
    %\hspace{-.25em}
    {\scriptsize
\setlength{\tabcolsep}{.125em}
\begin{tabular}{@{}cc|@{\hspace{.16667em}}|@{\hspace{1.5pt}}*{8}{Y}@{\hspace{1.5pt}}}
& & \multicolumn{8}{c}{\distortPubDataVal} \\
& \privDataVal ~~
& \multicolumn{1}{c}{\tiny 2493855}
& \multicolumn{1}{c}{\tiny 4987710}
& \multicolumn{1}{c}{\tiny 9975420}
& \multicolumn{1}{c}{\tiny 14963130}
& \multicolumn{1}{c}{\tiny 17456985}
& \multicolumn{1}{c}{\tiny 19950840}
& \multicolumn{1}{c}{\tiny 27432405}
& \multicolumn{1}{c}{\tiny 34913970}\\
\cline{2-10} \\[-2.35ex]
& \exampleObject{0} & 1.00 & 0.00 & 0.00 & 0.00 & 0.00 & 0.00 & 0.00 & 0.00 \\
& \exampleObject{1} & 0.00 & 1.00 & 0.00 & 0.00 & 0.00 & 0.00 & 0.00 & 0.00 \\
& \exampleObject{2} & 0.00 & 0.00 & 1.00 & 0.00 & 0.00 & 0.00 & 0.00 & 0.00 \\
& \exampleObject{3} & 0.00 & 0.00 & 0.00 & 1.00 & 0.00 & 0.00 & 0.00 & 0.00 \\
& \exampleObject{4} & 0.00 & 0.00 & 0.00 & 1.00 & 0.00 & 0.00 & 0.00 & 0.00 \\
& \exampleObject{5} & 0.00 & 0.00 & 0.00 & 0.00 & 1.00 & 0.00 & 0.00 & 0.00 \\
& \exampleObject{6} & 0.00 & 0.00 & 0.00 & 0.00 & 1.00 & 0.00 & 0.00 & 0.00 \\
& \exampleObject{7} & 0.00 & 0.00 & 0.00 & 0.00 & 0.00 & 1.00 & 0.00 & 0.00 \\
& \exampleObject{8} & 0.00 & 0.00 & 0.00 & 0.00 & 0.00 & 0.00 & 1.00 & 0.00 \\
& \exampleObject{9} & 0.00 & 0.00 & 0.00 & 0.00 & 0.00 & 0.00 & 0.00 & 1.00 \\
\end{tabular}
}

    \caption{$\padFactor = 2$}
    \label{fig:d-alpaca-demo:padFactor_2}
  \end{subfigure}
	
  \begin{subfigure}[t]{\columnwidth}
    \centering
    %\hspace{-.25em}
    {\scriptsize
\setlength{\tabcolsep}{.125em}
\begin{tabular}{@{}cc|@{\hspace{.16667em}}|@{\hspace{1.5pt}}*{6}{Y}@{\hspace{1.5pt}}}
& & \multicolumn{6}{c}{\distortPubDataVal} \\
& \privDataVal ~~
& \multicolumn{1}{c}{\tiny 4987710}
& \multicolumn{1}{c}{\tiny 9975420}
& \multicolumn{1}{c}{\tiny 14963130}
& \multicolumn{1}{c}{\tiny 19950840}
& \multicolumn{1}{c}{\tiny 29926260}
& \multicolumn{1}{c}{\tiny 34913970}\\
\cline{2-8} \\[-2.35ex]
& \exampleObject{0} & 1.00 & 0.00 & 0.00 & 0.00 & 0.00 & 0.00 \\
& \exampleObject{1} & 1.00 & 0.00 & 0.00 & 0.00 & 0.00 & 0.00 \\
& \exampleObject{2} & 0.00 & 1.00 & 0.00 & 0.00 & 0.00 & 0.00 \\
& \exampleObject{3} & 0.00 & 0.00 & 1.00 & 0.00 & 0.00 & 0.00 \\
& \exampleObject{4} & 0.00 & 0.00 & 1.00 & 0.00 & 0.00 & 0.00 \\
& \exampleObject{5} & 0.00 & 0.00 & 0.00 & 1.00 & 0.00 & 0.00 \\
& \exampleObject{6} & 0.00 & 0.00 & 0.00 & 1.00 & 0.00 & 0.00 \\
& \exampleObject{7} & 0.00 & 0.00 & 0.00 & 1.00 & 0.00 & 0.00 \\
& \exampleObject{8} & 0.00 & 0.00 & 0.00 & 0.00 & 1.00 & 0.00 \\
& \exampleObject{9} & 0.00 & 0.00 & 0.00 & 0.00 & 0.00 & 1.00 \\
\end{tabular}
}

    \caption{$\padFactor = 3$}
    \label{fig:d-alpaca-demo:padFactor_3}
  \end{subfigure}
  \vspace*{-1.0em}
  \caption{\cprob{\big}{\distortPubDataRV = \distortPubDataVal}{\privDataRV = \privDataVal} produced by \dAlpaca~\cite{cherubin:2017:alpaca}.}
  \label{fig:d-alpaca-demo}
\end{figure}

\figref{fig:d-alpaca-demo} illustrates \cprob{\big}{\distortPubDataRV
  = \distortPubDataVal}{\privDataRV = \privDataVal} when \dAlpaca is
applied to the objects listed in \tabref{tbl:objects}.  The values of
\distortPubDataVal (i.e., the column headings) listed in
\figref{fig:d-alpaca-demo} differ from those in
\figsref{fig:pop-demo}{fig:pwod-demo} because \dAlpaca pads objects to
sizes that are not necessarily the sizes of other objects, unlike
\perObjAlg, \perReqAlg, and \noDistAlg.
\figref{fig:d-alpaca-demo:padFactor_2} illustrates that \dAlpaca is
not particularly effective for the objects listed in
\tabref{tbl:objects} when $\padFactor = 2$, as retrievals of six of
the ten objects can be identified immediately based on the sizes to
which they are padded.  Only once $\padFactor = 3$ is a majority of
those objects padded to sizes where multiple objects share the same
padded size (\figref{fig:d-alpaca-demo:padFactor_3}).

\paragraph{\padme}
Nikitin et al.~\cite{nikitin:2019:purbs} proposed a \perObjPaddingTerm
algorithm called \padme.  Like ours, \padme seeks to limit leakage about
which object \objStore{\privDataVal} is returned from the size
\distortPubDataAlg{\objStore{\privDataVal}} of the returned object,
while also limiting padding overhead.  The notion of leakage that
\padme is designed to limit, however, is simply the total number of
valid padding sizes over all objects; in particular, \padme calculates
\perObjPaddingTerm sizes independent of the distribution of
\privDataRV.  Specifically, \padme pads an object
\objStore{\privDataVal} to size
\[
\distortPubDataAlg{\objStore{\privDataVal}} =
\padmeBase - (\padmeBase \bmod 2^{\padmeObjSizeExponent -
  \padmeObjSizeExponentBits})
\]
where $\padmeObjSizeExponent = \floor[\log_2
  \objSize{\objStore{\privDataVal}}]$, $\padmeObjSizeExponentBits =
\floor[\log_2 \padmeObjSizeExponent] + 1$, and $\padmeBase =
\objSize{\objStore{\privDataVal}} + 2^{\padmeObjSizeExponent -
  \padmeObjSizeExponentBits} - 1$.  Nikitin et al.\ show that this
scheme limits ``leakage'' to $\bigO{\log \log \padmeObjSizeMax}$ bits
if the largest object is of size \padmeObjSizeMax, 
with a padding overhead of $\padFactor \approx
\frac{1}{2\log_2 \objSize{\objStore{\privDataVal}}} < 1.12$.

\begin{figure}[t]
  \setcounter{MinNumber}{0}%
  \setcounter{MaxNumber}{1}%
  \centering
  %\hspace{-.25em}
  {\scriptsize
\setlength{\tabcolsep}{.125em}
\begin{tabular}{@{}cc|@{\hspace{.16667em}}|@{\hspace{1.5pt}}*{10}{Y}@{\hspace{1.5pt}}}
& & \multicolumn{10}{c}{\distortPubDataVal} \\
& \privDataVal ~~
& \multicolumn{1}{c}{\tiny 2555904}
& \multicolumn{1}{c}{\tiny 3866624}
& \multicolumn{1}{c}{\tiny 7995392}
& \multicolumn{1}{c}{\tiny 13369344}
& \multicolumn{1}{c}{\tiny 13631488}
& \multicolumn{1}{c}{\tiny 16252928}
& \multicolumn{1}{c}{\tiny 16777216}
& \multicolumn{1}{c}{\tiny 19922944}
& \multicolumn{1}{c}{\tiny 26214400}
& \multicolumn{1}{c}{\tiny 34603008}\\
\cline{2-12} \\[-2.35ex]
& \exampleObject{0} & 1.00 & 0.00 & 0.00 & 0.00 & 0.00 & 0.00 & 0.00 & 0.00 & 0.00 & 0.00 \\
& \exampleObject{1} & 0.00 & 1.00 & 0.00 & 0.00 & 0.00 & 0.00 & 0.00 & 0.00 & 0.00 & 0.00 \\
& \exampleObject{2} & 0.00 & 0.00 & 1.00 & 0.00 & 0.00 & 0.00 & 0.00 & 0.00 & 0.00 & 0.00 \\
& \exampleObject{3} & 0.00 & 0.00 & 0.00 & 1.00 & 0.00 & 0.00 & 0.00 & 0.00 & 0.00 & 0.00 \\
& \exampleObject{4} & 0.00 & 0.00 & 0.00 & 0.00 & 1.00 & 0.00 & 0.00 & 0.00 & 0.00 & 0.00 \\
& \exampleObject{5} & 0.00 & 0.00 & 0.00 & 0.00 & 0.00 & 1.00 & 0.00 & 0.00 & 0.00 & 0.00 \\
& \exampleObject{6} & 0.00 & 0.00 & 0.00 & 0.00 & 0.00 & 0.00 & 1.00 & 0.00 & 0.00 & 0.00 \\
& \exampleObject{7} & 0.00 & 0.00 & 0.00 & 0.00 & 0.00 & 0.00 & 0.00 & 1.00 & 0.00 & 0.00 \\
& \exampleObject{8} & 0.00 & 0.00 & 0.00 & 0.00 & 0.00 & 0.00 & 0.00 & 0.00 & 1.00 & 0.00 \\
& \exampleObject{9} & 0.00 & 0.00 & 0.00 & 0.00 & 0.00 & 0.00 & 0.00 & 0.00 & 0.00 & 1.00 \\
\end{tabular}
}

  \caption{\cprob{\big}{\distortPubDataRV = \distortPubDataVal}{\privDataRV = \privDataVal} produced by \padme~\cite{nikitin:2019:purbs}.}
  \label{fig:padme-demo}
\end{figure}

Unlike the other algorithms considered here, \padme is not tunable to
different padding factors \padFactor.  Moreover, as shown in
\figref{fig:padme-demo}, \padme is unfortunately ineffective when
applied to the objects listed in \tabref{tbl:objects}---each of the
objects is padded to a different size.  In part this is due to the
small number of objects or, more precisely, the low density of object
sizes.  Our evaluation on larger datasets below will provide a better
view of where \padme falls in the security versus overhead tradeoff.

\paragraph{\pAlpaca}
Cherubin et al.~\cite{cherubin:2017:alpaca} also proposed a randomized
algorithm called \pAlpaca that is suitable as a \perReqPaddingTerm 
algorithm.  When applied to our setting, \pAlpaca pads the response 
to a request \privDataVal so that
\begin{align*}
\prob{\distortPubDataAlg{\objStore{\privDataVal}} = \distortPubDataVal}
& = \cprob{\big}{\objSize{\objStore{\privDataRV}} = \distortPubDataVal}{\objSize{\objStore{\privDataVal}} \le \objSize{\objStore{\privDataRV}} \le \padFactor \times \objSize{\objStore{\privDataVal}}} \\
& = \frac{\sum_{\privDataValAlt: \objSize{\objStore{\privDataValAlt}} = \distortPubDataVal} \prob{\privDataRV = \privDataValAlt}}{\sum_{\privDataValAlt: \objSize{\objStore{\privDataVal}} \le \objSize{\objStore{\privDataValAlt}} \le \padFactor \times \objSize{\objStore{\privDataVal}}} \prob{\privDataRV = \privDataValAlt}}
\end{align*}
for each \distortPubDataVal, $\objSize{\objStore{\privDataVal}} \le
\distortPubDataVal \le \padFactor \times
\objSize{\objStore{\privDataVal}}$.  In particular, the most probable
padded size for \objStore{\privDataVal} is the most probable unpadded size in
the interval $[\objSize{\objStore{\privDataVal}}, \padFactor \times
\objSize{\objStore{\privDataVal}}]$.
%probable object \objStore{\privDataValAlt} satisfying
%$\objSize{\objStore{\privDataVal}} \le
%\objSize{\objStore{\privDataValAlt}} \le \padFactor \times
%\objSize{\objStore{\privDataVal}}$.

\begin{figure}[t]
  \setcounter{MinNumber}{0}%
  \setcounter{MaxNumber}{1}%
  \iffalse
  \begin{subfigure}[t]{\columnwidth}
    \centering
    %\hspace{-.25em}
    \input{figures/algorithm-demos/p-alpaca/c_1.tex}
    \caption{$\padFactor = 1$}
    \label{fig:p-alpaca-demo:padFactor_1}
  \end{subfigure}
  \fi
  
  \begin{subfigure}[t]{\columnwidth}
    \centering
    %\hspace{-.25em}
    {\scriptsize
\setlength{\tabcolsep}{.125em}
\begin{tabular}{@{}cc|@{\hspace{.16667em}}|@{\hspace{1.5pt}}*{10}{Y}@{\hspace{1.5pt}}}
& & \multicolumn{10}{c}{\distortPubDataVal} \\
& \privDataVal ~~
& \multicolumn{1}{c}{\tiny 2493855}
& \multicolumn{1}{c}{\tiny 3833489}
& \multicolumn{1}{c}{\tiny 7929946}
& \multicolumn{1}{c}{\tiny 13322074}
& \multicolumn{1}{c}{\tiny 13589747}
& \multicolumn{1}{c}{\tiny 16235142}
& \multicolumn{1}{c}{\tiny 16719886}
& \multicolumn{1}{c}{\tiny 19437984}
& \multicolumn{1}{c}{\tiny 25905442}
& \multicolumn{1}{c}{\tiny 34389677}\\
\cline{2-12} \\[-2.35ex]
& \exampleObject{0} & 0.08 & 0.92 & 0.00 & 0.00 & 0.00 & 0.00 & 0.00 & 0.00 & 0.00 & 0.00 \\
& \exampleObject{1} & 0.00 & 1.00 & 0.00 & 0.00 & 0.00 & 0.00 & 0.00 & 0.00 & 0.00 & 0.00 \\
& \exampleObject{2} & 0.00 & 0.00 & 0.29 & 0.66 & 0.06 & 0.00 & 0.00 & 0.00 & 0.00 & 0.00 \\
& \exampleObject{3} & 0.00 & 0.00 & 0.00 & 0.34 & 0.03 & 0.15 & 0.29 & 0.14 & 0.06 & 0.00 \\
& \exampleObject{4} & 0.00 & 0.00 & 0.00 & 0.00 & 0.04 & 0.23 & 0.43 & 0.21 & 0.09 & 0.00 \\
& \exampleObject{5} & 0.00 & 0.00 & 0.00 & 0.00 & 0.00 & 0.24 & 0.45 & 0.22 & 0.10 & 0.00 \\
& \exampleObject{6} & 0.00 & 0.00 & 0.00 & 0.00 & 0.00 & 0.00 & 0.59 & 0.28 & 0.13 & 0.00 \\
& \exampleObject{7} & 0.00 & 0.00 & 0.00 & 0.00 & 0.00 & 0.00 & 0.00 & 0.44 & 0.20 & 0.37 \\
& \exampleObject{8} & 0.00 & 0.00 & 0.00 & 0.00 & 0.00 & 0.00 & 0.00 & 0.00 & 0.35 & 0.65 \\
& \exampleObject{9} & 0.00 & 0.00 & 0.00 & 0.00 & 0.00 & 0.00 & 0.00 & 0.00 & 0.00 & 1.00 \\
\end{tabular}
}

    \caption{$\padFactor = 2$}
    \label{fig:p-alpaca-demo:padFactor_2}
  \end{subfigure}
  
  \begin{subfigure}[t]{\columnwidth}
    \centering
    %\hspace{-.25em}
    {\scriptsize
\setlength{\tabcolsep}{.125em}
\begin{tabular}{@{}cc|@{\hspace{.16667em}}|@{\hspace{1.5pt}}*{10}{Y}@{\hspace{1.5pt}}}
& & \multicolumn{10}{c}{\distortPubDataVal} \\
& \privDataVal ~~
& \multicolumn{1}{c}{\tiny 2493855}
& \multicolumn{1}{c}{\tiny 3833489}
& \multicolumn{1}{c}{\tiny 7929946}
& \multicolumn{1}{c}{\tiny 13322074}
& \multicolumn{1}{c}{\tiny 13589747}
& \multicolumn{1}{c}{\tiny 16235142}
& \multicolumn{1}{c}{\tiny 16719886}
& \multicolumn{1}{c}{\tiny 19437984}
& \multicolumn{1}{c}{\tiny 25905442}
& \multicolumn{1}{c}{\tiny 34389677}\\
\cline{2-12} \\[-2.35ex]
& \exampleObject{0} & 0.08 & 0.92 & 0.00 & 0.00 & 0.00 & 0.00 & 0.00 & 0.00 & 0.00 & 0.00 \\
& \exampleObject{1} & 0.00 & 0.84 & 0.16 & 0.00 & 0.00 & 0.00 & 0.00 & 0.00 & 0.00 & 0.00 \\
& \exampleObject{2} & 0.00 & 0.00 & 0.13 & 0.31 & 0.03 & 0.14 & 0.27 & 0.13 & 0.00 & 0.00 \\
& \exampleObject{3} & 0.00 & 0.00 & 0.00 & 0.30 & 0.03 & 0.13 & 0.26 & 0.12 & 0.06 & 0.10 \\
& \exampleObject{4} & 0.00 & 0.00 & 0.00 & 0.00 & 0.04 & 0.19 & 0.37 & 0.18 & 0.08 & 0.15 \\
& \exampleObject{5} & 0.00 & 0.00 & 0.00 & 0.00 & 0.00 & 0.20 & 0.38 & 0.18 & 0.08 & 0.15 \\
& \exampleObject{6} & 0.00 & 0.00 & 0.00 & 0.00 & 0.00 & 0.00 & 0.48 & 0.23 & 0.10 & 0.19 \\
& \exampleObject{7} & 0.00 & 0.00 & 0.00 & 0.00 & 0.00 & 0.00 & 0.00 & 0.44 & 0.20 & 0.37 \\
& \exampleObject{8} & 0.00 & 0.00 & 0.00 & 0.00 & 0.00 & 0.00 & 0.00 & 0.00 & 0.35 & 0.65 \\
& \exampleObject{9} & 0.00 & 0.00 & 0.00 & 0.00 & 0.00 & 0.00 & 0.00 & 0.00 & 0.00 & 1.00 \\
\end{tabular}
}

    \caption{$\padFactor = 3$}
    \label{fig:p-alpaca-demo:padFactor_3}
  \end{subfigure}
  \vspace*{-1.0em}
  \caption{\cprob{\big}{\distortPubDataRV = \distortPubDataVal}{\privDataRV = \privDataVal} produced by \pAlpaca~\cite{cherubin:2017:alpaca}.}
  \label{fig:p-alpaca-demo}
\end{figure}

The effect of \pAlpaca on the objects in \tabref{tbl:objects} is shown
in \figref{fig:p-alpaca-demo}.  Among all of the algorithms we
consider, \pAlpaca permits the widest variety of possible padding
sizes for each object, for the objects in \tabref{tbl:objects}: e.g.,
when $\padFactor = 3$, only the largest object can be padded to only a
single size, and some objects can be padded to up to six distinct
sizes.  As we will see below, however, many possible padded sizes per
object does not necessarily equate to better security.

\subsection{Mutual Information}
\label{sec:security:mutual_information}

To more systematically compare the security offered by the six padding
algorithms we consider, we applied them to the two datasets described
in \secref{sec:security:datasets} and computed the mutual information
\mutInfo{\privDataRV}{\distortPubDataRV} of the distribution resulting
from each.  \figref{fig:mutual-information:nodejs} shows the results
for the NodeJS dataset, and \figref{fig:mutual-information:unsplash}
shows the results for the Unsplash dataset.  Each bar indicates the
value for \mutInfo{\privDataRV}{\distortPubDataRV} that was achieved
by each algorithm, when calculated using the distribution of the given
dataset and the value of \padFactor on the horizontal axis.  Each
error bar extends to \mutInfoInf{\privDataRV}{\distortPubDataRV} for
each algorithm's padding scheme, which upper-bounds the worst-case
\mutInfo{\privDataRV}{\distortPubDataRV} if the distribution for
\privDataRV the defender assumed was incorrect.  The algorithms are
ordered left-to-right roughly according to security, i.e., from lower
security to higher (since lower numbers indicate better security).
The one exception is \padme, for which a bar is added to the left of
the cluster at the value of \padFactor that its padding scheme achieved 
on the corresponding dataset ($\padFactor = 1.09$ in
\figref{fig:mutual-information:nodejs} and $\padFactor = 1.03$ in
\figref{fig:mutual-information:unsplash}). Note that the y-axis in both
\figref{fig:mutual-information:nodejs} and 
\figref{fig:mutual-information:unsplash} begins at 5 \bits.

%%%%%%%%%% Mutual Information for NodeJS starts %%%%%%%%%%%%%
\begin{figure*}[t]
  \begin{subfigure}[t]{\textwidth}
    \centering
    \hspace*{-2.5em}
    \resizebox{!}{18.0em}{\begin{tikzpicture}

%\pgfplotsset{compat=1.12}

\begin{axis}[
xlabel={$\padFactor$},
ylabel={$\mutInfo{\privDataRV}{\distortPubDataRV}$ (bits)},
%width  = 0.50*\textwidth,
%height = 5cm,
x=2.1cm,
y=1cm,
ymax=12,
major x tick style = transparent,
ybar=0pt,
bar width=8pt,
%ybar=2*\pgflinewidth,
%bar width=25pt,
ymajorgrids = true,
symbolic x coords={1.01,1.02,1.03,1.04,1.05,1.06,1.07,1.08,1.09,1.1},
xtick = data,
scaled y ticks = false,
enlarge x limits=0.06,
ymin=5,
legend cell align=left,
legend pos=north east,
legend style={legend columns=-1},
legend image code/.code={\draw [#1] (0cm,-0.1cm) rectangle (0.2cm,0.25cm); },
]

\addplot[style={fill=white,postaction={pattern=north east lines}},error bars/.cd, y dir=both, y explicit]
coordinates {%
(1.01, -0.01)
(1.02, -0.01)
(1.03, -0.01)
(1.04, -0.01)
(1.05, -0.01)
(1.06, -0.01)
(1.07, -0.01)
(1.08, -0.01)
(1.09, 6.628412525306687) += (0,2.3574294116966543)
(1.1, -0.01)
};
\addplot[black!100!white,fill=black!100!white,error bars/.cd, y dir=both, y explicit]
coordinates {%
(1.01, 10.953143421467676) += (0,6.006507662013547)
(1.02, 10.953143421467676) += (0,6.006507662013547)
(1.03, 10.953143421467676) += (0,6.006507662013547)
(1.04, 10.953143421467676) += (0,6.006507662013547)
(1.05, 10.953143421467676) += (0,6.006507662013547)
(1.06, 10.953143421467676) += (0,6.006507662013547)
(1.07, 10.953143421467676) += (0,6.006507662013547)
(1.08, 10.953143421467676) += (0,6.006507662013547)
(1.09, 10.953143421467676) += (0,6.006507662013547)
(1.1, 10.781219563659695) += (0,5.875456676713112)
};
\addplot[black!100!white,fill=black!75!white,error bars/.cd, y dir=both, y explicit]
coordinates {%
(1.01, 8.466755720176913) += (0,1.90719693519328)
(1.02, 7.621422380874755) += (0,1.8048423738273431)
(1.03, 7.102572023482344) += (0,1.7708720890330332)
(1.04, 6.718974198085197) += (0,1.7567592328812012)
(1.05, 6.422920842825744) += (0,1.7520048396749344)
(1.06, 6.185841241769393) += (0,1.7330219955052018)
(1.07, 5.976910480828468) += (0,1.7304486512524138)
(1.08, 5.802044672368523) += (0,1.7293367881477888)
(1.09, 5.635089418967113) += (0,1.7399500123798113)
(1.1, 5.5026607294148855) += (0,1.735744009910193)
};
\addplot[black!100!white,fill=black!50!white,error bars/.cd, y dir=both, y explicit]
coordinates {%
(1.01, 8.384779127380826) += (0,2.7667990646333624)
(1.02, 7.553898501019978) += (0,2.61352551159065)
(1.03, 7.050270728455452) += (0,2.5474730304371267)
(1.04, 6.678183370495728) += (0,2.520766227816842)
(1.05, 6.39276533414659) += (0,2.4792180762230114)
(1.06, 6.148508543375182) += (0,2.464025642023967)
(1.07, 5.940749661509789) += (0,2.443771935929596)
(1.08, 5.765990159717148) += (0,2.4342033359897)
(1.09, 5.613771380966534) += (0,2.4147565094766987)
(1.1, 5.4732699394599225) += (0,2.3927007393059405)
};
\addplot[black!100!white,fill=black!25!white,error bars/.cd, y dir=both, y explicit]
coordinates {%
(1.01, 8.277937059655287) += (0,2.567552991289089)
(1.02, 7.484942386009778) += (0,2.4428355760725635)
(1.03, 6.9884904978593845) += (0,2.454452997989341)
(1.04, 6.64343207332388) += (0,2.3480897727518153)
(1.05, 6.352539844983313) += (0,2.40568336974341)
(1.06, 6.126396878322534) += (0,2.373449008760671)
(1.07, 5.920926910138978) += (0,2.4584514569322833)
(1.08, 5.746208516870693) += (0,2.413662819907696)
(1.09, 5.594034202809926) += (0,2.405965797190074)
(1.1, 5.4553692155593945) += (0,2.3963798258566626)
};
\addplot[style={fill=white},error bars/.cd, y dir=both, y explicit]
coordinates {%
(1.01, 8.170424334614124) += (0,3.381000260312117)
(1.02, 7.3681231415836015) += (0,3.2396176890616433)
(1.03, 6.888286509224871) += (0,3.179997540674105)
(1.04, 6.534558928010268) += (0,3.125913436643435)
(1.05, 6.265147198470579) += (0,3.0919764712719946)
(1.06, 6.032419213042403) += (0,3.016738609650111)
(1.07, 5.839072111570392) += (0,2.989559268273606)
(1.08, 5.664432996257888) += (0,2.949499986280035)
(1.09, 5.521025345830834) += (0,2.827825563228645)
(1.1, 5.386305614519569) += (0,2.859780295892125)
};

\legend{\padme, \dAlpaca, \noDistAlg, \pAlpaca, \perObjAlg, \perReqAlg}

\end{axis}
\end{tikzpicture}}
    
    %%%%%%%% NodeJS Mutual Information main caption %%%%%%%
    \vspace*{0em} % caption vertical position adjustment
    \caption{\textmd{NodeJS dataset}}
    \label{fig:mutual-information:nodejs}
  \end{subfigure}
  %%%%%%%%%% Mutual Information for NodeJS ends %%%%%%%%%%%%%
\\
  %%%%%%%%%% Mutual Information for Unsplash starts %%%%%%%%%%%%%
  \begin{subfigure}[t]{\textwidth}
    \centering
    \hspace*{-2.5em}
    \resizebox{!}{13.5em}{\begin{tikzpicture}

%\pgfplotsset{compat=1.12}

\begin{axis}[
xlabel={$\padFactor$},
ylabel={$\mutInfo{\privDataRV}{\distortPubDataRV}$ (bits)},
%width  = 0.50*\textwidth,
%height = 5cm,
x=2.1cm,
y=1cm,
ymax=10,
major x tick style = transparent,
ybar=0pt,
bar width=8pt,
%ybar=2*\pgflinewidth,
%bar width=25pt,
ymajorgrids = true,
symbolic x coords={1.01,1.02,1.03,1.04,1.05,1.06,1.07,1.08,1.09,1.1},
xtick = data,
scaled y ticks = false,
enlarge x limits=0.06,
ymin=5,
legend cell align=left,
legend pos=north east,
legend style={legend columns=-1},
legend image code/.code={\draw [#1] (0cm,-0.1cm) rectangle (0.2cm,0.25cm); },
]

\addplot[style={fill=white,postaction={pattern=north east lines}},error bars/.cd, y dir=both, y explicit]
coordinates {%
(1.01, -0.01)
(1.02, -0.01)
(1.03, 7.347807198866242) += (0,0.6690010888203126)
(1.04, -0.01)
(1.05, -0.01)
(1.06, -0.01)
(1.07, -0.01)
(1.08, -0.01)
(1.09, -0.01)
(1.1, -0.01)
};
\addplot[black!100!white,fill=black!100!white,error bars/.cd, y dir=both, y explicit]
coordinates {%
(1.01, 12.905728977214935) += (0,0.9761497303855116)
(1.02, 12.485657287345207) += (0,0.886527609141142)
(1.03, 12.173573125579464) += (0,0.8221940252983373)
(1.04, 11.919305927593564) += (0,0.7861104712891347)
(1.05, 11.703131816864765) += (0,0.7723483097307113)
(1.06, 11.513800496420775) += (0,0.7681295305346705)
(1.07, 11.352582652609794) += (0,0.7595310054880713)
(1.08, 11.201057188580416) += (0,0.7679695135703533)
(1.09, 11.074178253528908) += (0,0.7646317321796534)
(1.1, 10.94829504092756) += (0,0.7706658710566501)
};
\addplot[black!100!white,fill=black!75!white,error bars/.cd, y dir=both, y explicit]
coordinates {%
(1.01, 8.431252356986443) += (0,0.5545895800168985)
(1.02, 7.491257190823013) += (0,0.5855584062278192)
(1.03, 6.930375973428737) += (0,0.6087828376792936)
(1.04, 6.533106313673134) += (0,0.6166408058315493)
(1.05, 6.223260928354093) += (0,0.6222291225902818)
(1.06, 5.970230336370048) += (0,0.6444795077451593)
(1.07, 5.7592664164293215) += (0,0.6501245197083811)
(1.08, 5.5757394758585095) += (0,0.6530792146373701)
(1.09, 5.413441629272037) += (0,0.6740212119783013)
(1.1, 5.270329058627385) += (0,0.6838672517594899)
};
\addplot[black!100!white,fill=black!50!white,error bars/.cd, y dir=both, y explicit]
coordinates {%
(1.01, 8.431937501532838) += (0,1.0150294508937012)
(1.02, 7.488183029894812) += (0,0.9895232156139597)
(1.03, 6.927125138806439) += (0,0.9858195110530179)
(1.04, 6.528758235078951) += (0,0.9724524312810612)
(1.05, 6.218886337267028) += (0,0.9732019035179214)
(1.06, 5.968229986403829) += (0,0.9854985909072207)
(1.07, 5.753558234408785) += (0,0.9879352881099335)
(1.08, 5.57122400599217) += (0,0.9938660159320438)
(1.09, 5.409305987711459) += (0,0.9922995033376685)
(1.1, 5.266636164962187) += (0,1.002196871203692)
};
\addplot[black!100!white,fill=black!25!white,error bars/.cd, y dir=both, y explicit]
coordinates {%
(1.01, 8.402787295271834) += (0,0.8236248975169538)
(1.02, 7.472212462477684) += (0,0.854217024644619)
(1.03, 6.918159818803513) += (0,0.8300330307859465)
(1.04, 6.522677469755046) += (0,0.7992506251323164)
(1.05, 6.21493975042396) += (0,0.8618758466268721)
(1.06, 5.9642480725700375) += (0,0.8301677937800687)
(1.07, 5.750789142306028) += (0,0.9771313122571721)
(1.08, 5.568531222291302) += (0,0.8744122735574251)
(1.09, 5.4082453216475095) += (0,0.9136827732398531)
(1.1, 5.262197002485307) += (0,1.0415837456917965)
};
\addplot[style={fill=white},error bars/.cd, y dir=both, y explicit]
coordinates {%
(1.01, 8.303639767789576) += (0,1.2216979882769206)
(1.02, 7.396466136757662) += (0,1.2041252379728489)
(1.03, 6.853125522089223) += (0,1.1835082092319782)
(1.04, 6.4668132820437085) += (0,1.1613127799726737)
(1.05, 6.167182758323941) += (0,1.1345412741696874)
(1.06, 5.925495356221929) += (0,1.1225803447938523)
(1.07, 5.714914690624754) += (0,1.1062724361274423)
(1.08, 5.5353877727317045) += (0,1.0977719399509551)
(1.09, 5.376783273024744) += (0,1.0914126122704069)
(1.1, 5.2375102124788775) += (0,1.143830704233288)
};

\legend{\padme, \dAlpaca, \noDistAlg, \pAlpaca, \perObjAlg, \perReqAlg}

\end{axis}
\end{tikzpicture}}	
	
    %%%%%%%% Unsplash Mutual Information main caption %%%%%%%
    \vspace*{0em} % caption vertical position adjustment
    \caption{\textmd{Unsplash Lite dataset}}
    \label{fig:mutual-information:unsplash}
  \end{subfigure}
  \caption{Per-algorithm mutual information.   Error bars extend to \mutInfoInf{\privDataRV}{\distortPubDataRV}.  Lower values indicate better security.}
  \label{fig:mutual-information}
\end{figure*}
%%%%%%%%%% Mutual Information for Unsplash ends %%%%%%%%%%%%%

As these figures show, \dAlpaca was not competitive with the other
algorithms in our tests.  Indeed, it performed so poorly that 
portions of its bars were clipped in both
\figref{fig:mutual-information:nodejs} and 
\figref{fig:mutual-information:unsplash} so that they would not
distort the graph so as to obscure the differences among the bars for
other algorithms.  \padme performed convincingly better than \dAlpaca
(for the \padFactor values it yielded on the two datasets) but was
nevertheless inferior to other alternatives in terms of mutual
information of the distribution it produced.

Of the remaining algorithms, \perReqAlg consistently produced the
lowest \mutInfo{\privDataRV}{\distortPubDataRV}, which is not
surprising since it is designed specifically to minimize
\mutInfo{\privDataRV}{\distortPubDataRV}.  (So is \perObjAlg, but
\perObjAlg is constrained to produce a padding scheme where only one
padded size is possible for each object, since
\distortPubDataAlg{\cdot} is constrained to be deterministic.)
However, \perReqAlg also produced the highest
\mutInfoInf{\privDataRV}{\distortPubDataRV} (of these four
algorithms), which suggests that by fine-tuning the padding scheme
to the assumed distribution for \privDataRV, there is a risk of making
security more sensitive to errors in that assumption.  On the other
end of the spectrum, \noDistAlg achieves the best
\mutInfoInf{\privDataRV}{\distortPubDataRV} in all cases; again, this
is unsurprising since it was designed to optimize this measure.  And,
while \mutInfo{\privDataRV}{\distortPubDataRV} of the distribution it
produces is generally the worst of these four algorithms, it does not
differ by much.  For this reason, plus the fact that it does not rely
on knowing the distribution for \privDataRV, \noDistAlg appears to be
an attractive choice for these datasets.

A final observation from these graphs is that \perObjAlg strictly
dominated \pAlpaca in these tests, producing a lower
\mutInfoInf{\privDataRV}{\distortPubDataRV} in all but one case
($\padFactor = 1.1$ in \figref{fig:mutual-information:unsplash}) and a
\mutInfo{\privDataRV}{\distortPubDataRV} value that is no higher than
(and often lower than) \pAlpaca's.  This dominance is perhaps
unexpected, since \perObjAlg is restricted to produce a deterministic
padding scheme \distortPubDataAlg{\cdot} whereas that produced by
\pAlpaca need not be.

\subsection{Attacker's Recall and Precision}
\label{sec:security:recall-preciaion}

One of the criticisms sometimes levied against mutual information as a
measure of privacy is that it is difficult to interpret
operationally~\cite{issa:2020:operational}.  In this section,
therefore, we evaluate the extent to which the various algorithms we
consider interfere with a network attacker attempting to achieve a
specific operational goal, namely to infer whether a request
\privDataVal is a member of a target set $\privDataSubset \subseteq
\privDataDomain$, based on the size of the object returned in response
to the request.  Specifically, for the observed size
\distortPubDataVal, the adversary returns \boolTrue iff
$\cprob{\big}{\privDataRV \in
  \privDataSubset}{\distortPubDataAlg{\objStore{\privDataRV}} =
  \distortPubDataVal} \ge \adversaryConfidence$ for a tunable
parameter $0 \le \adversaryConfidence \le 1$.  For a given setting of
\adversaryConfidence, the adversary's \textit{recall} is the fraction
of requests for some member of \privDataSubset for which the adversary
returns \boolTrue, and the adversary's \textit{precision} is the
fraction of \boolTrue declarations by the adversary for which the
requested object is in \privDataSubset.  We consider two different
sets \privDataSubset:

\paragraph{Vulnerable NodeJS Packages}
For the NodeJS packages, we envision an adversary that wants to detect
if \textit{any} vulnerable package was retrieved/installed by a
server.  If so, then the adversary might then hope to exploit the
server via a program such as Metasploit\footnote{https://www.metasploit.com/}. 
To define a subset \privDataSubset of ``vulnerable'' packages, we use 
a dataset provided by Ferenc et al. \cite{ferenc:2019:vulnerablejs}. 
This dataset lists 88 NodeJS packages that, at the time of its publication, 
contained vulnerable functions and whose source code was available 
on GitHub. Note that these packages may no longer be vulnerable.  
Still, we use this dataset as a way to define a subset \privDataSubset 
of ``vulnerable'' packages based on their vulnerability at that time.

\paragraph{Unsplash Lite \textit{Nature} Collection}
The Unsplash Lite dataset includes a table that contains user-created
collections of photos contained within the dataset. We used a
collection of 256 photos named \textit{Nature} to create the
adversary's subset \privDataSubset of interest. We envision that such
an adversary might want to send the user targeted ads, in which case
knowledge of a downloaded picture might help the adversary to tailor
the selection of ads shown to the user.

\bigskip

We allow the adversary to know the object sizes and retrieval
distribution for \privDataRV, as well as the padding algorithm
employed by the object store.  For confidence thresholds
$\adversaryConfidence \in \{0.0, 0.05, 0.10, \dots, 1.0\}$, we
calculate this adversary's recall and precision for each algorithm
using $\padFactor \in \{1.01, 1.03, 1.05, 1.07, 1.09\}$.  The
recall-precision curves that result are depicted in
\figref{fig:recall-precision-nodejs} and
\figref{fig:recall-precision-unsplash}.

%%%%%%%%%% Recall-Precision for NodeJS starts %%%%%%%%%%%%%
\begin{figure*}[t]
	
	\vspace*{-0.3em}
	%%%%%%%% NodeJS Recall-Precision legend %%%%%%%
	\begin{subfigure}[b]{.1\columnwidth}
		\setlength\figureheight{2in}
		\centering
		\hspace*{12.5em}
		\resizebox{!}{1.3em}{\newenvironment{customlegend}[1][]{%
    \begingroup
    % inits/clears the lists (which might be populated from previous
    % axes):
    \csname pgfplots@init@cleared@structures\endcsname
    \pgfplotsset{#1}%
}{%
    % draws the legend:
    \csname pgfplots@createlegend\endcsname
    \endgroup
}%

\def\addlegendimage{\csname pgfplots@addlegendimage\endcsname}

\begin{tikzpicture}

\begin{customlegend}[
    legend style={{font={\fontsize{10pt}{12}\selectfont}},{draw=none}},
    legend columns=6,
    legend cell align={left},
    legend entries={{\perReqAlg},{\perObjAlg},{\pAlpaca},{\noDistAlg},{\padme},{\dAlpaca}}]
%legend cell align={left},
%legend style={at={(0.97,0.03)}, anchor=south east, draw=white!80.0!black, nodes={scale=0.618, transform shape}}
%%]
%\addlegendimage{line width=1pt, dashed, curve_color}
\addlegendimage{line width=1pt, densely dotted, curvecolor}
\addlegendimage{line width=1pt, dash pattern=on 1pt off 3pt on 3pt off 3pt, curvecolor}
\addlegendimage{line width=1pt, densely dotted, black}
\addlegendimage{line width=1pt, solid, curvecolor}
\addlegendimage{line width=1pt, solid, black}
\addlegendimage{line width=1pt, dash pattern=on 1pt off 3pt on 3pt off 3pt, black}

\end{customlegend}

\end{tikzpicture}}
	\end{subfigure}
	
	%\vspace*{0.25em}
	%%%%%%%% NodeJS Recall-Precision %%%%%%%
	%\hspace*{-1.0em}
	%\captionsetup[subfigure]{font=normalsize,labelfont=normalsize, oneside,margin={-0.5in,0in}}
	\begin{subfigure}[b]{0.191\textwidth}
		\setlength\figureheight{1.9in}
		\centering
		\resizebox{!}{10.5em}{\begin{tikzpicture}

\pgfplotsset{every axis/.append style={
                  ylabel={Precision},
                    compat=1.3,
                    x label style={yshift=-1.5em, align=center},
                    label style={font=\small},
                    tick label style={font=\small}  
                    }}
%\draw [line width=1.5pt] (0.25,4.14) rectangle (0.5,2.78);
\begin{axis}[
xmin=0, xmax=1.0,
ymin=0, ymax=1.0,
width=1.25\figurewidth,
height=\figureheight,
xtick={0.0,0.2,0.4,0.6,0.8,1.0},
xticklabels={0.0,0.2,0.4,0.6,0.8,1.0},
ytick={0.0,0.2,0.4,0.6,0.8,1.0},
yticklabels={0.0,0.2,0.4,0.6,0.8,1.0},
tick align=outside,
tick pos=left,
minor tick num=1,
xmajorgrids,
x grid style={lightgray!92.02614379084967!black},
ymajorgrids,
y grid style={lightgray!92.02614379084967!black},
grid=both,
]
\addplot [line width=1pt, densely dotted, curvecolor]
table {%
1.0 0.016471778847454096
0.9787673403942141 0.19391219908514867
0.7837192248066039 0.3008242902084101
0.7516289881345307 0.3250831835294759
0.6623602869516084 0.3747901932341018
0.5388069134417781 0.4512166872401678
0.5366036148079325 0.45260908830332003
0.5115725937496538 0.4636760150814195
0.3051417837148541 0.5490290941317921
0.2955549170936443 0.5550706878507726
0.19261071179494052 0.6098254325848391
0.1787125527870154 0.6195572239373315
0.1787116871392433 0.6195573663429609
0.004108822984344563 0.6678621350284989
0.00024968098053456566 0.9477549511098924
0.00023957175938102237 0.9611289736089972
0.00023908113732727735 0.9615401826852206
0.00023908113390015971 0.9615401852838252
0.00023907272873060107 0.9615431913171744
0.00021573922139739242 0.9629049986243752
};
\addplot [line width=1pt, dash pattern=on 1pt off 3pt on 3pt off 3pt, curvecolor]
table {%
1.0 0.01647177884745359
0.9773493678278355 0.21655642096561828
0.7629593090073011 0.3675628556104657
0.7515180821990843 0.379826593052399
0.6818279616133909 0.433238541820604
0.6354253474607221 0.4601058076963003
0.5660372700747385 0.5052966119399384
0.565611119826076 0.5055059886185532
0.5629192381897092 0.506356909814776
0.3470478356917074 0.5775332327107694
0.19340385568035448 0.6993818226354016
0.1812690444073692 0.7163754448181967
0.18021648432632054 0.717266333055533
0.1479428510818882 0.7404313706385544
0.1479428510818882 0.7404313706385544
0.011682003167357987 0.9145353440829256
0.011682003167357987 0.9145353440829256
0.011682003167357987 0.9145353440829256
0.011682003167357987 0.9145353440829256
0.00023907272873060025 0.9731047531916664
};
\addplot [line width=1pt, densely dotted, black]
table {%
1.0 0.01647177884745364
0.9760398528681672 0.2178108452705684
0.8631205501274385 0.29700782946272075
0.74499641062177 0.3935696370019246
0.6961334609097561 0.43303072795924014
0.6598394319079796 0.4557915052244492
0.5948601787926552 0.4950830343636417
0.4980461940021979 0.5465889111054927
0.476782074241909 0.5593690762604563
0.35206496343184673 0.638837078498042
0.3500752269148013 0.6401132821604768
0.19712019128263614 0.7871173562565251
0.19477104436812792 0.7907005217753635
0.18592507698287525 0.8006262174989079
0.1796891716768267 0.8052745951482733
0.17795703670057425 0.8063117371655378
0.07664604018895482 0.8389543316443251
0.010460066513132148 0.8812299004742831
0.004257067124129189 0.9091424742227
0.00023512891920366224 0.9767322350324588
};
\addplot [line width=1pt, solid, curvecolor]
table {%
1.0 0.01647177884745359
0.9886109293131994 0.217406125977401
0.8252284487285775 0.31278675546295864
0.6981809924330638 0.4311263516342601
0.6680889281690494 0.4584927530738567
0.6249694320302843 0.488067229264913
0.5011932372983504 0.6031511304504134
0.500767087049688 0.6035730459329561
0.490829772552084 0.6099315989166287
0.4908133134900661 0.6099416077427677
0.2816803797122976 0.7699935202824181
0.2816803797122976 0.7699935202824181
0.2816803797122976 0.7699935202824181
0.2816803797122976 0.7699935202824181
0.2668536868029455 0.7765001777821505
0.13059283888841533 0.8336835896167086
0.13059283888841533 0.8336835896167086
0.005742866597440356 0.8941817690126506
0.003426532022787076 0.9143647576179741
0.00023907272873060025 0.9731047531916664
};
%\addplot [line width=1pt, solid, black]
%table {%
%};
\addplot [line width=1pt, dash pattern=on 1pt off 3pt on 3pt off 3pt, black]
table {%
1.0 0.01647177884745359
0.9999312596623078 0.9170891522789832
0.9957577329565871 0.9695780719247776
0.9957577329565871 0.9695780719247776
0.9957577329565871 0.9695780719247776
0.9957570663186004 0.9695803116253254
0.9957536743827902 0.9695881494739864
0.995750070450992 0.9695957039082312
0.995750070450992 0.9695957039082312
0.995750070450992 0.9695957039082312
0.995750070450992 0.9695957039082312
0.994694087784111 0.970535670675582
0.994694087784111 0.970535670675582
0.9945990318480372 0.9705852037542186
0.9945990318480372 0.9705852037542186
0.9476290781186136 0.9860870789804644
0.9476290781186136 0.9860870789804644
0.8971396768235588 0.998667190053948
0.8971047511230571 0.99867220967799
0.8971007666201477 0.99867265244238
0.06994341492480823 1.0
};

\end{axis}

\end{tikzpicture}}
		\hspace*{-1.2em}
		\begin{minipage}[t]{15.5em}
			\vspace*{-0.3em}\caption{$\padFactor = 1.01$}
			\label{fig:recall-precision-nodejs:c_1-01}
		\end{minipage}%
	\end{subfigure}
	%
	%\captionsetup[subfigure]{font=normalsize,labelfont=normalsize, oneside,margin={-0.8in,0in}}
	\hspace*{0.5em}
	\begin{subfigure}[b]{0.191\textwidth}
		\setlength\figureheight{1.9in}
		\centering
		\resizebox{!}{10.5em}{\begin{tikzpicture}

\pgfplotsset{every axis/.append style={
                    compat=1.3,
                    x label style={yshift=-1.5em, align=center},
                    label style={font=\small},
                    tick label style={font=\small}  
                    }}
%\draw [line width=1.5pt] (0.25,4.14) rectangle (0.5,2.78);
\begin{axis}[
xmin=0, xmax=1.0,
ymin=0, ymax=1.0,
width=1.25\figurewidth,
height=\figureheight,
xtick={0.0,0.2,0.4,0.6,0.8,1.0},
xticklabels={0.0,0.2,0.4,0.6,0.8,1.0},
ytick={0.0,0.2,0.4,0.6,0.8,1.0},
yticklabels={},
tick align=outside,
tick pos=left,
minor tick num=1,
xmajorgrids,
x grid style={lightgray!92.02614379084967!black},
ymajorgrids,
y grid style={lightgray!92.02614379084967!black},
grid=both,
]
\addplot [line width=1pt, densely dotted, curvecolor]
table {%
1.0 0.016471778847454238
0.8299136263823129 0.1403929408807079
0.674200706335155 0.20113216858560126
0.5378598232210653 0.2500689347184802
0.423585851074464 0.2867873533895882
0.1923425862923869 0.4264082680312671
0.1818725849234981 0.43824132153413853
0.1687874043874461 0.453830228417352
0.1593799042552257 0.4576129057197574
0.13664991758386302 0.4684227000886369
0.001256473448788544 0.5224909136409339
5.548168403660205e-05 0.5570623790261451
4.521182920846754e-12 0.8181796936686528
4.521182920846754e-12 0.8181796936686528
4.521182920846754e-12 0.8181796936686528
4.421382467939041e-12 0.8202975183452553
4.421382467939041e-12 0.8202975183452553
8.386941082194781e-14 0.9149585639647962
4.2494011049333267e-14 0.9348285801704582
6.853865786124552e-15 0.9529299543463976
};
\addplot [line width=1pt, dash pattern=on 1pt off 3pt on 3pt off 3pt, curvecolor]
table {%
1.0 0.01647177884745359
0.8352802921431404 0.14638941122365948
0.629862036356888 0.24459768907364546
0.5839701843938295 0.2669593512123392
0.5194048395000979 0.29243641343913285
0.3916500087839133 0.3414378232013449
0.1836002467970452 0.5170554837089152
0.16877355388769316 0.5475031615519096
0.16877355388769316 0.5475031615519096
0.16877355388769316 0.5475031615519096
0.16877355388769316 0.5475031615519096
};
\addplot [line width=1pt, densely dotted, black]
table {%
1.0 0.016471778847454606
0.8614991938644677 0.14998947522268286
0.6661640556609931 0.2406028174706309
0.5661865308628751 0.2960147139944625
0.5200496778787235 0.3185761017503884
0.4227480752888496 0.3548594378615855
0.2847865584168369 0.4229805650406505
0.20088679470368576 0.4934850096474819
0.17857400166792672 0.5166415944866357
0.12489439254513002 0.5646105681332977
0.09419768510188377 0.6053970565577991
0.08946352952654314 0.6104678458519771
0.08572107530061931 0.612221865654176
0.00042557477841648396 0.7250645855393131
0.00010758112663341626 0.8449437672320765
0.00010757127920810136 0.8449595930896179
0.00010288650786879459 0.8488095283490301
1.7729568906857884e-05 0.9368159718822668
1.7674560697838996e-05 0.937086691449528
3.5459137813715782e-06 0.953759072867
};
\addplot [line width=1pt, solid, curvecolor]
table {%
1.0 0.01647177884745359
0.920926742863014 0.13576160031840234
0.6324454577807269 0.2621346237720049
0.5298377316553914 0.3239985480382026
0.5298377316553914 0.3239985480382026
0.39214558644160297 0.3740745810485766
0.18409582445473488 0.5726930729925949
0.17196101318174964 0.6069628354808466
0.17196101318174964 0.6069628354808466
0.17196101318174964 0.6069628354808466
0.17196101318174964 0.6069628354808466
0.17172194045301906 0.6070540619975877
0.17172194045301906 0.6070540619975877
};
%\addplot [line width=1pt, solid, black]
%table {%
%};
\addplot [line width=1pt, dash pattern=on 1pt off 3pt on 3pt off 3pt, black]
table {%
1.0 0.01647177884745359
0.9999312596623078 0.9170891522789832
0.9957577329565871 0.9695780719247776
0.9957577329565871 0.9695780719247776
0.9957577329565871 0.9695780719247776
0.9957570663186004 0.9695803116253254
0.9957536743827902 0.9695881494739864
0.995750070450992 0.9695957039082312
0.995750070450992 0.9695957039082312
0.995750070450992 0.9695957039082312
0.995750070450992 0.9695957039082312
0.994694087784111 0.970535670675582
0.994694087784111 0.970535670675582
0.9945990318480372 0.9705852037542186
0.9945990318480372 0.9705852037542186
0.9476290781186136 0.9860870789804644
0.9476290781186136 0.9860870789804644
0.8971396768235588 0.998667190053948
0.8971047511230571 0.99867220967799
0.8971007666201477 0.99867265244238
0.06994341492480823 1.0
};

\end{axis}

\end{tikzpicture}}
		\hspace*{-3.0em}
		\begin{minipage}[t]{15.5em}
			\vspace*{-0.3em}\caption{$\padFactor = 1.03$}
			\label{fig:recall-precision-nodejs:c_1-03}
		\end{minipage}%
	\end{subfigure}
	\hspace*{-1.0em}
	\begin{subfigure}[b]{0.191\textwidth}
		\setlength\figureheight{1.9in}
		\centering
		\resizebox{!}{10.5em}{\begin{tikzpicture}

\pgfplotsset{every axis/.append style={
                    compat=1.3,
                    x label style={yshift=-1.5em, align=center},
                    label style={font=\small},
                    tick label style={font=\small}  
                    }}
%\draw [line width=1.5pt] (0.25,4.14) rectangle (0.5,2.78);
\begin{axis}[
xmin=0, xmax=1.0,
ymin=0, ymax=1.0,
width=1.25\figurewidth,
height=\figureheight,
xtick={0.0,0.2,0.4,0.6,0.8,1.0},
xticklabels={0.0,0.2,0.4,0.6,0.8,1.0},
ytick={0.0,0.2,0.4,0.6,0.8,1.0},
yticklabels={},
tick align=outside,
tick pos=left,
minor tick num=1,
xmajorgrids,
x grid style={lightgray!92.02614379084967!black},
ymajorgrids,
y grid style={lightgray!92.02614379084967!black},
grid=both,
]
\addplot [line width=1pt, densely dotted, curvecolor]
table {%
1.0 0.01647177884745387
0.7134238775700285 0.12770865913620222
0.5620016924706243 0.16692075991044045
0.298387005905829 0.2224746741963047
0.18464420401964884 0.2712239246038322
0.16314949400480375 0.277437224870771
0.024313565371999416 0.36029012527772986
0.023805186100778572 0.3614759020899471
6.80875458183023e-05 0.4023867870616633
4.385989267649086e-14 0.4636752399482574
};
\addplot [line width=1pt, dash pattern=on 1pt off 3pt on 3pt off 3pt, curvecolor]
table {%
1.0 0.01647177884745359
0.7234018435319269 0.13680724182702628
0.6057486139874703 0.16332955473261526
0.19552961607835606 0.3186074179437653
0.17849450732314984 0.3494949660743568
0.17849450732314984 0.3494949660743568
0.17849450732314984 0.3494949660743568
0.042210947742036216 0.4149649292387783
0.042210947742036216 0.4149649292387783
};
\addplot [line width=1pt, densely dotted, black]
table {%
1.0 0.01647177884745338
0.7135602893055492 0.13942771502725032
0.5693159476994839 0.18690976574829632
0.3649255371302873 0.23772827572563576
0.24685475326679995 0.29082236156678765
0.17666467919729004 0.3338600069046126
0.10961755306049664 0.3915116958309759
0.09809486838567172 0.397928676340165
0.05284091972623815 0.4181562665122489
0.0011054690381025812 0.4611107484863422
3.114248189328118e-07 0.5066174920658569
};
\addplot [line width=1pt, solid, curvecolor]
table {%
1.0 0.01647177884745359
0.7881576655439061 0.13436305208147006
0.5918782528597372 0.1841964869931356
0.30965315839553803 0.2772525160501745
0.18189832767935352 0.4529425151219482
0.18189832767935352 0.4529425151219482
0.17871086838529704 0.4576705418674333
0.17871086838529704 0.4576705418674333
0.17871086838529704 0.4576705418674333
0.1364999206432608 0.4721269270260594
0.00023907272873060025 0.5116234579035688
};
%\addplot [line width=1pt, solid, black]
%table {%
%};
\addplot [line width=1pt, dash pattern=on 1pt off 3pt on 3pt off 3pt, black]
table {%
1.0 0.01647177884745359
0.9999312596623078 0.9170891522789832
0.9957577329565871 0.9695780719247776
0.9957577329565871 0.9695780719247776
0.9957577329565871 0.9695780719247776
0.9957570663186004 0.9695803116253254
0.9957536743827902 0.9695881494739864
0.995750070450992 0.9695957039082312
0.995750070450992 0.9695957039082312
0.995750070450992 0.9695957039082312
0.995750070450992 0.9695957039082312
0.994694087784111 0.970535670675582
0.994694087784111 0.970535670675582
0.9945990318480372 0.9705852037542186
0.9945990318480372 0.9705852037542186
0.9476290781186136 0.9860870789804644
0.9476290781186136 0.9860870789804644
0.8971396768235588 0.998667190053948
0.8971047511230571 0.99867220967799
0.8971007666201477 0.99867265244238
0.06994341492480823 1.0
};

\end{axis}

\end{tikzpicture}}
		\hspace*{-2.9em}
		\begin{minipage}[t]{15.5em}
			\vspace*{-0.3em}\caption{$\padFactor = 1.05$}
			\label{fig:recall-precision-nodejs:c_1-05}
		\end{minipage}%
	\end{subfigure}
	\hspace*{-1.0em}
	\begin{subfigure}[b]{0.191\textwidth}
		\setlength\figureheight{1.9in}
		\centering
		\resizebox{!}{10.5em}{\begin{tikzpicture}

\pgfplotsset{every axis/.append style={
                    compat=1.3,
                    x label style={yshift=-1.5em, align=center},
                    label style={font=\small},
                    tick label style={font=\small}  
                    }}
%\draw [line width=1.5pt] (0.25,4.14) rectangle (0.5,2.78);
\begin{axis}[
xmin=0, xmax=1.0,
ymin=0, ymax=1.0,
width=1.25\figurewidth,
height=\figureheight,
xtick={0.0,0.2,0.4,0.6,0.8,1.0},
xticklabels={0.0,0.2,0.4,0.6,0.8,1.0},
ytick={0.0,0.2,0.4,0.6,0.8,1.0},
yticklabels={},
tick align=outside,
tick pos=left,
minor tick num=1,
xmajorgrids,
x grid style={lightgray!92.02614379084967!black},
ymajorgrids,
y grid style={lightgray!92.02614379084967!black},
grid=both,
]
\addplot [line width=1pt, densely dotted, curvecolor]
table {%
1.0 0.01647177884745364
0.6736683005290948 0.11214485534957344
0.5005960718852713 0.13903547400108174
0.18335239499188166 0.30489936752100383
0.1799261091778133 0.30965886201782183
0.15627081395190842 0.33652573745055137
0.15626516713190788 0.33652898975204
0.023047570256647897 0.4188308435179883
0.018550215438757383 0.4374011613238138
0.0005677997804397608 0.4727583009667102
1.5289420741964001e-25 0.5058105350351483
};
\addplot [line width=1pt, dash pattern=on 1pt off 3pt on 3pt off 3pt, curvecolor]
table {%
1.0 0.01647177884745359
0.7252446888983523 0.11095298570395244
0.5298541907174094 0.1529004150655365
0.1938105252856261 0.2579069831704326
0.14621462147415198 0.297431746905106
0.14621462147415198 0.297431746905106
0.0099537735596218 0.4087212435893953
0.0099537735596218 0.4087212435893953
0.0099537735596218 0.4087212435893953
};
\addplot [line width=1pt, densely dotted, black]
table {%
1.0 0.016471778847453652
0.6656279087056844 0.1191131518654653
0.4521457755872416 0.1658522427884648
0.22852869081825994 0.2621256867291777
0.1762288556145194 0.30935422658899625
0.1602631227508484 0.3242436220769048
0.1134551829930973 0.3375859283433596
0.04986379904165551 0.3838244525675898
0.017997412319724106 0.4382382562388043
0.010741657327917388 0.4622902759826733
3.4061202080154074e-06 0.505071595214437
};
\addplot [line width=1pt, solid, curvecolor]
table {%
1.0 0.01647177884745359
0.6022833020238673 0.13623176525431402
0.5315505826143986 0.1622184514337932
0.1816592549506229 0.38586134474673456
0.17847179565656646 0.39262383584870497
0.17847179565656646 0.39262383584870497
0.17847179565656646 0.39262383584870497
0.17847179565656646 0.39262383584870497
};
%\addplot [line width=1pt, solid, black]
%table {%
%};
\addplot [line width=1pt, dash pattern=on 1pt off 3pt on 3pt off 3pt, black]
table {%
1.0 0.01647177884745359
0.9999312596623078 0.9170891522789832
0.9957577329565871 0.9695780719247776
0.9957577329565871 0.9695780719247776
0.9957577329565871 0.9695780719247776
0.9957570663186004 0.9695803116253254
0.9957536743827902 0.9695881494739864
0.995750070450992 0.9695957039082312
0.995750070450992 0.9695957039082312
0.995750070450992 0.9695957039082312
0.995750070450992 0.9695957039082312
0.994694087784111 0.970535670675582
0.994694087784111 0.970535670675582
0.9945990318480372 0.9705852037542186
0.9945990318480372 0.9705852037542186
0.9476290781186136 0.9860870789804644
0.9476290781186136 0.9860870789804644
0.8971396768235588 0.998667190053948
0.8971047511230571 0.99867220967799
0.8971007666201477 0.99867265244238
0.06994341492480823 1.0
};

\end{axis}

\end{tikzpicture}}
		\hspace*{-3.0em}
		\begin{minipage}[t]{15.5em}
			\vspace*{-0.3em}\caption{$\padFactor = 1.07$}
			\label{fig:recall-precision-nodejs:c_1-07}
		\end{minipage}%
	\end{subfigure}
	\hspace*{-1.0em}
	\begin{subfigure}[b]{0.191\textwidth}
		\setlength\figureheight{1.9in}
		\centering
		\resizebox{!}{10.5em}{\begin{tikzpicture}

\pgfplotsset{every axis/.append style={
                    compat=1.3,
                    x label style={yshift=-1.5em, align=center},
                    label style={font=\small},
                    tick label style={font=\small}  
                    }}
%\draw [line width=1.5pt] (0.25,4.14) rectangle (0.5,2.78);
\begin{axis}[
xmin=0, xmax=1.0,
ymin=0, ymax=1.0,
width=1.25\figurewidth,
height=\figureheight,
xtick={0.0,0.2,0.4,0.6,0.8,1.0},
xticklabels={0.0,0.2,0.4,0.6,0.8,1.0},
ytick={0.0,0.2,0.4,0.6,0.8,1.0},
yticklabels={},
tick align=outside,
tick pos=left,
minor tick num=1,
xmajorgrids,
x grid style={lightgray!92.02614379084967!black},
ymajorgrids,
y grid style={lightgray!92.02614379084967!black},
grid=both,
]
\addplot [line width=1pt, densely dotted, curvecolor]
table {%
1.0 0.016471778847453336
0.569274020281592 0.09933502843361892
0.1866659116824209 0.20132273925657776
0.18241381174862464 0.2050844514872749
0.14242991952760586 0.2128125137839492
1.2741689850007602e-05 0.25229722474799315
6.325724809329133e-11 0.3227211626785837
3.886998778835514e-27 0.4483658443865262
3.886998778835514e-27 0.4483658443865262
2.411769820279982e-31 0.5952193254414182
2.411769820279982e-31 0.5952193254414182
2.411769820279982e-31 0.5952193254414182
6.692961708484821e-32 0.6231596309683504
};
\addplot [line width=1pt, dash pattern=on 1pt off 3pt on 3pt off 3pt, curvecolor]
table {%
1.0 0.01647177884745359
0.5427331104628875 0.12262520908499415
0.20574634384247933 0.263457846718164
0.18357257238100486 0.3186965892772252
0.18357257238100486 0.3186965892772252
0.1413616246389686 0.3548740316273283
0.1413616246389686 0.3548740316273283
0.1413616246389686 0.3548740316273283
};
\addplot [line width=1pt, densely dotted, black]
table {%
1.0 0.016471778847453468
0.589410802428772 0.10969371638800987
0.2897817027069779 0.17707925172156647
0.1928171682693015 0.23456749428335344
0.1596877830100577 0.249868350689324
0.08282009473263739 0.2910408054987743
0.033508937017812435 0.3524543248407277
0.013598308038817244 0.39562739889298615
0.008512822290954012 0.4188410534043602
};
\addplot [line width=1pt, solid, curvecolor]
table {%
1.0 0.01647177884745359
0.5974608734776705 0.12183503418098947
0.31523577901347144 0.1878453664375629
0.18357252385180275 0.3634667414896838
0.18357252385180275 0.3634667414896838
0.18357252385180275 0.3634667414896838
0.18357252385180275 0.3634667414896838
0.18357252385180275 0.3634667414896838
};
\addplot [line width=1pt, solid, black]
table {%
1.0 0.01647177884745359
0.7483207229477992 0.13718472340072632
0.5797134071189882 0.202285609509966
0.5338215551559298 0.2163803715594049
0.22261721445276367 0.4108928859431677
0.21992533281639687 0.4145543283747796
0.19084567965828228 0.4453444754895136
0.17871086838529704 0.4576034634660433
0.17871086838529704 0.4576034634660433
0.04245002047076682 0.773806102276098
0.04245002047076682 0.773806102276098
0.04245002047076682 0.773806102276098
0.042210947742036216 0.7755567544260519
0.042210947742036216 0.7755567544260519
0.032273633244432345 0.8046167072604542
0.032273633244432345 0.8046167072604542
0.032273633244432345 0.8046167072604542
};
\addplot [line width=1pt, dash pattern=on 1pt off 3pt on 3pt off 3pt, black]
table {%
1.0 0.01647177884745359
0.9999312596623078 0.9170891522789832
0.9957577329565871 0.9695780719247776
0.9957577329565871 0.9695780719247776
0.9957577329565871 0.9695780719247776
0.9957570663186004 0.9695803116253254
0.9957536743827902 0.9695881494739864
0.995750070450992 0.9695957039082312
0.995750070450992 0.9695957039082312
0.995750070450992 0.9695957039082312
0.995750070450992 0.9695957039082312
0.994694087784111 0.970535670675582
0.994694087784111 0.970535670675582
0.9945990318480372 0.9705852037542186
0.9945990318480372 0.9705852037542186
0.9476290781186136 0.9860870789804644
0.9476290781186136 0.9860870789804644
0.8971396768235588 0.998667190053948
0.8971047511230571 0.99867220967799
0.8971007666201477 0.99867265244238
0.06994341492480823 1.0
};

\end{axis}

\end{tikzpicture}}
		\hspace*{-3.0em}
		\begin{minipage}[t]{15.5em}
			\vspace*{-0.3em}\caption{$\padFactor = 1.09$}
			\label{fig:recall-precision-nodejs:c_1-09}
		\end{minipage}%
	\end{subfigure}
	
	\vspace*{-2.4em}
	\begin{subfigure}[b]{.43\columnwidth}
		\setlength\figureheight{2in}
		\begin{minipage}[b]{1\textwidth}
			\centering
			\hspace*{24.5em}
			\resizebox{!}{1.2em}{\begin{tikzpicture}
\node at (0,0)[
  scale=1,
  anchor=south,
  text=black,
  rotate=0
]{Recall};
\end{tikzpicture}}\vspace*{-0.6em}
		\end{minipage}
	\end{subfigure}
	%%%%%%%% NodeJS Recall-Precision main caption %%%%%%%
	\vspace*{1.0em} % caption vertical position adjustment
	\caption{Adversary's recall and precision for detecting vulnerable NodeJS packages.}
	\label{fig:recall-precision-nodejs}
\end{figure*}
%%%%%%%%%% Recall-Precision for NodeJS ends %%%%%%%%%%%%%

%%%%%%%%%% Recall-Precision for Unsplash starts %%%%%%%%%%%%%
\begin{figure*}[t]
	
	\vspace*{-0.3em}
	%%%%%%%% Unsplash Recall-Precision legend %%%%%%%
	\begin{subfigure}[b]{.1\columnwidth}
		\setlength\figureheight{2in}
		\centering
		\hspace*{12.5em}
		\resizebox{!}{1.3em}{\newenvironment{customlegend}[1][]{%
    \begingroup
    % inits/clears the lists (which might be populated from previous
    % axes):
    \csname pgfplots@init@cleared@structures\endcsname
    \pgfplotsset{#1}%
}{%
    % draws the legend:
    \csname pgfplots@createlegend\endcsname
    \endgroup
}%

\def\addlegendimage{\csname pgfplots@addlegendimage\endcsname}

\begin{tikzpicture}

\begin{customlegend}[
    legend style={{font={\fontsize{10pt}{12}\selectfont}},{draw=none}},
    legend columns=6,
    legend cell align={left},
    legend entries={{\perReqAlg},{\perObjAlg},{\pAlpaca},{\noDistAlg},{\padme},{\dAlpaca}}]
%legend cell align={left},
%legend style={at={(0.97,0.03)}, anchor=south east, draw=white!80.0!black, nodes={scale=0.618, transform shape}}
%%]
%\addlegendimage{line width=1pt, dashed, curve_color}
\addlegendimage{line width=1pt, densely dotted, curvecolor}
\addlegendimage{line width=1pt, dash pattern=on 1pt off 3pt on 3pt off 3pt, curvecolor}
\addlegendimage{line width=1pt, densely dotted, black}
\addlegendimage{line width=1pt, solid, curvecolor}
\addlegendimage{line width=1pt, solid, black}
\addlegendimage{line width=1pt, dash pattern=on 1pt off 3pt on 3pt off 3pt, black}

\end{customlegend}

\end{tikzpicture}}
	\end{subfigure}
	
	%\vspace*{0.25em}
	%%%%%%%% Unsplash Recall-Precision %%%%%%%
	%\hspace*{-1.0em}
	%\captionsetup[subfigure]{font=normalsize,labelfont=normalsize, oneside,margin={-0.5in,0in}}
	\begin{subfigure}[b]{0.191\textwidth}
		\setlength\figureheight{1.9in}
		\centering
		\resizebox{!}{10.5em}{\begin{tikzpicture}

\pgfplotsset{every axis/.append style={
                  ylabel={Precision},
                    compat=1.3,
                    x label style={yshift=-1.5em, align=center},
                    label style={font=\small},
                    tick label style={font=\small}  
                    }}
%\draw [line width=1.5pt] (0.25,4.14) rectangle (0.5,2.78);
\begin{axis}[
xmin=0, xmax=1.0,
ymin=0, ymax=1.0,
width=1.25\figurewidth,
height=\figureheight,
xtick={0.0,0.2,0.4,0.6,0.8,1.0},
xticklabels={0.0,0.2,0.4,0.6,0.8,1.0},
ytick={0.0,0.2,0.4,0.6,0.8,1.0},
yticklabels={0.0,0.2,0.4,0.6,0.8,1.0},
tick align=outside,
tick pos=left,
minor tick num=1,
xmajorgrids,
x grid style={lightgray!92.02614379084967!black},
ymajorgrids,
y grid style={lightgray!92.02614379084967!black},
grid=both,
]
\addplot [line width=1pt, densely dotted, curvecolor]
table {%
1.0 0.022950723488767158
0.698846197034616 0.10883481442046956
0.4076288654551399 0.17907301007722318
0.2518997681464976 0.26023359113957284
0.1876654359196966 0.311895579736168
0.13890420167405956 0.3591272161236139
0.08782626257068486 0.4624480206257793
0.08571045186273486 0.4686286724079335
0.08571045017720266 0.468628675332105
0.06315489323929094 0.4774733172863467
0.02203710577866832 0.5173015780227662
6.93421838949847e-05 0.5665172682446775
1.4271668148360017e-07 0.6266770732654935
};
\addplot [line width=1pt, dash pattern=on 1pt off 3pt on 3pt off 3pt, curvecolor]
table {%
1.0 0.022950723488767286
0.6908439955176687 0.11475084303567501
0.3562797898673091 0.2260067896486075
0.2805187143575065 0.2864375663348727
0.22622351101386215 0.3433686848706949
0.21216784395493946 0.3550898015902038
0.13724954661655786 0.4313836101600134
0.08571054003352385 0.5434896632260234
0.08571054003352385 0.5434896632260234
0.08571054003352385 0.5434896632260234
0.0750899743733008 0.5530293451648869
0.02202616358012168 0.6889094991701651
0.02202616358012168 0.6889094991701651
0.02202616358012168 0.6889094991701651
};
\addplot [line width=1pt, densely dotted, black]
table {%
1.0 0.022950723488767283
0.6943468270258167 0.11970601660756627
0.4432110351380246 0.19166776108725586
0.2948682759489062 0.28028163166853576
0.2221869403389826 0.34862682726950306
0.18491053349079392 0.3894845365471564
0.1385801167761226 0.453526894100745
0.11010205036847412 0.5091795228755169
0.09228213240437656 0.5462434424650471
0.08028467562444068 0.5752573244735687
0.059533727221275236 0.6192973939507751
0.05016889625024492 0.6415437549296416
0.028016351072365952 0.6957572759444717
0.02643868620110711 0.7008082716675038
0.025841802306080627 0.7011675727669543
};
\addplot [line width=1pt, solid, curvecolor]
table {%
1.0 0.02295072348876729
0.7079367853943422 0.11445893034362
0.4057711576338534 0.2111583974270468
0.2772958682019797 0.31802349100203703
0.2327550418423756 0.377722394871678
0.2186993747834529 0.39414114139453776
0.17919024624729618 0.4369389083187135
0.1284986999366356 0.5234235436970356
0.08444012997512246 0.6850255124650232
0.08444012997512246 0.6850255124650232
0.08444012997512246 0.6850255124650232
0.08444012997512246 0.6850255124650232
0.0738195643148994 0.7086948052920993
0.0738195643148994 0.7086948052920993
0.05179340073477773 0.7174575721543616
};
%\addplot [line width=1pt, solid, black]
%table {%
%};
\addplot [line width=1pt, dash pattern=on 1pt off 3pt on 3pt off 3pt, black]
table {%
1.0 0.02295072348876726
0.9987061794368368 0.7366469508839867
0.996109174108606 0.7549746506329652
0.989743077513742 0.7791024239222576
0.9804912460319194 0.8047101258996722
0.9718558911754884 0.8247508019633865
0.9643067837399998 0.8379302769015452
0.9478226669871304 0.8603700985739502
0.9401143681192624 0.8697485907683777
0.9359925585808864 0.8737852478458636
0.9205915678483249 0.8860505080244429
0.9026903352072142 0.8981835630861258
0.8650790182570723 0.9199832696817754
0.8404339995442756 0.9327434885255671
0.8248550889755252 0.9397360284063732
0.7738045815918521 0.9590855719141049
0.7385233902156577 0.9706308279899408
0.7164051453168067 0.9758248299319728
0.6530531980310208 0.985260044832068
0.5855560307020969 0.9936649186927272
0.3879729312136944 1.0
};

\end{axis}

\end{tikzpicture}}
		\hspace*{-1.2em}
		\begin{minipage}[t]{15.5em}
			\vspace*{-0.3em}\caption{$\padFactor = 1.01$}
			\label{fig:recall-precision-unsplash:c_1-01}
		\end{minipage}%
	\end{subfigure}
	%
	%\captionsetup[subfigure]{font=normalsize,labelfont=normalsize, oneside,margin={-0.8in,0in}}
	\hspace*{0.5em}
	\begin{subfigure}[b]{0.191\textwidth}
		\setlength\figureheight{1.9in}
		\centering
		\resizebox{!}{10.5em}{\begin{tikzpicture}

\pgfplotsset{every axis/.append style={
                    compat=1.3,
                    x label style={yshift=-1.5em, align=center},
                    label style={font=\small},
                    tick label style={font=\small}  
                    }}
%\draw [line width=1.5pt] (0.25,4.14) rectangle (0.5,2.78);
\begin{axis}[
xmin=0, xmax=1.0,
ymin=0, ymax=1.0,
width=1.25\figurewidth,
height=\figureheight,
xtick={0.0,0.2,0.4,0.6,0.8,1.0},
xticklabels={0.0,0.2,0.4,0.6,0.8,1.0},
ytick={0.0,0.2,0.4,0.6,0.8,1.0},
yticklabels={},
tick align=outside,
tick pos=left,
minor tick num=1,
xmajorgrids,
x grid style={lightgray!92.02614379084967!black},
ymajorgrids,
y grid style={lightgray!92.02614379084967!black},
grid=both,
]
\addplot [line width=1pt, densely dotted, curvecolor]
table {%
1.0 0.02295072348876724
0.4206758886227688 0.0944167506037388
0.19928058603674784 0.17837014290079242
0.13797448118843014 0.226673349731445
0.08386370752504901 0.2841083671035898
0.07889607405548348 0.28961595630820064
0.023354604889812967 0.31101969923498496
3.004577290900808e-06 0.3584740147689435
};
\addplot [line width=1pt, dash pattern=on 1pt off 3pt on 3pt off 3pt, curvecolor]
table {%
1.0 0.022950723488767283
0.4337186182183667 0.09838564695377996
0.20752632121085374 0.1866077994397696
0.1560294535363909 0.22484358062063967
0.08658297151097641 0.29570070144766863
0.07596240585075335 0.3079005800970413
0.02202616358012168 0.3372846115240303
};
\addplot [line width=1pt, densely dotted, black]
table {%
1.0 0.0229507234887674
0.4657566763478558 0.09575384551521003
0.20753911495856306 0.19423418932319053
0.14646911670656712 0.2556086903356776
0.1176602810619174 0.2891992413522592
0.08730094740762398 0.3226159793015629
0.06141376808582207 0.3543635715821947
0.03567190888250836 0.3776557840988853
};
\addplot [line width=1pt, solid, curvecolor]
table {%
1.0 0.0229507234887673
0.4616676395031978 0.10717852023727907
0.20121484912710036 0.20724254221976465
0.1505233028164398 0.2646568922037545
0.10710930208603203 0.32758003465344177
0.08960448732555272 0.3560761110904378
0.07898392166532966 0.3771846794809687
0.05695775808520799 0.3935746869843737
};
\addplot [line width=1pt, solid, black]
table {%
1.0 0.02295072348876731
0.47423595915959404 0.11350824338484013
0.243929656115292 0.22215604865131472
0.19280423511638145 0.2832462449586952
0.13940487375495134 0.376447681181409
0.13174807799755908 0.3938544227907042
0.08833407726715134 0.4968790876928074
0.07771351160692827 0.5450783781416937
0.05568734802680659 0.6514697827277707
0.05568734802680659 0.6514697827277707
0.05568734802680659 0.6514697827277707
0.05568734802680659 0.6514697827277707
0.05568734802680659 0.6514697827277707
0.05568734802680659 0.6514697827277707
};
\addplot [line width=1pt, dash pattern=on 1pt off 3pt on 3pt off 3pt, black]
table {%
1.0 0.02295072348876721
0.9880668853728044 0.466814531232949
0.9619844616676396 0.5501094185643859
0.9405029824983068 0.5959526533825299
0.9247617590965418 0.6226448403482852
0.8946964281812535 0.6612819687599852
0.8708301989268625 0.6884165434944072
0.8515852033124304 0.7079358859918624
0.83032222218754 0.7246201328241196
0.7896940715237742 0.7514539373633685
0.7535825875786983 0.7728539393464061
0.6993139161404507 0.8028514245476226
0.6199725941024258 0.8459924822438266
0.5968570617007264 0.8578295686503219
0.5513657688477975 0.877843266855845
0.5116943274786264 0.8931598203121383
0.4315617303796561 0.9165572835872344
0.29319160093517177 0.9688997312895524
0.2679191806947614 0.9771516393442622
0.2024587272801051 0.9943964922501418
0.1561948877700401 1.0
};

\end{axis}

\end{tikzpicture}}
		\hspace*{-3.0em}
		\begin{minipage}[t]{15.5em}
			\vspace*{-0.3em}\caption{$\padFactor = 1.03$}
			\label{fig:recall-precision-unsplash:c_1-03}
		\end{minipage}%
	\end{subfigure}
	\hspace*{-1.0em}
	\begin{subfigure}[b]{0.191\textwidth}
		\setlength\figureheight{1.9in}
		\centering
		\resizebox{!}{10.5em}{\begin{tikzpicture}

\pgfplotsset{every axis/.append style={
                    compat=1.3,
                    x label style={yshift=-1.5em, align=center},
                    label style={font=\small},
                    tick label style={font=\small}  
                    }}
%\draw [line width=1.5pt] (0.25,4.14) rectangle (0.5,2.78);
\begin{axis}[
xmin=0, xmax=1.0,
ymin=0, ymax=1.0,
width=1.25\figurewidth,
height=\figureheight,
xtick={0.0,0.2,0.4,0.6,0.8,1.0},
xticklabels={0.0,0.2,0.4,0.6,0.8,1.0},
ytick={0.0,0.2,0.4,0.6,0.8,1.0},
yticklabels={},
tick align=outside,
tick pos=left,
minor tick num=1,
xmajorgrids,
x grid style={lightgray!92.02614379084967!black},
ymajorgrids,
y grid style={lightgray!92.02614379084967!black},
grid=both,
]
\addplot [line width=1pt, densely dotted, curvecolor]
table {%
1.0 0.02295072348876748
0.2707491670482637 0.0975645402895514
0.15070461469894428 0.19125506701912512
0.11904307383572255 0.215733030253376
0.06887279128770663 0.26876984667684845
0.06887086691981176 0.2687723001405075
0.0002886864294023391 0.3113412913046528
};
\addplot [line width=1pt, dash pattern=on 1pt off 3pt on 3pt off 3pt, curvecolor]
table {%
1.0 0.0229507234887673
0.2767324553873814 0.1098767834303334
0.17652144870446268 0.18558920437390886
0.15577505938464706 0.2088909073405132
0.07985635314278221 0.2446893979245373
0.022026163580121667 0.2806100131228377
};
\addplot [line width=1pt, densely dotted, black]
table {%
1.0 0.02295072348876721
0.288430855672733 0.10258768379794224
0.15753068068577403 0.20302633984924345
0.13748605974547734 0.2284356482947129
0.0914432984871157 0.2729190020007849
0.06201314329218709 0.3164791353750952
0.05016918568232745 0.3254417319165738
0.0015343929321231901 0.3554882747044945
};
\addplot [line width=1pt, solid, curvecolor]
table {%
1.0 0.022950723488767286
0.3245569952148928 0.09143969727056313
0.1644853278563157 0.1791077243095162
0.1437389385365001 0.19957959975296066
0.0865829715109764 0.2313450263133753
0.022026163580121667 0.2645112922875082
};
%\addplot [line width=1pt, solid, black]
%table {%
%};
\addplot [line width=1pt, dash pattern=on 1pt off 3pt on 3pt off 3pt, black]
table {%
1.0 0.022950723488767217
0.9796484678604984 0.3728202094499565
0.9381572499211848 0.4497002686472702
0.9089705932846186 0.4904869060211418
0.868147355081172 0.5310746276066579
0.8352805670960675 0.5615177038874272
0.7764671363334156 0.6109794347440579
0.7472258551857389 0.6314023690644807
0.6786705330415865 0.6809437534352544
0.641199679120015 0.7061538223550664
0.6129759745793134 0.721126093374755
0.5185895014811047 0.7743144236684626
0.4832240322877682 0.7925845906524065
0.4445592426233501 0.8129375469118784
0.34447465266614435 0.8649942978519949
0.26443413688590345 0.9158639552860032
0.22320199519928574 0.9497732057352716
0.21472114967428174 0.9564727720191044
0.1483164725675705 0.9945891636751824
0.1483164725675705 0.9945891636751824
0.11284331505233025 1.0
};

\end{axis}

\end{tikzpicture}}
		\hspace*{-2.9em}
		\begin{minipage}[t]{15.5em}
			\vspace*{-0.3em}\caption{$\padFactor = 1.05$}
			\label{fig:recall-precision-unsplash:c_1-05}
		\end{minipage}%
	\end{subfigure}
	\hspace*{-1.0em}
	\begin{subfigure}[b]{0.191\textwidth}
		\setlength\figureheight{1.9in}
		\centering
		\resizebox{!}{10.5em}{\begin{tikzpicture}

\pgfplotsset{every axis/.append style={
                    compat=1.3,
                    x label style={yshift=-1.5em, align=center},
                    label style={font=\small},
                    tick label style={font=\small}  
                    }}
%\draw [line width=1.5pt] (0.25,4.14) rectangle (0.5,2.78);
\begin{axis}[
xmin=0, xmax=1.0,
ymin=0, ymax=1.0,
width=1.25\figurewidth,
height=\figureheight,
xtick={0.0,0.2,0.4,0.6,0.8,1.0},
xticklabels={0.0,0.2,0.4,0.6,0.8,1.0},
ytick={0.0,0.2,0.4,0.6,0.8,1.0},
yticklabels={},
tick align=outside,
tick pos=left,
minor tick num=1,
xmajorgrids,
x grid style={lightgray!92.02614379084967!black},
ymajorgrids,
y grid style={lightgray!92.02614379084967!black},
grid=both,
]
\addplot [line width=1pt, densely dotted, curvecolor]
table {%
1.0 0.02295072348876731
0.2027162451287827 0.1094563674614438
0.1425027904815588 0.17926968900348228
0.09597774948352633 0.22690678015455146
0.08371387288527038 0.2427039354342137
0.024353474751676638 0.2810021360356221
0.00028377824209749054 0.3009410042479748
};
\addplot [line width=1pt, dash pattern=on 1pt off 3pt on 3pt off 3pt, curvecolor]
table {%
1.0 0.02295072348876729
0.2220221057592963 0.10279247420758507
0.12126953606622365 0.20674883191247065
0.1106489704060006 0.217032739962959
0.08862280682587892 0.23771030521728576
};
\addplot [line width=1pt, densely dotted, black]
table {%
1.0 0.02295072348876733
0.220513673343636 0.11086526201355833
0.14007633008406484 0.19741540213522524
0.11465165078932488 0.22067546691412784
0.07786095038975749 0.2589514356649305
0.045119251940054325 0.2916304297922774
0.017573107443039568 0.3080392430624781
};
\addplot [line width=1pt, solid, curvecolor]
table {%
1.0 0.0229507234887673
0.22693206895798285 0.10376068620035107
0.15634939710146736 0.1581247711287614
0.09047691880300526 0.18412285020088612
};
%\addplot [line width=1pt, solid, black]
%table {%
%};
\addplot [line width=1pt, dash pattern=on 1pt off 3pt on 3pt off 3pt, black]
table {%
1.0 0.02295072348876729
0.9674656411825114 0.3113520659297386
0.9086428462179549 0.4021581908884059
0.8611398106558376 0.4620609493534673
0.837261095798907 0.4849843917953187
0.7912360434374113 0.5196188863618427
0.7244443126519733 0.564168227456862
0.681948003708224 0.5917628560632838
0.6202582022605184 0.6285832231452017
0.5744235553377511 0.6533837736505558
0.4877297741042361 0.6995167287077475
0.4217215148781562 0.7406707928798154
0.4035299919780004 0.7497665651732638
0.2862730164279316 0.8191257050225741
0.1850007335291492 0.933060980313442
0.1850007335291492 0.933060980313442
0.17267120102132233 0.9486152790877131
0.15297048091419588 0.96492316173937
0.1219468800039954 0.9897147489486752
0.11607708611007933 0.993216083757328
0.08034329164182548 1.0
};

\end{axis}

\end{tikzpicture}}
		\hspace*{-3.0em}
		\begin{minipage}[t]{15.5em}
			\vspace*{-0.3em}\caption{$\padFactor = 1.07$}
			\label{fig:recall-precision-unsplash:c_1-07}
		\end{minipage}%
	\end{subfigure}
	\hspace*{-1.0em}
	\begin{subfigure}[b]{0.191\textwidth}
		\setlength\figureheight{1.9in}
		\centering
		\resizebox{!}{10.5em}{\begin{tikzpicture}

\pgfplotsset{every axis/.append style={
                    compat=1.3,
                    x label style={yshift=-1.5em, align=center},
                    label style={font=\small},
                    tick label style={font=\small}  
                    }}
%\draw [line width=1.5pt] (0.25,4.14) rectangle (0.5,2.78);
\begin{axis}[
xmin=0, xmax=1.0,
ymin=0, ymax=1.0,
width=1.25\figurewidth,
height=\figureheight,
xtick={0.0,0.2,0.4,0.6,0.8,1.0},
xticklabels={0.0,0.2,0.4,0.6,0.8,1.0},
ytick={0.0,0.2,0.4,0.6,0.8,1.0},
yticklabels={},
tick align=outside,
tick pos=left,
minor tick num=1,
xmajorgrids,
x grid style={lightgray!92.02614379084967!black},
ymajorgrids,
y grid style={lightgray!92.02614379084967!black},
grid=both,
]
\addplot [line width=1pt, densely dotted, curvecolor]
table {%
1.0 0.02295072348876739
0.16021842865264793 0.1344289547250427
0.13774402952784526 0.1614994366096182
0.09785987447573374 0.19514041241723626
0.014476864603190824 0.20437358528562288
};
\addplot [line width=1pt, dash pattern=on 1pt off 3pt on 3pt off 3pt, curvecolor]
table {%
1.0 0.022950723488767304
0.1472458321498023 0.1982762787628381
0.1472458321498023 0.1982762787628381
0.12477642967952582 0.22549746858649825
0.12477642967952582 0.22549746858649825
};
\addplot [line width=1pt, densely dotted, black]
table {%
1.0 0.022950723488767574
0.18607024747507206 0.1229383975107125
0.1394428507785444 0.1755934404379932
0.1057851767520616 0.20826354920686524
0.07937762967379433 0.2297687977964423
0.023315482118812927 0.2602512965022459
0.0006632270087705165 0.3054546323425139
};
\addplot [line width=1pt, solid, curvecolor]
table {%
1.0 0.022950723488767304
0.17271646133052826 0.13058782575860384
0.12126953606622365 0.22408134779875316
0.12126953606622365 0.22408134779875316
0.12126953606622365 0.22408134779875316
0.032646729240344716 0.2561283212930085
};
%\addplot [line width=1pt, solid, black]
%table {%
%};
\addplot [line width=1pt, dash pattern=on 1pt off 3pt on 3pt off 3pt, black]
table {%
1.0 0.02295072348876725
0.951537133742653 0.2798751940051992
0.8931856702739653 0.3529888508670861
0.8421523305937841 0.4002811468417342
0.7878696128526792 0.4410231231850377
0.7023604031601062 0.4998617138376133
0.6685478307826289 0.5222626455728303
0.5886961597407989 0.5704623891061648
0.5290024939991073 0.6072917581167349
0.42698731774922055 0.6721923701557958
0.4132110160471208 0.6802811979691258
0.34713252530675565 0.7170383695364836
0.3219084867761863 0.7305392827037098
0.21927527320059065 0.8028136017416446
0.14540420577521546 0.9052273610571318
0.14540420577521546 0.9052273610571318
0.12154890142304653 0.931668919646382
0.08604296920113992 0.9741836301950806
0.07132712590793741 1.0
0.07132712590793741 1.0
0.07132712590793741 1.0
};

\end{axis}

\end{tikzpicture}}
		\hspace*{-3.0em}
		\begin{minipage}[t]{15.5em}
			\vspace*{-0.3em}\caption{$\padFactor = 1.09$}
			\label{fig:recall-precision-unsplash:c_1-09}
		\end{minipage}%
	\end{subfigure}
	
	\vspace*{-2.4em}
	\begin{subfigure}[b]{.43\columnwidth}
		\setlength\figureheight{2in}
		\begin{minipage}[b]{1\textwidth}
			\centering
			\hspace*{24.5em}
			\resizebox{!}{1.2em}{\begin{tikzpicture}
\node at (0,0)[
  scale=1,
  anchor=south,
  text=black,
  rotate=0
]{Recall};
\end{tikzpicture}}\vspace*{-0.6em}
		\end{minipage}
	\end{subfigure}
	%%%%%%%% Unsplash Recall-Precision main caption %%%%%%%
	\vspace*{1.0em} % caption vertical position adjustment
	\caption{Adversary's recall and precision for detecting the Unsplash Lite \textit{Nature} collection.}
	\label{fig:recall-precision-unsplash}
\end{figure*}
%%%%%%%%%% Recall-Precision for NodeJS ends %%%%%%%%%%%%%

Intuitively, a curve closer to the upper right-hand corner of each
plot indicates that the adversary did better at detecting requests in
\privDataSubset.  In that light, we see that better security provided
by an algorithm when measured using mutual information (i.e.,
\figref{fig:mutual-information}) translated reasonably well into
diminished precision and/or recall in these tests.  Indeed, our
\perReqPaddingTerm algorithm (\perReqAlg) defended as well or better
than the competitor for \perReqPaddingTerm (\pAlpaca), and our
\perObjPaddingTerm algorithm (\perObjAlg) consistently outperformed
the \perObjPaddingTerm competitors (\dAlpaca and \padme).  Only
\noDistAlg was outperformed by \pAlpaca in some cases, but \pAlpaca
did so by leveraging the distribution of \privDataRV; if that
distribution were unknown or incorrect, presumably the protection
offered by \pAlpaca would decay.

We also computed the precision-recall curve per vulnerable NodeJS
package and per \textit{Nature} photo in the Unsplash Lite dataset, to
assess the extent to which individual objects could still be
identified by the attacker based on their sizes when returned.  To
visualize these results, we reduced the precision-recall curve for
each $\privDataVal \in \privDataSubset$ to a single number---the area under the
curve\footnote{We extended each curve to the left to meet the vertical
  axis at its maximum precision.} (AUC)---and then plotted the
distribution of AUC values per algorithm.

\figref{fig:auc-prc-nodejs} shows distributions for the vulnerable
NodeJS packages, and \figref{fig:auc-prc-unsplash} shows distributions
for the \textit{Nature} photos.  Intuitively, the more bottom-heavy
the distribution, the more members of \privDataSubset the
algorithm protects.  As such, it is easy to see from these figures
that our algorithms performed better than \pAlpaca, particularly at
smaller values of \padFactor, i.e., $\padFactor = 1.01$
(\figref{fig:auc-prc-nodejs:padFactor_1-01}) and $\padFactor = 1.03$
(\figref{fig:auc-prc-nodejs:padFactor_1-03}) for the NodeJS dataset,
and $\padFactor = 1.01$ (\figref{fig:auc-prc-unsplash:padFactor_1-01})
for the Unsplash dataset.  \dAlpaca and, to a
lesser extent, \padme were not competitive with the other algorithms.

\begin{figure}[t]
	\centering
	
	\begin{subfigure}[t]{\columnwidth}
		\raisebox{3.75em}{\scalebox{0.75}{\begin{tikzpicture}
\node at (0,0)[
  scale=1,
  anchor=south,
  text=black,
  rotate=90
]{AUC};
\end{tikzpicture}}}
		\hspace{-.25em}
		\scalebox{0.6}{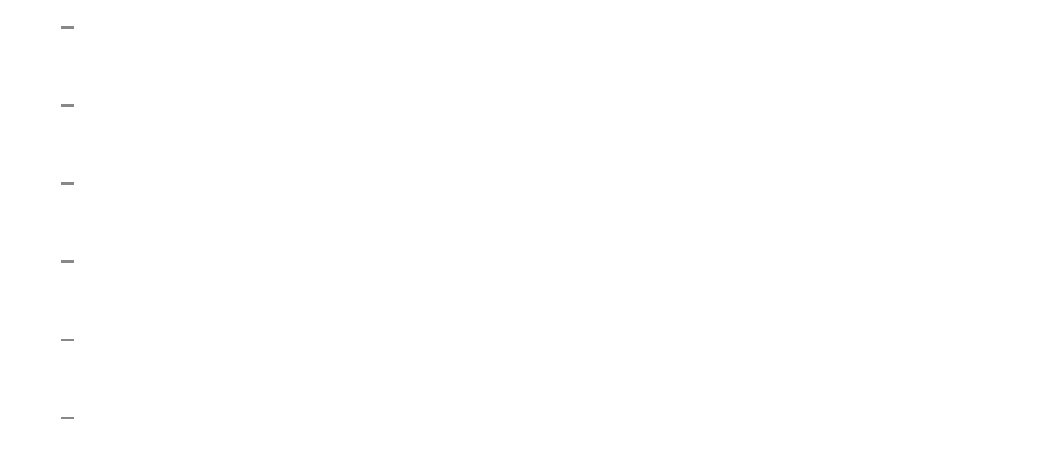}
		\caption{$\padFactor = 1.01$}
		\label{fig:auc-prc-nodejs:padFactor_1-01}
	\end{subfigure}
	
	\begin{subfigure}[t]{\columnwidth}
		\raisebox{3.75em}{\scalebox{0.75}{\begin{tikzpicture}
\node at (0,0)[
  scale=1,
  anchor=south,
  text=black,
  rotate=90
]{AUC};
\end{tikzpicture}}}
		\hspace{-.25em}
		\scalebox{0.6}{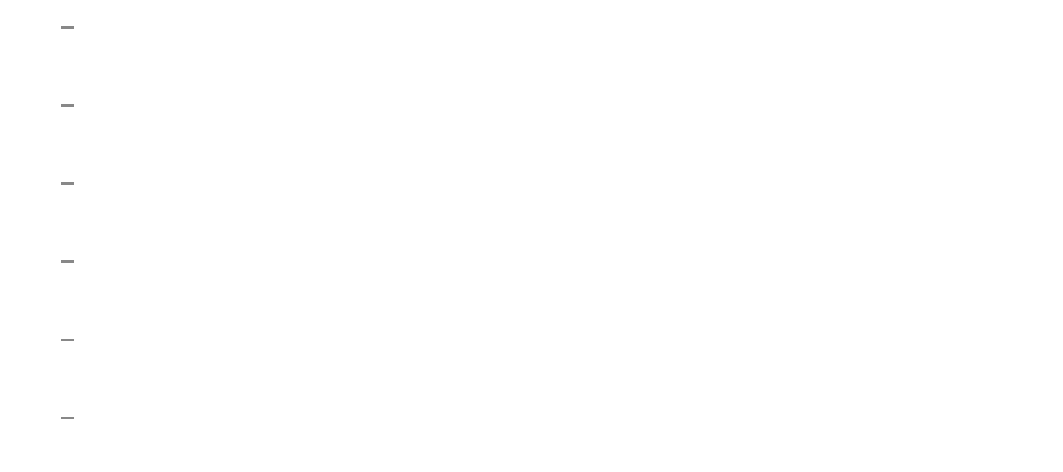}
		\caption{$\padFactor = 1.03$}
		\label{fig:auc-prc-nodejs:padFactor_1-03}
	\end{subfigure}

	\begin{subfigure}[t]{\columnwidth}
		\raisebox{3.75em}{\scalebox{0.75}{\begin{tikzpicture}
\node at (0,0)[
  scale=1,
  anchor=south,
  text=black,
  rotate=90
]{AUC};
\end{tikzpicture}}}
		\hspace{-.25em}
		\scalebox{0.6}{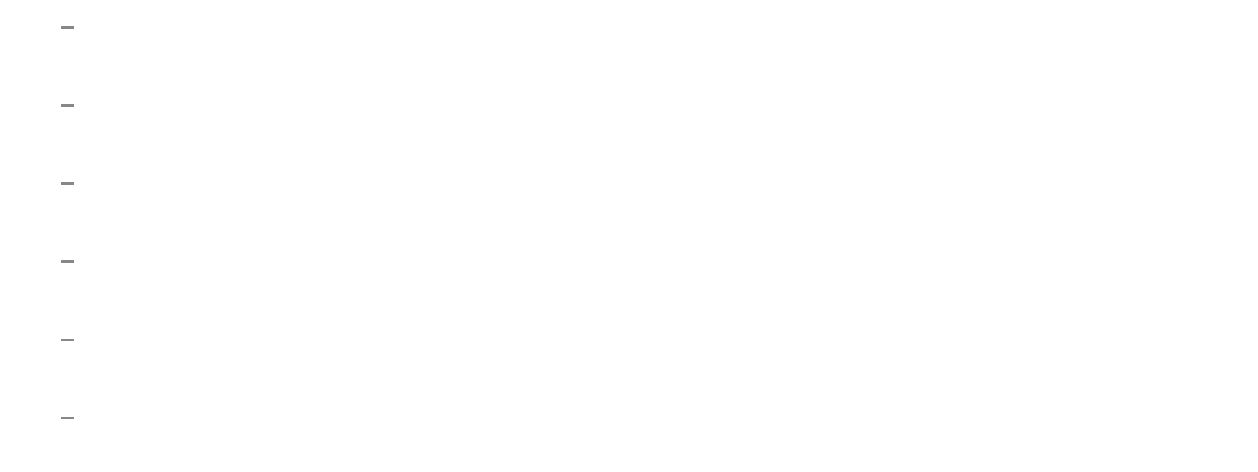}
		\caption{$\padFactor = 1.09$}
		\label{fig:auc-prc-nodejs:padFactor_1-09}
	\end{subfigure}

	\vspace*{-1.0em}
	\caption{Distributions of precision-recall AUC for vulnerable NodeJS packages.}
	\label{fig:auc-prc-nodejs}
\end{figure}

\begin{figure}[t]
	\centering
	
	\begin{subfigure}[t]{\columnwidth}
		\raisebox{3.75em}{\scalebox{0.75}{\begin{tikzpicture}
\node at (0,0)[
  scale=1,
  anchor=south,
  text=black,
  rotate=90
]{AUC};
\end{tikzpicture}}}
		\hspace{-.25em}
		\scalebox{0.6}{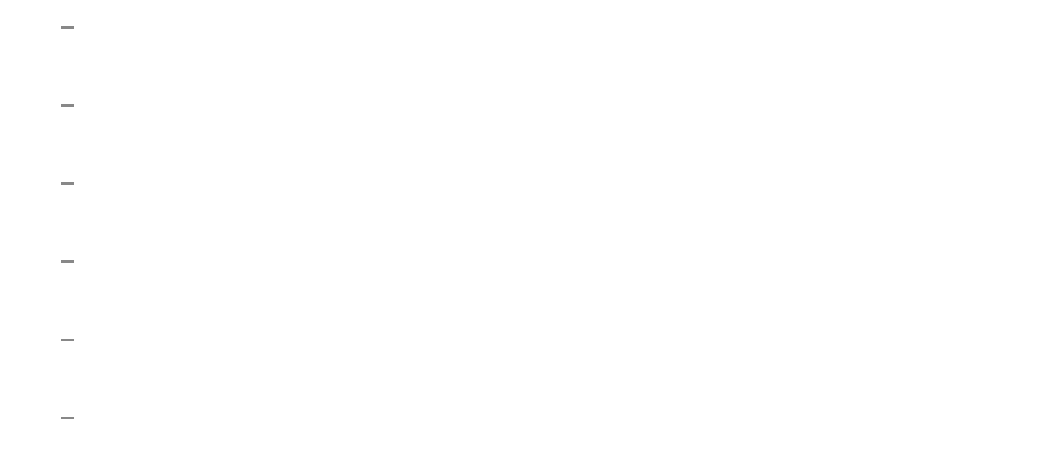}
		\caption{$\padFactor = 1.01$}
		\label{fig:auc-prc-unsplash:padFactor_1-01}
	\end{subfigure}
	
	\begin{subfigure}[t]{\columnwidth}
		\raisebox{3.75em}{\scalebox{0.75}{\begin{tikzpicture}
\node at (0,0)[
  scale=1,
  anchor=south,
  text=black,
  rotate=90
]{AUC};
\end{tikzpicture}}}
		\hspace{-.25em}
		\scalebox{0.6}{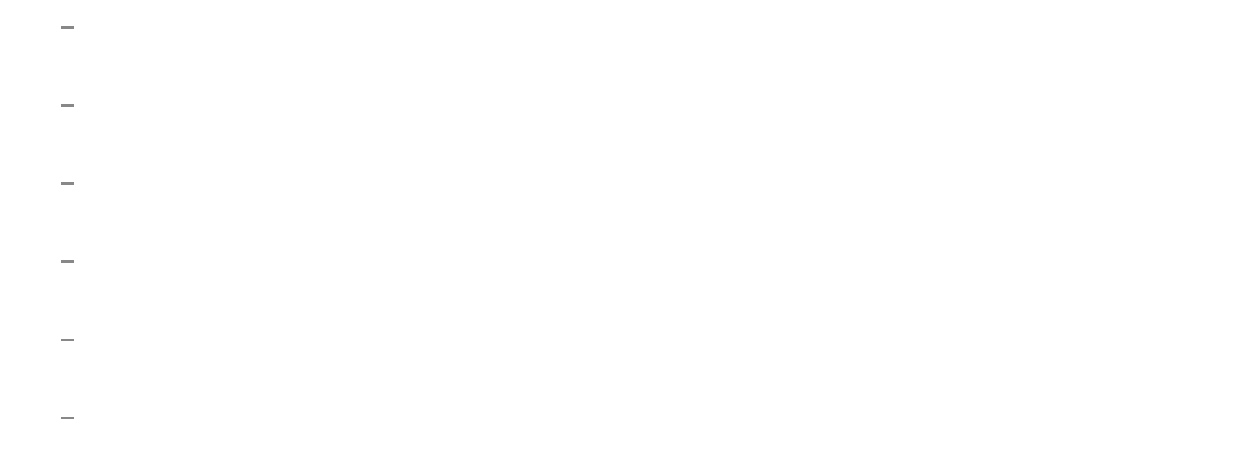}
		\caption{$\padFactor = 1.03$}
		\label{fig:auc-prc-unsplash:padFactor_1-03}
	\end{subfigure}
	
	\vspace*{-1.0em}
	\caption{Distributions of precision-recall AUC for Unsplash Lite \textit{Nature} photos.}
	\label{fig:auc-prc-unsplash}
\end{figure}

\section{Performance Evaluation}
\label{sec:performance}

In this section, we describe how we implemented and then evaluated
the runtime performance of \perReqAlg, \perObjAlg, \noDistAlg, and 
\pAlpaca, and we present the results that we obtained. Note that 
we did not evaluate the runtime performance of either \dAlpaca or 
\padme, as they do not need to compute a padding scheme prior to 
usage. We conclude this section with an analysis of the overall
bandwidth increase incurred by all of the padding algorithms.

\subsection{Implementation}
\label{sec:performance:implementation}

\paragraph{Code Overview}
We implemented all four algorithms in Python. Additionally, 
we used Cython to optimize \perObjAlg, \perReqAlg, and \pAlpaca. 
By using Cython, we were able to implement each algorithm's core 
routines in ``pure C'' using only C arrays and C data types. 
Furthermore, we leveraged the multi-processing API OpenMP
to parallelize both \perReqAlg and \pAlpaca so that they used all of
the available processors on our evaluation platform (see
\secref{sec:performance:platform}).

\paragraph{\perReqAlg Improvements}
Our implementation of \perReqAlg improved upon the underlying algorithm
in the following two ways:
\begin{itemize}
\item \textbf{\mutInfoEst{\blahutIdx}{\privDataRV}{\distortPubDataRV}
  calculations.} In our tests, calculating
  \mutInfoEst{\blahutIdx}{\privDataRV}{\distortPubDataRV} for every
  $\blahutIdx \ge 0$ (and specifically the needed logarithms) was a
  performance bottleneck, and so our implementation of \perReqAlg
  calculated only
  \mutInfoEst{10\blahutIdx}{\privDataRV}{\distortPubDataRV}, instead.
  As such, the algorithm terminated after the first iteration
  $10(\blahutIdx+1)$ for which
  $\mutInfoEst{10\blahutIdx}{\privDataRV}{\distortPubDataRV} -
  \mutInfoEst{10(\blahutIdx+1)}{\privDataRV}{\distortPubDataRV} <
  10\blahutCutoff$.
\item \textbf{Incremental update.} Despite this optimization,
  \perReqAlg was still considerably slower than the alternatives (see
  \secref{sec:performance:runtimes}), so much so that recalculating
  the distributions for use by \distortPubDataAlg{\cdot}, e.g., after
  changes to (the sizes of) objects, could be a considerable
  disruption to serving those changed objects.  We anticipated,
  however, that incrementally updating these distributions after only
  a few object changes would be much faster, suggesting that their
  continuous maintenance could be adequately responsive, even if
  computing them from scratch would not be.

  To this end, we implemented two versions of \perReqAlg. The first,
  \perReqAlgInit, implemented \perReqAlg as described in
  \secref{sec:algs:per-request} and was used to calculate the initial
  padding scheme for a given dataset and a given \padFactor. The
  second, \perReqAlgInc, allowed for faster updates to a padding
  scheme as object sizes were changed.  \perReqAlgInc did this by
  retaining the final \blahutProb{\blahutIdx}[\distortPubDataVal] and
  \blahutCProb{\blahutIdx}[\distortPubDataVal][\privDataVal] values
  from \perReqAlgInit (or a previous invocation of \perReqAlgInc) for
  values \privDataVal and \distortPubDataVal unaffected by the newly
  changed objects, using these values to initialize
  \blahutProb{0}[\distortPubDataVal] and
  \blahutCProb{0}[\distortPubDataVal][\privDataVal].  For other values
  of \distortPubDataVal and \privDataVal,
  \blahutProb{0}[\distortPubDataVal] and
  \blahutCProb{0}[\distortPubDataVal][\privDataVal] were initialized
  as in \secref{sec:algs:per-request}.  After this initialization
  step, \perReqAlgInc iterated in the same way as \perReqAlgInit,
  stopping once
  $\mutInfoEst{10\blahutIdx}{\privDataRV}{\distortPubDataRV} -
  \mutInfoEst{10(\blahutIdx+1)}{\privDataRV}{\distortPubDataRV} <
  10\blahutCutoff$.
\end{itemize}

\paragraph{Inputs}
Our two datasets were each converted to a table, implemented as a
Pandas DataFrame, with columns \textit{Object ID}, \textit{Object
  Size}, and \textit{Retrievals} (per unit time). For each run of an
algorithm, we supplied the algorithm with the table for the dataset
being tested and a value for \padFactor.  We also provided
\perReqAlgInit and \perReqAlgInc with a third input: the value for
\blahutCutoff to indicate when these algorithms should halt iterating
and return their padding scheme.  For the NodeJS dataset we set
$\blahutCutoff = \num{1e-3}$, and for the Unsplash dataset we set
$\blahutCutoff = \num{5e-4}$. We arrived at these values for 
\blahutCutoff by observing, for each dataset, at what point smaller
values for \blahutCutoff yielded minimal reductions to
\mutInfo{\privDataRV}{\distortPubDataRV} at the expense of increasing 
runtimes.

\paragraph{Outputs}
\perReqAlgInit, \perReqAlgInc, and \pAlpaca each returned their
scheme for \cprob{\big}{\distortPubDataRV =
  \distortPubDataVal}{\privDataRV = \privDataVal} as a compact, linear
array that only contained the scheme's nonzero values. These
algorithms also returned the auxiliary information needed to
support sampling from this array on a per-request basis.  \perObjAlg
and \noDistAlg each returned a Python dictionary that mapped each
object's original size to its padded size.

\subsection{Evaluation Platform}
\label{sec:performance:platform}

Our evaluation platform consisted of an Ubuntu 20.04.1 LTS virtual
machine running within VMware Workstation 15.5 Pro. The host machine
was outfitted with a quad-core (eight logical processors) Intel Core
i7-7700HQ CPU running at 2.80\gigahertz. We provided the virtual
machine with six of these logical processors. Furthermore, the host
machine was outfitted with 32\gigabytes of RAM, of which 
20\gigabytes were allocated to the virtual machine; however, memory 
usage was not a factor for any of the algorithms with the values of 
\padFactor tested.

\subsection{Runtime Test Procedure}
\label{sec:performance:procedure}

For \perReqAlgInit, \perObjAlg, \pAlpaca, and \noDistAlg, for each
dataset and for $\padFactor \in \{1.01, 1.02, \dots, 1.1\}$, we measured
the time it took for each algorithm to calculate its padding scheme and
return its outputs. This test was conducted ten times, and we report the
average of these measurements as the runtime reported for each
value of \padFactor.

To test \perReqAlgInc, for each dataset we first created a list of the
ten most frequently retrieved objects. Then, for each of these
objects, we simulated the object's size being increased by 25\% and
measured the time it took for \perReqAlgInc to update the padding scheme
provided by \perReqAlgInit. We then took the average of these ten
measurements as the runtime reported for each value of \padFactor.
We took this approach to test \perReqAlgInc because we found that, 
in general, updating the size of frequently requested objects led to
longer runtimes than when updating the sizes of infrequently 
requested objects. Thus, our test for \perReqAlgInc was designed to
gauge its worst-case runtime.

\subsection{Runtime Results}
\label{sec:performance:runtimes}

The results of our runtime tests are depicted in
\figref{fig:runtimes-nodejs} and \figref{fig:runtimes-unsplash}.
The relative standard deviations for all algorithms were less than 
18\%, except for \pAlpaca, which ranged up to 124\%. Our use of OpenMP 
in \pAlpaca led to high variance since its runtimes were so low.

An immediate observation is that, although each algorithm took less time
on the Unsplash dataset, \figref{fig:runtimes-unsplash} looks largely
similar to \figref{fig:runtimes-nodejs}. In other words, the relative
performance between each algorithm remained unchanged between the two
datasets.  The absolute differences between runtimes for the NodeJS
and Unsplash datasets is due primarily to the sizes of these
datasets---the NodeJS dataset contains about $17\times$ more objects.

Overall, \noDistAlg outperformed the other algorithms, as its runtime
remained constant and negligible for a given dataset. We then see 
\pAlpaca, \perObjAlg, \perReqAlgInc, and \perReqAlgInit, in that order.
Furthermore, we see from both \figref{fig:runtimes-nodejs}
and \figref{fig:runtimes-unsplash} that \perReqAlgInit required
significantly more time to compute its initial padding scheme for each
value of \padFactor than the other algorithms required. With the 
addition of \perReqAlgInc, though, \perReqAlg required considerably
less time to maintain its padding scheme as object sizes were changed.
The addition of \perReqAlgInc therefore put \perReqAlg's steady-state
runtime much closer to that of the other algorithms.

Finally, we attribute the slight decrease in \perReqAlgInit's runtime
between $\padFactor = 1.09$ and $\padFactor = 1.1$ in
\figref{fig:runtimes-unsplash} to our decision to calculate only
\mutInfoEst{10\blahutIdx}{\privDataRV}{\distortPubDataRV}, rather than
each \mutInfoEst{\blahutIdx}{\privDataRV}{\distortPubDataRV}.  To
confirm this, we conducted the Unsplash runtime test a second time
while calculating each
\mutInfoEst{\blahutIdx}{\privDataRV}{\distortPubDataRV}. Though not
depicted, this test resulted in increasing runtimes for
\perReqAlgInit, from a low of 0.75 seconds at $\padFactor = 1.01$, to
a high of 3.25 seconds at $\padFactor = 1.1$. Additionally, this test
revealed that, by calculating only
\mutInfoEst{10\blahutIdx}{\privDataRV}{\distortPubDataRV},
\perReqAlgInit took 14 extra iterations at $\padFactor = 1.09$,
whereas it only took 6 extra iterations at $\padFactor = 1.1$. Thus,
we conclude that, although calculating
\mutInfoEst{10\blahutIdx}{\privDataRV}{\distortPubDataRV} resulted in
overall reduced runtimes for \perReqAlgInit, it had the effect of
causing \perReqAlgInit to iterate more times than necessary to reach
its termination condition, which impacted some values of \padFactor
more than others.

%%%%%%%%%%%  Runtime graphs for NodeJS start here %%%%%%%%%%%%
\begin{figure}[t]
	\captionsetup[subfigure]{font=normalsize,labelfont=normalsize}
	\begin{subfigure}[t]{.1\columnwidth}
		\setlength\figureheight{2in}
		\begin{minipage}[t]{1\columnwidth}
			\centering
			\hspace*{2.75em}
			\resizebox{!}{1.3em}{\newenvironment{customlegend}[1][]{%
    \begingroup
    % inits/clears the lists (which might be populated from previous
    % axes):
    \csname pgfplots@init@cleared@structures\endcsname
    \pgfplotsset{#1}%
}{%
    % draws the legend:
    \csname pgfplots@createlegend\endcsname
    \endgroup
}%

\def\addlegendimage{\csname pgfplots@addlegendimage\endcsname}

\begin{tikzpicture}

\begin{customlegend}[
	legend style={{font={\fontsize{10pt}{12}\selectfont}},{draw=none}},
	legend cell align={left},
	legend columns=5,
	legend entries={{\perReqAlgInit},{\perReqAlgInc},{\perObjAlg},{\pAlpaca},{\noDistAlg}}]
%legend cell align={center},
%legend style={at={(0.97,0.03)}, anchor=south east, draw=white!80.0!black, nodes={scale=0.618, transform shape}}
%]
\addlegendimage{line width=1pt, densely dotted, curvecolor}
\addlegendimage{line width=1pt, dash pattern=on 6pt off 2pt, curvecolor}
\addlegendimage{line width=1pt, dash pattern=on 1pt off 3pt on 3pt off 3pt, curvecolor}
\addlegendimage{line width=1pt, densely dotted, black}
\addlegendimage{line width=1pt, solid, curvecolor}

\end{customlegend}

\end{tikzpicture}}
		\end{minipage}
	\end{subfigure}
	
	\hspace*{-0.25em}
	\begin{subfigure}[b]{.49\columnwidth}
		\setlength\figureheight{2.3in}
		\begin{minipage}[b]{1\textwidth}
			\centering
			\vspace*{0em}\resizebox{!}{10.25em}{\begin{tikzpicture}

\pgfplotsset{every axis/.append style={
					xlabel={},
					ylabel={Runtime (\secs)},
					compat=1.3,
                    label style={font=\small},
                    tick label style={font=\small}  
                    }}

\begin{axis}[
xmin=1.01, xmax=1.1,
ymin=0, ymax=40,
xtick={1.01,1.02,1.03,1.04,1.05,1.06,1.07,1.08,1.09,1.1},
ytick={0,5,10,15,20,25,30,35,40},
yticklabels={0,5,10,15,20,25,30,35,40},
width=2.25\figurewidth,
height=1.1\figurewidth,
tick align=outside,
tick pos=left,
xmajorgrids,
minor tick num=1,
x grid style={lightgray!92.026143790849673!black},
ymajorgrids,
y grid style={lightgray!92.026143790849673!black},
grid=both,
%legend entries={{1 responder},{32 responders},{64 responders},{96 responders},{128 responders}},
%legend cell align={left},
%legend style={at={(0.03,0.97)}, anchor=north west, draw=white!80.0!black, nodes={scale=0.618, transform shape}}
]
%\addlegendimage{line width=1pt, color0}
%\addlegendimage{line width=1pt, color0, dotted}
%\addlegendimage{line width=1pt, color0, dash pattern=on 1pt off 3pt on 3pt off 3pt}
%\addlegendimage{line width=1pt, color0, dashed}
%\addlegendimage{line width=2pt, color0, dashed}
\addplot [line width=1pt, densely dotted, curvecolor]
table {%
1.01 2.7570274114608764
1.02 5.325932741165161
1.03 8.320385837554932
1.04 12.1544593334198
1.05 15.816859245300293
1.06 19.801548576354982
1.07 23.922722697257996
1.08 28.037268543243407
1.09 32.06004500389099
1.1 36.56845226287842
};
\addplot [line width=1pt, dash pattern=on 6pt off 2pt, curvecolor]
table{%
1.01 0.663546753
1.02 1.115452385
1.03 1.590560555
1.04 2.154585886
1.05 2.696105242
1.06 3.308991027
1.07 3.900485086
1.08 4.363044095
1.09 4.884126973
1.1 5.447285628
};
\addplot [line width=1pt, dash pattern=on 1pt off 3pt on 3pt off 3pt, curvecolor]
table {%
1.01 0.4366678476333618
1.02 0.7394137859344483
1.03 1.0672799110412599
1.04 1.3580400466918945
1.05 1.6761354684829712
1.06 1.9659568309783935
1.07 2.267652678489685
1.08 2.5529378414154054
1.09 2.859099507331848
1.1 3.1534498929977417
};
\addplot [line width=1pt, densely dotted, black]
table {%
1.01 0.36461827754974363
1.02 0.3275375604629517
1.03 0.401826286315918
1.04 0.46978049278259276
1.05 0.6228283166885376
1.06 0.8478482484817504
1.07 0.8817314147949219
1.08 0.9686981201171875
1.09 0.9867099285125732
1.1 1.1620088338851928
};
\addplot [line width=1pt, solid, curvecolor]
table {%
1.01 0.0783278226852417
1.02 0.07841565608978271
1.03 0.07803113460540771
1.04 0.0774298906326294
1.05 0.07856214046478271
1.06 0.07871861457824707
1.07 0.0772998571395874
1.08 0.08339133262634277
1.09 0.07844724655151367
1.1 0.07804665565490723
};
\end{axis}
\end{tikzpicture}}
		\end{minipage}
	\end{subfigure}%
	
	\vspace*{0.0em}
	\begin{subfigure}[b]{.43\columnwidth}
		\setlength\figureheight{2in}
		\begin{minipage}[b]{1\textwidth}
			\centering
			\hspace*{13.0em}
			\resizebox{!}{1.4em}{\begin{tikzpicture}
\node at (0,0)[
  scale=1,
  anchor=south,
  text=black,
  rotate=0
]{$\padFactor$};
\end{tikzpicture}}
			\vspace*{-1.0em}
		\end{minipage}
	\end{subfigure}
	\vspace*{-0.75em}
	\caption{Runtimes on the NodeJS dataset.}
	\label{fig:runtimes-nodejs}
\end{figure}
%%%%%%%%%%%  Runtime graphs for NodeJS end here %%%%%%%%%%%%

%%%%%%%%%%%  Runtime graphs for Unsplash start here %%%%%%%%%%%%
\begin{figure}[t]
	\captionsetup[subfigure]{font=normalsize,labelfont=normalsize}
	\begin{subfigure}[t]{.1\columnwidth}
		\setlength\figureheight{2in}
		\begin{minipage}[t]{1\columnwidth}
			\centering
			\hspace*{2.75em}
			\resizebox{!}{1.3em}{\newenvironment{customlegend}[1][]{%
    \begingroup
    % inits/clears the lists (which might be populated from previous
    % axes):
    \csname pgfplots@init@cleared@structures\endcsname
    \pgfplotsset{#1}%
}{%
    % draws the legend:
    \csname pgfplots@createlegend\endcsname
    \endgroup
}%

\def\addlegendimage{\csname pgfplots@addlegendimage\endcsname}

\begin{tikzpicture}

\begin{customlegend}[
	legend style={{font={\fontsize{10pt}{12}\selectfont}},{draw=none}},
	legend cell align={left},
	legend columns=5,
	legend entries={{\perReqAlgInit},{\perReqAlgInc},{\perObjAlg},{\pAlpaca},{\noDistAlg}}]
%legend cell align={center},
%legend style={at={(0.97,0.03)}, anchor=south east, draw=white!80.0!black, nodes={scale=0.618, transform shape}}
%]
\addlegendimage{line width=1pt, densely dotted, curvecolor}
\addlegendimage{line width=1pt, dash pattern=on 6pt off 2pt, curvecolor}
\addlegendimage{line width=1pt, dash pattern=on 1pt off 3pt on 3pt off 3pt, curvecolor}
\addlegendimage{line width=1pt, densely dotted, black}
\addlegendimage{line width=1pt, solid, curvecolor}

\end{customlegend}

\end{tikzpicture}}
		\end{minipage}
	\end{subfigure}
	
	\hspace*{-0.25em}
	\begin{subfigure}[b]{.49\columnwidth}
		\setlength\figureheight{2.3in}
		\begin{minipage}[b]{1\textwidth}
			\centering
			\vspace*{0em}\resizebox{!}{10.25em}{\begin{tikzpicture}

\pgfplotsset{every axis/.append style={
					xlabel={},
					ylabel={Runtime (\secs)},
					compat=1.3,
                    label style={font=\small},
                    tick label style={font=\small}  
                    }}

\begin{axis}[
xmin=1.01, xmax=1.1,
ymin=0, ymax=2.5,
xtick={1.01,1.02,1.03,1.04,1.05,1.06,1.07,1.08,1.09,1.1},
ytick={0,0.5,1.0,1.5,2.0,2.5},
yticklabels={0,0.5,1.0,1.5,2.0,2.5},
width=2.25\figurewidth,
height=1.1\figurewidth,
tick align=outside,
tick pos=left,
xmajorgrids,
minor tick num=1,
x grid style={lightgray!92.026143790849673!black},
ymajorgrids,
y grid style={lightgray!92.026143790849673!black},
grid=both,
%legend entries={{1 responder},{32 responders},{64 responders},{96 responders},{128 responders}},
%legend cell align={left},
%legend style={at={(0.03,0.97)}, anchor=north west, draw=white!80.0!black, nodes={scale=0.618, transform shape}}
]
%\addlegendimage{line width=1pt, color0}
%\addlegendimage{line width=1pt, color0, dotted}
%\addlegendimage{line width=1pt, color0, dash pattern=on 1pt off 3pt on 3pt off 3pt}
%\addlegendimage{line width=1pt, color0, dashed}
%\addlegendimage{line width=2pt, color0, dashed}
\addplot [line width=1pt, densely dotted, curvecolor]
table {%
1.01 0.37108001708984373
1.02 0.6244914054870605
1.03 0.8920913934707642
1.04 1.152976655960083
1.05 1.4250144004821776
1.06 1.4288505554199218
1.07 1.723721480369568
1.08 2.00927791595459
1.09 2.313656234741211
1.1 2.0539715051651
};
\addplot [line width=1pt, dash pattern=on 6pt off 2pt, curvecolor]
table{%
1.01 0.080764365
1.02 0.127295756
1.03 0.180532479
1.04 0.230577874
1.05 0.26346643
1.06 0.307756495
1.07 0.350497413
1.08 0.402058458
1.09 0.447527242
1.1 0.483524466
};
\addplot [line width=1pt, dash pattern=on 1pt off 3pt on 3pt off 3pt, curvecolor]
table {%
1.01 0.052715492248535153
1.02 0.07773644924163818
1.03 0.10368852615356446
1.04 0.13136625289916992
1.05 0.16096112728118897
1.06 0.1850059986114502
1.07 0.21232705116271972
1.08 0.23453474044799805
1.09 0.2584112882614136
1.1 0.2913126707077026
};
\addplot [line width=1pt, densely dotted, black]
table {%
1.01 0.02849256992340088
1.02 0.03099067211151123
1.03 0.03931279182434082
1.04 0.045276832580566403
1.05 0.05295209884643555
1.06 0.062122154235839847
1.07 0.07618544101715088
1.08 0.07616939544677734
1.09 0.08149478435516358
1.1 0.08909015655517578
};
\addplot [line width=1pt, solid, curvecolor]
table {%
1.01 0.013563084602355956
1.02 0.012564396858215332
1.03 0.011486434936523437
1.04 0.010989856719970704
1.05 0.01132669448852539
1.06 0.01148672103881836
1.07 0.011106443405151368
1.08 0.011681175231933594
1.09 0.011910176277160645
1.1 0.011432170867919922
};
\end{axis}
\end{tikzpicture}}
		\end{minipage}
	\end{subfigure}%
	
	\vspace*{0.0em}
	\begin{subfigure}[b]{.43\columnwidth}
		\setlength\figureheight{2in}
		\begin{minipage}[b]{1\textwidth}
			\centering
			\hspace*{13.0em}
			\resizebox{!}{1.4em}{\begin{tikzpicture}
\node at (0,0)[
  scale=1,
  anchor=south,
  text=black,
  rotate=0
]{$\padFactor$};
\end{tikzpicture}}
			\vspace*{-1.0em}
		\end{minipage}
	\end{subfigure}
	\vspace*{-0.75em}
	\caption{Runtimes on the Unsplash Lite dataset.}
	\label{fig:runtimes-unsplash}
\end{figure}
%%%%%%%%%%%  Runtime graphs for Unsplash end here %%%%%%%%%%%%

\subsection{Bandwidth Increase Analysis}
\label{sec:performance:avg-overhead}

Our final evaluation analyzed the overall bandwidth increase incurred 
by each of the padding algorithms for 
$\padFactor \in \{1.01, 1.02, \dots, 1.1\}$. For this analysis, we 
calculated the multiplicative increase to bandwidth that the 
\objStoreTerm would have incurred over an arbitrary time interval if 
all of its objects had been retrieved according to the distribution 
\privDataRV. The results for the NodeJS dataset are depicted in 
\figref{fig:avg-overhead-nodejs}. Note that 
\figref{fig:avg-overhead-nodejs}'s y-axis starts at 1, which 
represents sending objects with no padding, i.e., this is the 
\objStoreTerm's baseline average bandwidth. We do not depict the 
Unsplash results, as the only significant difference was that \padme 
yielded a point at $(1.031, 1.011)$, compared to the point at 
$(1.093, 1.022)$ in \figref{fig:avg-overhead-nodejs}.

An immediate observation from \figref{fig:avg-overhead-nodejs} is that
\dAlpaca had a negligible effect on the \objStoreTerm's overall
bandwidth, due to the fact that, for all values of \padFactor, the
chosen input parameter \dAlpacaBinSize (see
\secref{sec:security:algos}) was relatively small compared to the
majority of object sizes in the dataset. Furthermore, we see that
\perReqAlg, \noDistAlg, \perObjAlg, and \pAlpaca resulted in similar
increases to the \objStoreTerm's overall bandwidth, despite producing
different padding schemes.

Finally, as mentioned in \secref{sec:security:algos}, \padme is not a
tunable padding algorithm, and so it yielded a point rather than a
line for this analysis. Thus, at approximately $\padFactor=1.09$,
\padme caused this \objStoreTerm's bandwidth to increase by 2.25\%,
whereas our algorithms increased this \objStoreTerm's bandwidth by
4.5\%. Note, though, that \padme's 2.25\% savings in bandwidth (4.5\%
minus 2.25\%) came with a \mutInfo{\privDataRV}{\distortPubDataRV}
that was roughly 18\% higher than our algorithms (see
\figref{fig:mutual-information:nodejs}).

%%%%%%%%%%%  Average Overhead graph for NodeJS start here %%%%%%%%%%%%
\begin{figure}[t]
	\captionsetup[subfigure]{font=normalsize,labelfont=normalsize}
	\begin{subfigure}[t]{.1\columnwidth}
		\setlength\figureheight{2in}
		\begin{minipage}[t]{1\columnwidth}
			\centering
			\hspace*{0.6em}
			\resizebox{!}{1.3em}{\newenvironment{customlegend}[1][]{%
    \begingroup
    % inits/clears the lists (which might be populated from previous
    % axes):
    \csname pgfplots@init@cleared@structures\endcsname
    \pgfplotsset{#1}%
}{%
    % draws the legend:
    \csname pgfplots@createlegend\endcsname
    \endgroup
}%

\def\addlegendimage{\csname pgfplots@addlegendimage\endcsname}

\begin{tikzpicture}

\begin{customlegend}[
	legend style={{font={\fontsize{10pt}{12}\selectfont}},{draw=none}},
	legend cell align={left},
	legend columns=6,
	legend entries={{\perReqAlg},{\noDistAlg},{\perObjAlg},{\pAlpaca},{\padme},{\dAlpaca}}]
%legend cell align={center},
%legend style={at={(0.97,0.03)}, anchor=south east, draw=white!80.0!black, nodes={scale=0.618, transform shape}}
%]
\addlegendimage{line width=1pt, densely dotted, curvecolor}
\addlegendimage{line width=1pt, solid, curvecolor}
\addlegendimage{line width=1pt, dash pattern=on 1pt off 3pt on 3pt off 3pt, curvecolor}
\addlegendimage{line width=1pt, densely dotted, black}
\addlegendimage{only marks, mark=*, mark options={fill=black}}
\addlegendimage{line width=1pt, dash pattern=on 1pt off 3pt on 3pt off 3pt, black}

\end{customlegend}

\end{tikzpicture}}
		\end{minipage}
	\end{subfigure}
	
	\hspace*{-0.25em}
	\begin{subfigure}[b]{.49\columnwidth}
		\setlength\figureheight{2.3in}
		\begin{minipage}[b]{1\textwidth}
			\centering
			\vspace*{0em}\resizebox{!}{10.0em}{\begin{tikzpicture}

\pgfplotsset{every axis/.append style={
					xlabel={},
					ylabel={Bandwidth Multiplier},
					compat=1.3,
                    label style={font=\small},
                    tick label style={font=\small}  
                    }}

\begin{axis}[
xmin=1.01, xmax=1.1,
ymin=1.000, ymax=1.050,
xtick={1.01,1.02,1.03,1.04,1.05,1.06,1.07,1.08,1.09,1.1},
ytick={1.000,1.005,1.010,1.015,1.020,1.025,1.030,1.035,1.040,1.045,1.050},
yticklabels={1.000,1.005,1.010,1.015,1.020,1.025,1.030,1.035,1.040,1.045,1.050},
width=2.25\figurewidth,
height=1.1\figurewidth,
tick align=outside,
tick pos=left,
xmajorgrids,
minor tick num=1,
x grid style={lightgray!92.026143790849673!black},
ymajorgrids,
y grid style={lightgray!92.026143790849673!black},
grid=both,
%legend entries={{1 responder},{32 responders},{64 responders},{96 responders},{128 responders}},
%legend cell align={left},
%legend style={at={(0.03,0.97)}, anchor=north west, draw=white!80.0!black, nodes={scale=0.618, transform shape}}
]
%\addlegendimage{line width=1pt, color0}
%\addlegendimage{line width=1pt, color0, dotted}
%\addlegendimage{line width=1pt, color0, dash pattern=on 1pt off 3pt on 3pt off 3pt}
%\addlegendimage{line width=1pt, color0, dashed}
%\addlegendimage{line width=2pt, color0, dashed}
\addplot [line width=1pt, densely dotted, curvecolor]
table {%
1.01 1.0049392989948984
1.02 1.0099439666205694
1.03 1.0149141185283068
1.04 1.0198063939968378
1.05 1.0246701286255333
1.06 1.029675359348177
1.07 1.034286156069867
1.08 1.0393225684924945
1.09 1.0441422344979014
1.1 1.0491113072465186
};
\addplot [line width=1pt, solid, curvecolor]
table {%
1.01 1.0049210473838086
1.02 1.0098616922912456
1.03 1.0150028253769114
1.04 1.0197476947884114
1.05 1.0234881792391946
1.06 1.0300084820854916
1.07 1.0345250702043935
1.08 1.0395829582228546
1.09 1.0442172187849508
1.1 1.049924328340712
};
\addplot [line width=1pt, dash pattern=on 1pt off 3pt on 3pt off 3pt, curvecolor]
table {%
1.01 1.0048645020815459
1.02 1.00990792561299
1.03 1.0145895886573717
1.04 1.0191273054941026
1.05 1.0242753069848978
1.06 1.0284523907292025
1.07 1.0328623746461143
1.08 1.0391064624619293
1.09 1.044629129908443
1.1 1.04809973009054
};
\addplot [line width=1pt, densely dotted, black]
table {%
1.01 1.0035545989439274
1.02 1.0082305763087036
1.03 1.0128972159663123
1.04 1.017723250384823
1.05 1.0224549718448173
1.06 1.0273587446747643
1.07 1.0322510219802992
1.08 1.037035801621319
1.09 1.0417122646388317
1.1 1.0465744982841605
};
%\addplot [scatter, only marks, mark=*, mark options={fill=black}]
\addplot [only marks, mark=*, mark options={fill=black}]
table {%
1.093167701863354 1.021955264756098
};
\addplot [line width=1pt, dash pattern=on 1pt off 3pt on 3pt off 3pt, black]
table {%
1.01 0.9999999999981196
1.02 0.9999999999981196
1.03 0.9999999999981196
1.04 0.9999999999981196
1.05 0.9999999999981196
1.06 0.9999999999981196
1.07 0.9999999999981196
1.08 0.9999999999981196
1.09 0.9999999999981196
1.1 1.0001156157372673
};
\end{axis}
\end{tikzpicture}}
		\end{minipage}
	\end{subfigure}%
	
	\vspace*{0.0em}
	\begin{subfigure}[b]{.43\columnwidth}
		\setlength\figureheight{2in}
		\begin{minipage}[b]{1\textwidth}
			\centering
			\hspace*{13.5em}
			\resizebox{!}{1.4em}{\begin{tikzpicture}
\node at (0,0)[
  scale=1,
  anchor=south,
  text=black,
  rotate=0
]{$\padFactor$};
\end{tikzpicture}}
			\vspace*{-1.0em}
		\end{minipage}
	\end{subfigure}
	\vspace*{-0.75em}
	\caption{Bandwidth increase for the NodeJS dataset.}
	\label{fig:avg-overhead-nodejs}
\end{figure}
%%%%%%%%%%%  Average Overhead graph for NodeJS end here %%%%%%%%%%%%

\section{Discussion}
\label{sec:discussion}

We see primarily two opportunities for improving on our results.
First, perhaps the most significant limitation of our results is that
the privacy metric we optimize measures privacy only for
\textit{independent} object retrievals.  In several contexts, most
notably web browsing, there are dependencies among objects retrieved
due to hyperlinking, a fact that has been used in several previous
works to fingerprint webpages (e.g.,~\cite{miller:2014:clinic, 
cheng:1998:traffic, hintz:2002:fingerprinting}).  
One way to eliminate some of this leakage is by
transcluding \textit{at the server} those objects that the client-side
browser would typically transclude when assembling a webpage (images,
scripts, stylesheets, etc.); the assembled page could then be padded at
the server as a single object.  Some statistical dependencies among
pages would nevertheless remain due to hyperlinks followed manually by
the user, however.

Second, while our performance analysis of \perReqAlg in
\secref{sec:performance} touched on the need to update padding
distributions in response to object (size) changes, we believe this is
a topic that requires further consideration.  Any \objStoreTerm that
supports updates to its objects' sizes will need to recalculate object
padding distributions in response to those updates before serving the
new objects, if it is to have any hope of protecting its clients'
privacy from the attacker we consider.  If objects are updated
frequently, then these updates might therefore need to be batched and
remain hidden from clients until after the \objStoreTerm recalculates
the necessary padding distributions.  Evaluating the best balance
among padding algorithm, batch size, and length of the recalculation
time window for a given type of \objStoreTerm is a topic of future
work.

\section{Conclusion}
\label{sec:conclusion}

Object size is a particularly potent feature for traffic analysis, and
one that encryption does nothing to obscure.  In this paper we
provided algorithms for computing padding schemes suitable for various
scenarios, in which the \objStoreTerm responds to every request with
the same (padded) object copy; in which the \objStoreTerm pads each
object anew before serving it; and in which the \objStoreTerm has no
knowledge of (or little confidence in its knowledge of) the
distribution of object requests it will receive.  In each case we
provided an algorithm for constructing a padding scheme
\distortPubDataAlg{\cdot}, subject to a padding overhead constraint,
that minimizes the information gain
\mutInfo{\privDataRV}{\distortPubDataRV} about the object identity
\privDataRV based on the padded size \distortPubDataRV of the object
returned or, in the last case, an upper bound
\mutInfoInf{\privDataRV}{\distortPubDataRV} on the information gain
that is robust to any query distribution.  Our empirical analysis
using datasets of NodeJS packages and of Unsplash Lite photos
suggested that our algorithms provide better privacy than competitors
from recent literature, and do so efficiently.

\balance
\bibliographystyle{IEEEtranS}
\bibliography{full,main}

\end{document}